\let\c@proposition\c@theorem
\let\c@corollary\c@theorem
\let\c@lemma\c@theorem
\let\c@definition\c@theorem
\let\c@example\c@theorem
\spnewtheorem{questionW}{Question}{\it}{}
\spnewtheorem{resultW}{Result}{\bfseries}{\itshape}
\let\orgautoref\autoref          	
\providecommand{\Autoref}[1]{
    \def\equationautorefname{Equation}
    \def\figureautorefname{Figure}%
	\def\subfigureautorefname{Figure}%
    \def\lemmaautorefname{Lemma}%
    \def\conjectureautorefname{Conjecture}%
    \def\remarkautorefname{Remark}%
    \def\propositionautorefname{Proposition}%
    \def\corollaryautorefname{Corollary}%
    \def\definitionautorefname{Definition}%
    \def\sectionautorefname{Section}%
    \def\subsectionautorefname{Section}%
    \def\subsubsectionautorefname{Section}%
    \def\exampleautorefname{Example}%
    \def\lessonautorefname{Result}%
    \def\resultautorefname{Result}%
    \orgautoref{#1}%
}
\renewcommand{\autoref}[1]{
    \def\figureautorefname{Fig.\!}%
    \def\subfigureautorefname{Fig.\!}
    \def\lemmaautorefname{Lemma}%
    \def\conjectureautorefname{Conjecture}%
    \def\remarkautorefname{Remark}%
    \def\propositionautorefname{Prop.}%
    \def\algorithmautorefname{Algorithm}%
    \def\sectionautorefname{Section}%
    \def\subsectionautorefname{Section}%
    \def\lessonautorefname{Result}%
    \def\resultautorefname{Result}%
    \orgautoref{#1}%
}
\newcommand{\specificref}[2]{\hyperref[#2]{#1~\ref*{#2}}}	
\newcommand{\specialref}[3]{\hyperref[#2]{#1~\ref*{#2}~#3}}	
\newcommand{\figref}[2]{\hyperref[#1]{Fig.~\ref*{#1}#2}}	
\newcommand{\Figref}[2]{\hyperref[#1]{Figure~\ref*{#1}#2}}
\renewcommand{\(}{\left(}                   
\renewcommand{\)}{\right)}
\newcommand{\PP}[1]{{{\mathbb{P}}} [ #1 ]}       
\newcommand{\PPP}{{{\mathbb{P}}}}       
\newcommand{\ADom}{{\mathit{ADom}}}
\newcommand{\R}{\mathbb{R}}             
\renewcommand{\epsilon}{\varepsilon}    
\renewcommand{\[}{[\![}
\newcommand{\smallsection}[1]{\vspace{5mm}\noindent\textit{#1.}}	
\newcommand{\introparagraph}[1]{\textit{#1.}}        
\definecolor{gray}{rgb}{0.5,0.5,0.5}
\definecolor{niceblue}{rgb}{.8,.85,1}
\newcommand{\graycell}{\cellcolor[gray]{0.9}}
\newcommand{\sql}[1]{\textup{\textsf{\small#1}}}
\newcommand{\datarule}{{\,:\!\!-\,}}			
\newcommand{\Var}{\textup{\texttt{Var}}}		
\newcommand{\HVar}{\textup{\texttt{HVar}}} 		
\newcommand{\JVar}{\textup{\texttt{JVar}}}
\newcommand{\SepVar}{\textup{\texttt{SepVar}}}  	
\newcommand{\SepPVar}{\textup{\texttt{PSepVar}}}
\newcommand{\EVar}{\textup{\texttt{EVar}}}  	
\newcommand{\Lineage}{\textup{\texttt{Lin}}}  	
\newcommand{\score}{\textit{score}}  			
\newcommand{\MinCuts}{\textup{\texttt{MinCuts}}}  
\newcommand{\MinPCuts}{\textup{\texttt{MinPCuts}}}
\newcommand{\PDMS}{\textsc{PDB}\xspace}
\newcommand{\PDMSs}{\textsc{PDB}s\xspace}
\newcommand{\at}{\textup{\texttt{at}}}  		
\newcommand{\joinp}[2]{\Join^p\!\! \big[#2\big]} 
\newcommand{\joind}[2]{\Join_{#1}\!\! \big[#2\big]}			
\newcommand{\projp}[1]{\pi^p_{\!-\!#1}}						
\newcommand{\projpd}[1]{\pi^p_{#1}}						 	
\newcommand{\projd}[1]{\pi_{\!-\!#1}}						
\newcommand{\minp}[1]{\min \! \left[ #1 \right]}
\newcommand{\avg}{\textrm{avg}}			
\renewcommand{\vec}[1]{\boldsymbol{\mathbf{#1}}}
\newcommand{\dotcirc}{\mathop{\ThisStyle{
  \ensurestackMath{\stackinset{c}{0\LMpt}{c}{0\LMpt}{\SavedStyle\cdot}{\SavedStyle\circ}}}}}
\newcommand\xqed[1]{%
  \leavevmode\unskip\penalty9999 \hbox{}\nobreak\hfill
  \quad\hbox{#1}}
\newcommand\markend{\xqed{$\blacksquare$}} 		
\newif\ifqed 
\def\qedhere{\tag*{\hbox{\rlap{$\sqcap$}$\sqcup$}}\global\qedfalse}
    \def\url@leostyle{
        \@ifundefined{selectfont}{\def\UrlFont{\sf}}{\def\UrlFont{\scalefont{0.9}\ttfamily}}}
\newcommand{\set}[1]{\{#1\}}
\newcommand{\makeop}[2]                         
  {\ifx#2.\def\next##1{}\else\escapechar=-1     
  \def\next##1{\escapechar=92\def#2{#1}}        
  \expandafter\next\expandafter{\string#2}      
  \let\next\makeop\fi\next{#1}}                 
\def \var(#1){{\bf #1}}
\newcommand{\silentreminder}[1]{}
\newcommand{\eat}[1]{}
\newcommand{\define}{\coloneqq}
\newcommand{\bibpath}{propagation} 		
\journalname{VLDBJ}
\begin{document}

\title{Dissociation and Propagation for Approximate Lifted Inference with 
Standard Relational Database Management Systems}

\author{Wolfgang Gatterbauer \and Dan Suciu}

\institute{
	Wolfgang Gatterbauer
	\at
	Tepper School of Business\\
	Carnegie Mellon University\\
	\email{gatt@cmu.edu}           
\and
	Dan Suciu
	\at
	Department of Computer Science \& Engineering\\
	University of Washington
}

\date{}	

\maketitle

\setlength{\emergencystretch}{5pt}		

\begin{abstract}
Probabilistic inference over large data sets is a challenging data management problem since exact inference is generally \#P-hard and is most often solved approximately with sampling-based methods today.
This paper proposes an alternative approach for approximate evaluation of conjunctive queries with standard relational databases: In our approach, every query is evaluated entirely in the database engine by \emph{evaluating a fixed number of query plans}, each providing an upper bound on the true probability, then taking their minimum. 
We provide an algorithm that takes into account important schema information to enumerate only the minimal necessary plans among all possible plans.
Importantly, this algorithm 
is a \emph{strict generalization of all known PTIME self-join-free conjunctive queries}: A query is in PTIME if and only if our algorithm returns one single plan.
Furthermore, our approach is a generalization of a family of efficient ranking methods from graphs to hypergraphs.
We also adapt three relational query optimization techniques to evaluate all necessary plans 
very fast.
We give a detailed experimental evaluation of our approach and, in the process,
provide a new way of thinking about the value of probabilistic methods over non-probabilistic methods for ranking query answers.
We also note that the techniques developed in this paper apply immediately to \emph{lifted inference} from statistical relational models since lifted inference corresponds to PTIME plans in probabilistic databases.

\keywords{Probabilistic inference 
\and Lifted inference 
\and Probabilistic databases
\and Problem relaxation
\and Ranking  
\and Query plans
\and Query optimization
}
\end{abstract}

\newpage
\section{Introduction}\label{sec:intro}

Probabilistic inference over large data sets is becoming a central
data management problem. Recent large knowledge bases, such as
Yago~\cite{DBLP:journals/ai/HoffartSBW13},
Nell~\cite{DBLP:conf/aaai/CarlsonBKSHM10}, DeepDive~\cite{deepdive},
or Google's Knowledge Vault~\cite{knoweldge-vault-kdd-2014}, have
millions to billions of uncertain tuples.  Data sets with missing
values are often ``completed'' using inference in graphical
models~\cite{DBLP:conf/sigmod/ChenW14,DBLP:journals/jiis/RaghunathanDK14,DBLP:conf/icde/StoyanovichDMT11}
or sophisticated low rank matrix factorization
techniques~\cite{bouchard-uai2014,DBLP:conf/kdd/SinghG08} that
ultimately result in a large probabilistic database. 
Data sets that result from crowdsourcing~\cite{DBLP:conf/dasfaa/AmarilliAM14}
or that are inferred from unstructured information~\cite{DBLP:journals/tois/Cohen00}
are also
uncertain,
and
probabilistic databases have been applied to
bootstrapping over samples of data~\cite{DBLP:conf/sigmod/ZengGMZ14}.

However, probabilistic inference is known to be \#P-hard in the size of the
database, even for some very simple
queries~\cite{DBLP:journals/vldb/DalviS07}.  Today's state of the art
inference engines use either sampling-based methods or are based on
some variant of the DPLL algorithm for Weighted Model Counting~\cite{DBLP:journals/jacm/DavisP60}.  
For example,
Tuffy~\cite{DBLP:journals/pvldb/NiuRDS11}, a popular implementation of
Markov Logic Networks (MLN) over relational databases, uses Markov Chain Monte Carlo methods (MCMC).
Gibbs sampling can be significantly improved by adapting some
classical relational optimization
techniques~\cite{DBLP:conf/sigmod/ZhangR13}.  For another example,
MayBMS~\cite{DBLP:conf/icde/AntovaKO07a} and its successor
Sprout~\cite{OlteanuHK2010:ICDE}
use query plans to guide a
DPLL-based algorithm for Weigh\-ted Model
Counting~\cite{DBLP:series/faia/GomesSS09}.  While both approaches
deploy some advanced relational optimization techniques, at their core
they are based on general purpose probabilistic inference techniques,
which either run in exponential time (DPLL-based algorithms have been
proven recently to take exponential time even for queries computable
in polynomial time~\cite{DBLP:conf/icdt/BeameLRS14}), 
or require many
iterations until convergence.

In this paper, we propose a different approach to query evaluation with
probabilistic databases (PDBs).  In our approach, \emph{every query is
evaluated entirely in the database engine}.  Probability computation is
done at query time, using simple arithmetic operations and aggregates.
Thus, probabilistic inference is entirely reduced to a standard query
evaluation problem with aggregates. There are no iterations and no
exponential blowups. All benefits of relational engines (such as
cost-based optimizations, multi-core query processing, shared-nothing
parallelization) are directly available to queries over
probabilistic databases.  

To achieve this, we compute approximate rather than exact probabilities, with a one-sided
guarantee: The probabilities are guaranteed to be upper bounds to the
true probabilities, which we show is \emph{sufficient to rank the top query
answers with high precision}.
Our approach consists of approximating the true \#P-hard query probability by
evaluating a fixed number of PTIME queries (the number depends on the query), each providing an upper
bound on the true probability, then taking their minimum.  
Another way to put this is that we replace the standard semantics based on
\emph{reliability}, with a related but much more efficient semantics based on
\emph{propagation}, and which is guaranteed to be an upper bound on reliability. We explain this alternative semantics next.

The semantics of a query over a \PDMS is based on the possible world semantics, which is equivalent to 
``\emph{query reliability}''~\cite{DBLP:conf/pods/GradelGH98}.
Among its roots are {\em network reliability} \cite{Colbourn:1987fk} which is defined as
the probability that a source node $s$ remains connected to a target node $t$
in a directed graph if edges fail independently with known probabilities. However, computing network reliability is \#P-hard. Hence, many applications where an exact probabilistic semantics is not critical (especially for ranking alternative answers)  have replaced network reliability with another semantics based on a ``\emph{propagation scheme}.'' We illustrate with an example.

\begin{figure}[t]
    \centering
	\subfloat[]
		{\begin{minipage}[b]{39mm}
		\includegraphics[scale=0.72]{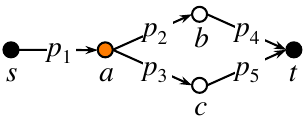}
		\label{Fig_IntroductionExample_a}
	\end{minipage}}
	\hspace{0mm}
	\subfloat[]
		{{\begin{minipage}[b]{43mm}
		\vspace{-3mm}
		\begin{minipage}[t]{42mm}
		\small
		$\quad\quad q \datarule R(s, x), S(x,y), T(y,t)$ \\[-2mm]
		
		\hspace{7mm}
		\setlength{\tabcolsep}{0.4mm}
			\mbox{
					\begin{tabular}[t]{ >{$}c<{$} | >{$}c<{$} >{$}c<{$} }
		 			R	& C		& A	\\
					\hline
					p_1	& s		& a			
					\end{tabular}			
			}
			\hspace{-2mm}
			\mbox{
					\begin{tabular}[t]{ >{$}c<{$} | >{$}c<{$} >{$}c<{$} >{$}c<{$} >{$}c<{$}}
		 			S	& A		& B\\
					\hline
					p_2	& a		& b		\\
					p_3	& a		& c		
					\end{tabular}
			}
			\hspace{-2mm}
			\mbox{
					\begin{tabular}[t]{ >{$}c<{$} | >{$}c<{$} >{$}c<{$}}
		 			T	& B		& C	\\
					\hline
					p_4	& b		& t\\	
					p_5	& c		& t
					\end{tabular}			
			}
		\end{minipage}	
		\vspace{-1mm}
		\end{minipage}	
		}\label{Fig_IntroductionExample_a_table}}
	\hspace{0mm}
	\subfloat[]
		{\begin{minipage}[b]{39mm}
		\includegraphics[scale=0.72]{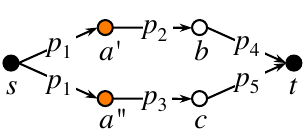}
		\label{Fig_IntroductionExample_b}
		\end{minipage}}
	\hspace{0mm}
	\subfloat[]{
			{\begin{minipage}[b]{42.0mm}
			\vspace{-3mm}
			\begin{minipage}[t]{41.9mm}
			\small
			\hspace{2mm}
			$q^{\Delta_1} \datarule R^y(s, x, y), S(x,y), T(y,t)$ \\[-1.8mm]

			\hspace{4mm}
			\setlength{\tabcolsep}{0.4mm}
				\mbox{
						\begin{tabular}[b]{ >{$}c<{$} | >{$}c<{$}	>{$}c<{$}	>{$}c<{$}   }
			 			R^y	& C		& A		& B		\\
						\hline
						p_1	& s		& a		& b	\\
						p_1	& s		& a		& c					
						\end{tabular}			
				}
				\hspace{-2mm}
				\mbox{
						\begin{tabular}[b]{ >{$}c<{$} | >{$}c<{$} >{$}c<{$} >{$}c<{$} >{$}c<{$}}
			 			S	& A		& B\\
						\hline
						p_2	& a		& b		\\
						p_3	& a		& c		
						\end{tabular}
				}
				\hspace{-2mm}
				\mbox{
					\begin{tabular}[b]{ >{$}c<{$} | >{$}c<{$} >{$}c<{$}}
					T	& B		& C	\\
					\hline
					p_4	& b		& t\\	
					p_5	& c		& t
					\end{tabular}			
				}
			\end{minipage}
			\vspace{-1mm}
			\end{minipage}
			}\label{Fig_IntroductionExample_b_table}}		
	\caption{\specificref{Example}{ex:1}. The \emph{propagation score} $\rho(t)$ in graph (a) corresponds to the \emph{reliability score} $r(t)$ in graph (c) with node $a$ dissociated into two. 
	(b,d): Corresponding chain queries with respective databases.}\label{Fig_IntroductionExample}
\end{figure}

\begin{example}[Propagation in $k$-partite digraphs]\label{ex:1}
Consider the 4-partite graph in \autoref{Fig_IntroductionExample_a}.
Intuitively, let's call a node $x$ ``active'' if there exists a directed path from the source node to $x$.  
Then, the ``\emph{reliability score}'' $r(x)$ of a node $x$  is the probability that $x$ is active if every edge $e$ is included in the graph independently with probability $p_e$. 
The score of interest is the reliability of a target node $t$:
$r(t) 
= p_1 (p_2 p_4 \otimes p_3 p_5)
= p_1 (1\!-\!(1\!-\!p_2 p_4)(1\!-\!p_3 p_5))$
where ``$\otimes$'' stands for the ``\emph{independent-or}'' 
in infix or prefix notation,
which combines probabilities as if calculating the disjunction between independent events:
$\bigotimes_{i} p_i \define
 1 - \prod_{i} (1-p_i)$.
While reliability can be computed efficiently for series-parallel
graphs as the one in \autoref{Fig_IntroductionExample_a}, it is 
\#P-hard in general, even on 4-partite networks~\cite{Colbourn:1987fk}.  The probability of a
query over a \PDMS corresponds precisely to network reliability.  
For
example, in the case of a 4-partite graph, reliability is
given by the probability of the 3-chain query $q \datarule
R(s,x),S(x,y),T(y,t)$ over the \PDMS shown in
\autoref{Fig_IntroductionExample_a_table} 
(here $s$ and $t$ stand for constants).
Notice that the reliability of a node is a combinatorial or ``\emph{global property}'' of the entire graph:
it is defined as a weighted average over all possible worlds and can generally not be calculated easily.

In contrast, the {\em propagation score} $\rho(x)$ of a node $x$ is a value that recursively depends on the scores of its neighbors and the probabilities of the connecting edges: 
\begin{equation}\label{eq:dissociationFormula}
	\rho(x) \leftarrow \bigotimes_{e} p_{e} \cdot \rho(u_e)
\end{equation}
where $e$ ranges over all incoming edges $(u_e,x)$.
By
  definition, $\rho(s) = 1$.  
  In \autoref{Fig_IntroductionExample_a}, the propagation score of
  the target node $t$ is
$\rho(t) 
= p_4 \rho(b) \otimes p_5 \rho(c) 
= p_1 p_2 p_4 \otimes p_1 p_3 p_5 = 1 \!- (1 \!- p_1 p_2 p_4)(1 \!- p_1 p_3 p_5)$. 
Notice that the propagation score of a node is a recursive or ``\emph{local property}'' 
since it can be calculated from the scores of its neighbors.
\markend
\end{example}

\vspace{-1mm}

With ``propagation'', we refer to a family of techniques for calculating the relative importance of nodes in networks with \emph{iterative models of computation}: 
``relevance'' is \emph{propagated} across edges from node to node while ignoring past dependencies (see~\autoref{Fig_Propagation_PageRank}).
Thus, unlike reliability, propagation scores can always be computed efficiently, even on
very large graphs. 
Variants of propagation have been successfully used in a range of
applications for calculating relevance where exact probabilities are not
necessary. Examples include similarity ranking of proteins~\cite{Weston:2004fk},
integrating and ranking uncertain scientific data~\cite{DetwilerGLST2009:ICDE},
models of human comprehension~\cite{Quillian:68a},
activation in feedforward networks~\cite{Rumelhart:1986:LIR:104279.104293},
search in associative networks~\cite{DBLP:journals/air/Crestani97},
trust propagation~\cite{DBLP:conf/www/GuhaKRT04} and 
influence propagation~\cite{DBLP:conf/wsdm/GoyalBL10} in social networks,
keyword search in databases~\cite{DBLP:conf/icde/BhalotiaHNCS02}, 
the noisy-or gate \cite[Sect.~4.3.2]{Pearl:1988pb},
computing web page reputation with PageRank~\cite{DBLP:journals/cn/BrinP98},
belief propagation in graphical models~\cite{Pearl:1988pb},
linearized belief propagation for node labeling~\cite{DBLP:journals/pvldb/GatterbauerGKF15},
or finding true facts from a large amount of conflicting information~\cite{DBLP:journals/tkde/YinHY08}.\footnote{Also see \cite{PasternackR2010:FactFinder} for a related discussion of fact finding algorithms, in which the approach of \cite{DBLP:journals/tkde/YinHY08} and its use of the iterative propagation  \autoref{eq:dissociationFormula} is referred to as ``\emph{pseudoprobabilistic}''.}
Note that the resulting relevance scores commonly do not have an exact probabilistic semantics, 
and may be used as a heuristics instead. 
For example, the {PageRank} of a web page does not have to be smaller than $1$ 
(see \autoref{Fig_Propagation_PageRank} for a comparison of the update equations).
However, these variants have in common that the score of a node is recursively defined
only \emph {in terms of the scores of its neighbors}, and not in terms of
the entire topology of the graph.

\begin{figure}[t]
\centering
\includegraphics[scale=0.41]{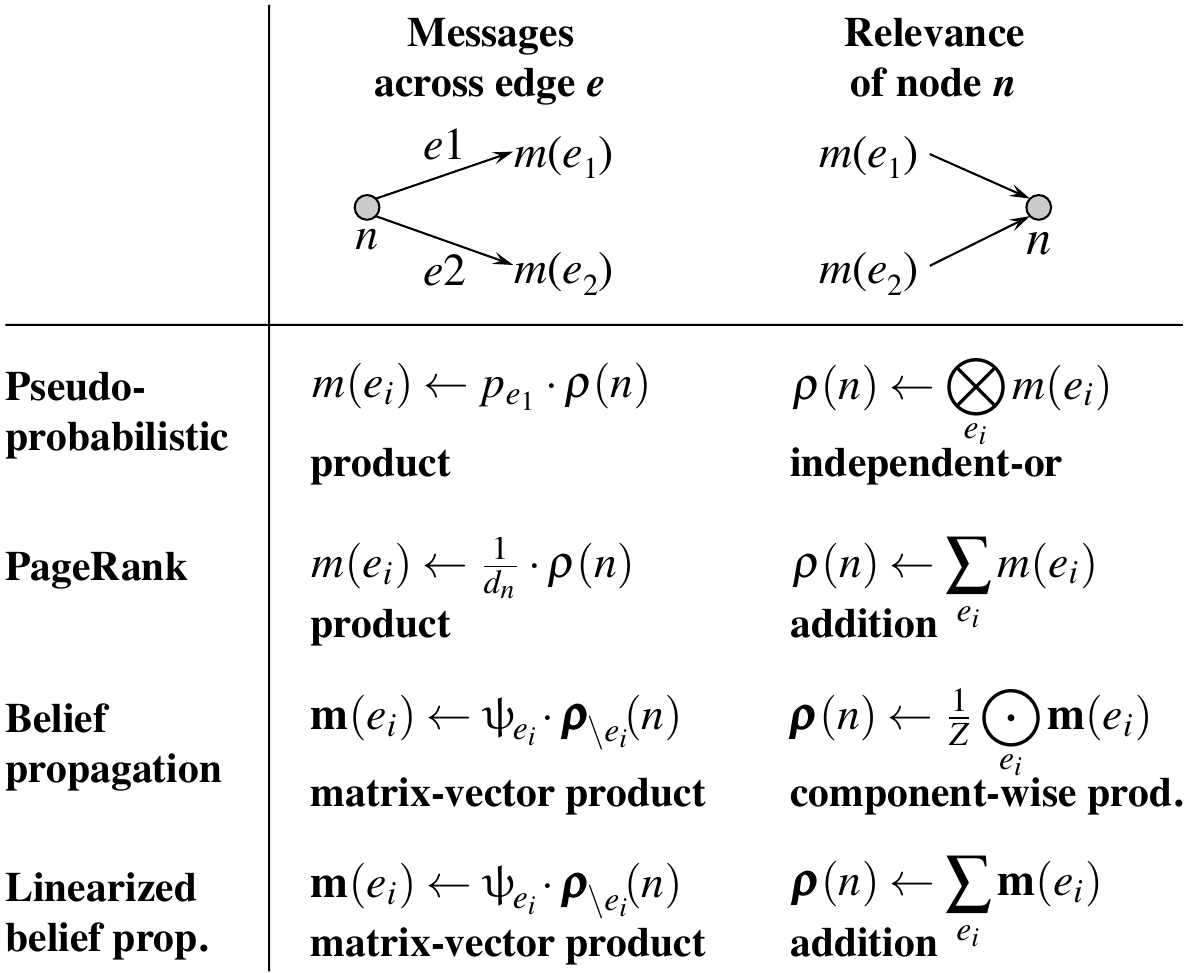}
\caption{``\emph{Relevance propagation}'' in graphs works by iteratively calculating messages $m(e)$ across edges $e$ and relevance scores $\rho(n)$ of nodes $n$. 	
The propagation method we consider is \emph{pseudoprobabilistic} in that the two operators are ``\emph{independent-and}'' or product ($\cdot$), and ``\emph{independent-or}'' ($\otimes$).
PageRank and related methods from semi-supervised learning 
replace the probability $p_e$ of an edge with a weight (here $d_n$ stands for the out-degree of node $n$) and the independent-or with addition or sum ($\sum$).
Belief Propagation propagates not just one message across an edge but a vector $\vec m(e)$ of messages, 
scales this message vector with a matrix $\bm{\uppsi}_e$ (also called ``edge potential''),
and replaces the independent-or with a component-wise product ($\odot$), followed by a normalization 
(here $Z$ stands for a normalizer).
Linearized Belief Propagation uses again addition as second operator and requires no normalization.
Intuitively, the method developed in this paper generalizes pseudoprobabilistic relevance propagation to hypergraphs.
}\label{Fig_Propagation_PageRank}
\end{figure}

While \autoref{ex:1} shows how the propagation score can be defined on graphs, queries are not represented by graphs but \emph{hypergraphs}, in general. To the best of our knowledge, no definition of a propagation score on hypergraphs exists, and it is not obvious how to define such a score.
Also, the propagation score between two nodes depends on the directionality of the
graph, which can be best illustrated with our example of $k$-partite
graphs: In \autoref{Fig_IntroductionExample_a} the propagation score
from $s$ to $t$ is different from the one from $t$ to $s$ (in fact, the
latter coincides with the reliability score).  It is not immediately clear what this
directionality corresponds to for a relational query whose lineage defines a hypergraph.

With this paper, we introduce a propagation score for queries over
\PDMSs, describe the connection to the reliability score, 
and give a method to efficiently compute the propagation score for \emph{any self-join-free conjunctive query}
with a standard relational database engine.
While the propagation score differs from the
reliability score, we prove several properties showing that it is a
reasonable substitute: 
(\textit{i})~propagation and reliability are guaranteed to coincide for all known PTIME queries: our score are thus \emph{strict generalization} of efficient evaluation methods from PTIME to \#P-hard queries;
(\textit{ii}) propagation is in PTIME and can be evaluated with a \emph{standard relational DBMS} without any changes to the underlying relational query engine;
(\textit{iii}) propagation is inspired by the above listed number of successful ranking schemes on graphs: yet our score extends the underlying idea of propagation on graphs to \emph{propagation on hypergraphs};
(\textit{iv}) the propagation score is always an upper bound to the reliability score: it can thus be applied as efficient filter; and
(\textit{v}) the ranking given by the propagation score is very close to the ranking given by the reliability score in our experimental validation.

\begin{example}[\autoref{ex:1} cont.]	
We have seen that the propagation score differs from the reliability
score on the DAG (Directed Acyclic Graph) in \autoref{Fig_IntroductionExample_a}.  By
inspecting the expressions of the two scores, one can see that they
differ in the way they treat $p_1$: reliability treats it as a single
event, while propagation treats it as two independent events.  In fact, the propagation
score is precisely the reliability score of the DAG in
\autoref{Fig_IntroductionExample_b}, which has two copies of $p_1$.  We call this DAG the
``\emph{dissociation}'' of the DAG in
\autoref{Fig_IntroductionExample_a}.  At the level of the database, 
dissociation can be obtained by adding a new attribute $B$
to the first relation $R$ (\autoref{Fig_IntroductionExample_b_table}).
The dissociated query is $q^{\Delta_1} \!\datarule \! R^y(s,x, y), S(x,y),T(y,t)$, 
where the exponent $^y$ in $R^y$ indicates the new attribute, and
its probability is indeed the same as the
propagation score for the graph in
\autoref{Fig_IntroductionExample_a}.  The important observation here
is that, while the evaluation problem for $q$ is \#P-hard in general, 
the query $q^{\Delta_1}$ is ``hierarchical''~\cite{DBLP:journals/vldb/DalviS07} and
can therefore be computed efficiently.
A query $q$ usually has more than one dissociation: $q$ has a second
dissociation $q^{\Delta_2} \,\,\datarule\, R(s,x),S(x,y), T^x(x,y,t)$ obtained by adding
the attribute $A$ to $T$ (not shown in the figure).  Its probability
corresponds to the propagation score from $t$ to $s$, i.e.\ from
right to left.  And $q^{\Delta_3} \datarule
R^y(s,x, y),S(x,y), T^x( x,y,t)$ is a third dissociation. We prove that each dissociation
step can only increase the probability (e.g., $r(q) \leq r(q^{\Delta_1}) \leq
r(q^{\Delta_3})$).
 We define the
propagation score of $q$ as the smallest probability of these three
dissociations.  The database system has to compute $r(q^{\Delta_1})$ and
$r(q^{\Delta_2})$ and return the smallest score: on the graph in
\autoref{Fig_IntroductionExample_a}, this is $r(q^{\Delta_2})$, since $r(q) =
r(q^{\Delta_2})$.
\markend
\end{example}

\introparagraph{Contributions and outline} 
(1) We derive ``\emph{query dissociation}'' as a generalization of relevance propagation from graphs to hypergraphs and
define the propagation score for
any self-join-free conjunctive query in terms of 
dissociations~(\autoref{sec:dissociation}).  A
query dissociation is a rewriting of both the data and the query.  On the
data, a dissociation is obtained by making multiple, independent
copies of some of the tuples in the database. Technically, this is
achieved by extending the relational schema with additional
attributes.  On a query, a dissociation extends atoms with
additional variables.  We prove that a dissociation can only increase the probability of a query,
and define \emph{the propagation score of a query as the minimum
reliability of all dissociated queries that are ``hierarchical''}.
This is justified by the fact that, in a $k$-partite graph, the propagation score is precisely the probability of one dissociated hierarchical query.
Thus, in our definition, choosing a direction for the network in order to
define the propagation score corresponds to choosing a particular dissociation
that makes the query hierarchical.

(2) 
We show how the propagation score can be evaluated with the help of a query-dependent number of query plans (\autoref{sec:plans}).
We achieve this by establishing a one-to-one correspondence between hierarchical dissociations and traditional query plans and 
showing that \emph{every query plan computes a probability that is an upper bound of query reliability}.
Moreover, we describe a 
natural partial order on the probabilities of query plans
Thus \emph{every self-join-free conjunctive query can be approximated by a fixed number of query plans}, 
and 
it suffices to iterate over all minimal plans, compute their probabilities, then take the minimum.
We give an intuitive 
system R-style algorithm~\cite{DBLP:conf/sigmod/SelingerACLP79} that enumerates all minimal plans for a given query $q$.

(3) 
We generalize the algorithm to take into consideration schema knowledge on deterministic relations and functional dependencies~(\autoref{sec:optimizationsWithSchema}).
In particular, we give a \emph{unified treatment and generalization of all previously known PTIME self-join free conjunctive queries}, i.e.\ those that can be evaluated with a query plan in polynomial time in the size of the database,
and show that our approach naturally generalizes all known PTIME queries: 
for every query that is PTIME (whether due to 
key constraints, or the presence of deterministic tables), reliability and propagation scores always coincide; for every query that is \#P-hard, our approach still returns a unique, well-defined score in polynomial time. 
 
(4) We give a set of targeted \emph{multi-query optimization techniques} that considerably speed up the time needed to evaluate the propagation score~(\autoref{sec:optimizations}). 
Evaluating some queries may require a large number of plans (e.g., an
8-chain query requires 429 plans).
Evaluating all plans sequentially 
would still be
prohibitively expensive.  
Instead, we tailor three relational query optimization techniques to dissociation:
($i$) combining all minimal plans into \emph{one single query}, 
($ii$) reusing \emph{common subexpressions} with views, and 
($iii$) performing \emph{deterministic semi-join reductions}.
 
(5)
We conduct a set of very extensive experiments in which we compare the quality of ranking and scalability of various alternative methods (probabilistic and not) against exact probabilistic inference on TPC-H data~\cite{tpc-h}. 
We devise a setup that measures the additional benefit of probabilistic inference for ranking over alternative methods, showing that our technique has high precision for ranking query answers based on their output
probabilities. 
We also show that, with all our optimizations enabled, computing hard queries over probabilistic databases incurs only a modest penalty over
computing the same query on a deterministic database: For example, the
8-chain query (with 429 query plans) runs only a factor of $<10$ slower than on a deterministic
database.

\introparagraph{Prior publications}
In recent work~\cite{DBLP:journals/tods/GatterbauerS14}, we apply the idea of dissociation to both upper \emph{and} lower bound  the probability of Boolean functions, but discuss the connection to query evaluation only in passing.
Parts of \specificref{Section}{sec:dissociation} and \autoref{sec:plans} are based on a workshop paper~\cite{GatterbauerJS2010:MUD}. The remainder is based upon \cite{DBLP:journals/pvldb/GatterbauerS15}. We added the connection to propagation on graphs, more detailed experiments, slightly changed the formalisms, and included extensive illustrating examples throughout.
Due to space restrictions, some of our proofs had to be included in an online appendix on ArXiv~\cite{arxivDissociation:2013}.

\section{Technical background}\label{sec:background}

\subsection{Probabilistic databases and self-join-free conjunctive queries}\label{sec:backgroundPDBs}
A \emph{tuple-independent probabilistic database} (TI-PDB) is a database $D$ 
plus a function $p(t) \in [0,1]$ 
associating an independent probability to
each tuple $t \in D$.  
We fix a relational vocabulary $\sigma = (R_1, \ldots, R_m)$ and denote with $D$ the database, i.e.\ the collection of tuples and their probabilities. 
A \emph{possible world} is then a subset of $D$ generated by independently
including each tuple $t$ in the world with probability $p(t)$.
We use bold notation (e.g., $\vec x$) to denote both sets or tuples.
A \emph{self-join-free
  conjunctive query} (sj-free CQ) is a first-order formula 
  $q(\vec z) = \exists
x_1\ldots \exists x_k.(a_1 \wedge \ldots \wedge a_m)$ where each atom $a_i$
represents a relation $R_i(\vec x_i)$, the
variables $x_1, \ldots, x_k$ are called {\em existential variables},
and $\vec z$ are called the \emph{head variables} (or free
  variables).\footnote{W.l.o.g.\ we assume $\vec x_i$ to be a tuple of only variables and don't write the constants.
 Selections can always be directly pushed into the database before executing the query.}

The term ``self-join-free'' means that the atoms refer
to distinct relational symbols. We assume therefore w.l.o.g.\ that
every relational symbol $R_1, \ldots, R_m$ occurs exactly once in the
query.  Unless otherwise stated, a ``query'' in this paper always denotes
a sj-free CQ.
As usual, we abbreviate a query
by $q(\vec z) \datarule a_1, \ldots, a_m$,
and write $\HVar(q) = \vec z$, $\EVar(q) = \set{x_1, \ldots, x_k}$
and $\Var(q) = \HVar(q) \cup \EVar(q)$ for the set of head variables,
existential variables, and all variables of $q$. If $\HVar(q) =
\emptyset$, then $q$ is called a \emph{Boolean} query and $\EVar(q) = \Var(q)$.  We also write
$\Var(a_i)$ for the variables in atom $a_i$ and $\at(x_j)$ for the set of atoms that
  contain variable $x_j$. 
The \emph{active domain} of a variable $x_j$ is denoted $\ADom_{x_j}$,\footnote{Defined formally as
  $\ADom_{x_j} = \bigcup_{i: x_j \in \Var(R_i)} \pi_{x_j}(R_i)$.}
 and the active domain of the entire database is $\ADom =
\bigcup_j \ADom_{x_j}$.  The focus of probabilistic query evaluation is
to compute $\PP{q}$, i.e.\ the probability that the query is true
in a randomly chosen world. We will refer to this probability as the ``\emph{query reliability}''
$r(q)$~\cite{DBLP:conf/pods/GradelGH98}.

It is known that the data complexity~\cite{DBLP:conf/stoc/Vardi82} of
any query $q$ is either in PTIME or \#P-hard~\cite{DBLP:journals/jacm/DalviS12}.
The PTIME queries are also called ``\emph{safe queries}'' and, for the case of sj-free CQs, are characterized precisely by a syntactic property called  \emph{hierarchical queries}~\cite{DBLP:journals/vldb/DalviS07}.
We briefly review these results:

\begin{definition}[Hierarchical query]\label{def:hierq}
A query $q$ is called \emph{hierarchical} iff for any 
two existential variables $x, y \in \EVar(q)$,
one of the following three conditions
holds: 
	$\at(x) \subseteq \at(y)$, 
  	$\at(x) \supseteq \at(y)$, or
	$\at(x) \cap \at(y) = \emptyset$.	
\end{definition}

\noindent
For example, the query $q_1\datarule R(x,y), S(y,z), T(y,z,u)$ is
hierarchical, while $q_2 \datarule R(x,y), S(y,z), T(z,u)$ is not as
neither of the three conditions holds for the variables $y$ and $z$.  

\begin{theorem}[Hierarchy dichotomy~\cite{DBLP:journals/vldb/DalviS07}]\label{th:dichotomy}
  If $q$ is hierarchical, then
  $\PP{q}$ can be computed in PTIME in the size of $D$.
  Otherwise, computing $\PP{q}$ is \#P-hard
  in the size of $D$.
\end{theorem}

We next give an equivalent, recursive characterization of hierarchical
queries, for which we need a few definitions.  
We write $\SepVar(q)$ for the set of existential variables that appear in every atom
(called ``\emph{separator variables}'').
A \emph{connected component} of $q$ (or short, ``\emph{query component}'') is a subset of atoms that are connected via existential variables.
A query $q$ is \emph{disconnected} if its atoms can be partitioned into two
non-empty sets that do not share any existential variables 
(e.g., $q \datarule R(x,y),S(z,u),T(u,v)$ is disconnected and has
two query components: ``$R(x,y)$'' and ``$S(z,u),T(u,v)$'').
For every set of
variables $\vec x$, denote $q - {\vec x}$ the query obtained by
removing all variables $\vec x$ and decreasing the arities of the
relational symbols that contain variables from $\vec x$.
Any query can become disconnected by removing a set of variables.

\begin{lemma}[Hierarchical queries] \label{lemma:hierarchical}
A query  $q$ is ``hierarchical'' iff either: (1) $q$ has a single atom; (2) $q$ has $k
  \geq 2$ query components all of which are hierarchical; or (3)
  $q$ has a separator variable $x$, and $q-\{x\}$ is hierarchical.
\end{lemma}

\noindent
Every hierarchical query can be computed in PTIME, but non-hierarchical queries are \#P-hard, in general.\footnote{Non-hierarchical queries can be in PTIME when considering functional dependencies or deterministic tables \cite{DBLP:journals/vldb/DalviS07,DBLP:conf/icde/OlteanuHK09}  (see \autoref{sec:optimizationsWithSchema}).}

\subsection{Probabilistic query plans}
Unless otherwise stated, a ``query plan'' in this paper always denotes a \emph{probabilistic query plan}.

\begin{definition}[Query plans]\label{def:queryPlans}
  A \emph{query plan} $P$ is given by the grammar
$   P 	::=\, R_i(\vec x)
   \,\,\,|\,\, 	\projpd{\vec x} P
   \,\,\,|\,\,\! 	\joinp{}{P_1, \ldots, P_k} 
$
 where $R_i(\vec x)$ is a relational atom containing the variables
 $\vec x$, $\projpd{\vec x}$ is the 
 \emph{probabilistic project operator with
 duplicate elimination} (or short ``projection''), 
 and $\joinp{}{\ldots}$ is the \emph{probabilistic natural join} (or short ``join'') in prefix notation, 
 which we  allow to be $k$-ary ($k \geq 2$).  We require that joins and
 projections alternate in a plan and do not distinguish between join
 orders.
\markend
\end{definition}

We write $\Var(P)$ for all variables in a plan $P$
and $\HVar(P)$ for its \emph{head variables},
which are recursively defined as follows:
(1) if $P = R_i(\vec x)$, then $\HVar(P) = \vec x$;
(2) if $P = \projpd{\vec x}(P')$, then $\HVar = \vec x$; and
(3) if $P = \joinp{}{P_1, \ldots, P_k}$,
then $\HVar(P) = \bigcup_{i=1}^k \HVar(P_i)$.
The \emph{existential variables} $\EVar(P)$ are then defined as $\Var(P) - \HVar(P)$.

Every plan $P$ represents a query $q_P$
defined by taking all atoms mentioned in $P$ as the body and setting
$\HVar(q_P) = \HVar(P)$.
A plan is called Boolean if $\HVar(P) = \emptyset$.
We assume the usual sanity conditions on plans to be satisfied: for a
projection $\projpd{\vec x}P$ we assume $\vec x \subseteq
\HVar(P)$, and each variable $y$ is projected away at most once in a
plan, i.e.\ there exists at most one operator $\projpd{\vec x}P$
s.t. $y \in \HVar(P) - \vec x$.
For notational convenience, we also use the ``\emph{project-away operator}''
$\projp{\vec y}P$ instead of $\projpd{\vec x}P$, where
$\vec y$ are the variables being projected away, 
i.e. 
$\vec x = \HVar(\projp{\vec y}P) = \HVar(P) - \vec y$.

Each subplan $P$ returns an
intermediate relation of arity $|\HVar(P)|+1$. 
The extra
\emph{probability attribute} stores a $\score(t)$ for each output tuple
$t \in P(D)$. 
Given a probabilistic database $D$ and a plan $P$, 
$\score(t)$ is defined
inductively on the structure of $P$ as follows: 
(1) If $t \in R_i(\vec
x)$, then $\score(t)= \PP{t}$, i.e.\ its probability in $D$;
(2) if $t \in \,
\joinp{}{P_1(D), \ldots, P_k(D)}$ where $t = \joinp{}{t_1, \ldots,
  t_k}$, then $\score(t) = \prod_{i=1}^k\score(t_i)$; and
(3) if $t \in
\projpd{\vec x}P(D)$, and $t_1, \ldots, t_n \in P(D)$ are all the
tuples that project onto $t$, then 
$\score(t) = \bigotimes_{i=1}^n
\score(t_i)$, where ``$\otimes$'' stands for the independent-or.
In other words, $\score$ computes a probability by
assuming that all tuples joined by $\Join^p$ and
all duplicates eliminated by $\projpd{}$ are \emph{independent}. 
Only if
these conditions hold, then $\score$ is the correct query probability (also called ``query reliability'' \cite{DBLP:conf/pods/GradelGH98}),
but in general it is not. 
Therefore, $\score$ is also called an
\emph{extensional semantics}~\cite{DBLP:journals/tois/FuhrR97,Pearl:1988pb,DBLP:journals/debu/ReDS06} and is, in general,
not equal to the query probability, which is defined in terms of possible worlds: 
$\textit{score}(P) \neq \PP{q_p}$.\footnote{
  \emph{Extensional approaches} compute the probability of any formula
  as a function of the probabilities of its subformulas according to
  syntactic rules, regardless of how those were
  derived. \emph{Intensional approaches} reason in terms of possible
  worlds and keep track of dependencies~\cite{Pearl:1988pb}.}
For a Boolean plan $P$, we get one single score,
which we denote $\textit{score}(P)$.

The requirement that joins and projections alternate is w.l.o.g.\
because nested joins,
such as $\joinp{}{\joinp{}{R_1, R_2}, R_3}$ 
or $\joinp{}{R_1, \joinp{}{R_2, R_3}}$,
can
be rewritten into $\joinp{}{R_1, R_2, R_3}$ while keeping the
same score, 
e.g., $(p_1 p_2) p_3 = p_1 (p_2 p_3)$.  
For the same reason we do not distinguish
between different permutations in the joins,
called join orders~\cite{Moerkotte:BuildingQueryCompilers}. 
We do not focus on query optimization in this paper until \autoref{sec:optimizations}.

\begin{definition}[Safe plan]\label{def:safePlan}
A plan $P$ is called ``\emph{safe}'' iff, for each join  $\joinp{}{P_1, \ldots, P_k}$, 
the head variables of each subplan $P_i$ contain the same existential variables of plan $P$:
$\HVar(P_i) \cap \EVar(P) = \HVar(P_j) \cap \EVar(P), \forall 1 \leq i, j \leq k$.
\end{definition}

The recursive definition of \autoref{lemma:hierarchical} gives us
immediately a safe plan for a hierarchical query.  Conversely, every
safe plan defines a hierarchical query.  
We next illustrate this.

\begin{example}[Hierarchical queries and safe plans]\label{ex:IntroExample}
Consider
$q(x) \datarule  R(x,y), S(x, y,z), T(y,z,u)$, depicted in
\Autoref{Fig_IntroExampleIncidenceMatrix_b}
with its ``\emph{augmented incidence matrix}''.\footnotemark~The \emph{incidence matrix} $I(q)$ of a Boolean sj-free CQ $q$ with $m$ atoms and $k$ variables is a $m \times k$-dimensional 01-matrix with $I(i,j)= 1$ iff $x_j \in \Var(a_i)$.
We ``augment'' it in three ways: 
(1) We replace 1-entries with circles ($\circ$) and ignore 0-entries: 
this merely cosmetic change makes it is easier to recognize patterns;
(2) We separate columns for $\HVar(q)$ to the left and $\EVar(q)$ to the right: 
recall that query components and safety are only determined by $\EVar(q)$. 
For our example we have $\HVar(q) = \{x\}$ and $\EVar(q) = \{y, z, u\}$;
(3)~If the query is hierarchical, then we emphasize the hierarchy between $\EVar(q)$ with gray background. 
For our example we have 
$\at(u)\subseteq \at(z) \subseteq \at(y)$.
\Autoref{Fig_IntroExampleIncidenceMatrix_c}
shows the corresponding safe plan of $q$ where the hierarchy is reflected in the order in which variables are projected away:
first $u$, then $z$, finally the separator variable $y$.
\Autoref{Fig_IntroExampleIncidenceMatrix_d} shows the translation into SQL assuming 
$R(A,B), S(A, B, C), T(B, C, D)$ as schema 
and each table having one additional attribute \sql{P} for the probability of a tuple. 
Here \sql{IOR(X)} is a user-defined aggregate (UDA) that 
calculates the independent-or for the probabilities of grouped tuples, i.e.\
\sql{IOR}$(p_1,\ldots,p_n) = \bigotimes_{i=1}^n p_i$. 
See \cite{DBLP:journals/tods/GatterbauerS14} for the complete UDA definition in PostgreSQL.

Next consider $q'\datarule  R(x,y), S(x, y,z), T(y,z,u)$, i.e.\ a variant of $q$ where
$x \in \EVar(q')$. 
Now $q'$ is not hierarchical anymore since 
$\at(x)\not \subseteq \at(z)$,
$\at(x)\not \supseteq \at(z)$,
and $\at(x) \cap \at(z) \not = \emptyset$.
Starting with $P$ from \autoref{Fig_IntroExampleIncidenceMatrix_c} 
and replacing the final projection $\projp{y}$ with $\projp{x,y}$, 
the plan $P'$ is now unsafe: 
the join 
$\joinp{}{S(x,y,z), \projp{u}T(y,z,u)}$,
has ($i$)
$\HVar(S(x,y,z)) = \{x,y,z\}$,
but 
($ii$)
$\HVar(\projp{u}T(y,z,u)) = \{y,z\}$:
their intersections with $\EVar(P')=\{x,y,z,u\}$ are now different.
\markend
\end{example}
\footnotetext{
\emph{Incidence matrices} allow us to compactly reason about two types of relationships between variables and relations of sf-free CQs simultaneously:
($i$) in a column: a variable that is shared across relations, 
and ($ii$) in a row: relations that are joined by a variable. 
They thus allow us to reason about both the ``query hypergraph'' and the ``dual query hypergraph''
at the same time,
which is helpful also for other types of problems involving sf-free CQs (see, e.g.~\cite{DBLP:journals/pvldb/FreireGIM15}).
}

\begin{figure}[t]
\vspace{-3mm}	
\centering
\renewcommand{\tabcolsep}{0.9mm}
\renewcommand{\arraystretch}{0.95}
\subfloat[Datalog notation]{
	\label{Fig_IntroExampleIncidenceMatrix_a}
	\begin{minipage}[t]{41mm}
		\vspace{6mm}
		$q(x) \datarule R(x,y), S(x,y,z), T(y,z,u)$
		\vspace{2mm}
	\end{minipage}
}
\hspace{0mm}
\subfloat[Augmented incidence matrix]{
	\label{Fig_IntroExampleIncidenceMatrix_b}
	\begin{minipage}[t]{38mm}	
		\hspace{9mm}
	\begin{tabular}[t]{@{\hspace{1pt}} >{$}c<{$}|>{$}c<{$} | >{$}c<{$} >{$}c<{$}
		 								>{$}c<{$} >{$}c<{$} @{\hspace{1pt}}}
			& x    	& y	 			& z 			& u		\\
			\hline
		R	&\circ	&\graycell\circ	 						\\
		S	&\circ	&\graycell\circ &\graycell\circ 		\\
		T	&		&\graycell\circ &\graycell\circ &\graycell\circ
	\end{tabular}
	\end{minipage}
	}
\hspace{0mm}
\subfloat[Unique safe plan]{
	\label{Fig_IntroExampleIncidenceMatrix_c}
	\begin{minipage}[t]{\linewidth}
		\vspace{1mm}
		\centering
		$P = \projp{y} \joinp{}{R(x,y), \projp{z} \joinp{}{S(x,y,z), \projp{u}T(y,z,u)}}$
	\end{minipage}
}
\hspace{0mm}
\subfloat[Safe plan in SQL]{
	\label{Fig_IntroExampleIncidenceMatrix_d}
	\begin{minipage}[t]{\linewidth}	
	\vspace{-3mm}
	{\sql{	
	\begin{tabbing}
	\hspace{0.3cm}\=\hspace{0.9cm}\=\hspace{0.9cm}\=\hspace{0.9cm}\=
	\hspace{0.9cm}\=\hspace{0.9cm}\=\hspace{0cm}\=\kill
	\>select X4.A, IOR(P) as P\\
	\>from	\>(select R.A, R.B, R.P * X3.P as P		\\
	\>		\>from	\>(select	\> X2.A, X2.B, IOR(P) as P		\\
	\>		\>		\>from 		\>	(select	\> S.A, S.B, S.C, S.P * X1.P as P  	\\
	\>		\>		\>			\>	from	\> (select	\> T.B, T.C, IOR(P) as P 	\\
	\>		\>		\>			\>			\> from 	\> T			\\
	\>		\>		\>			\>			\> group by	B, C) as X1, S	\\
	\>		\>		\>			\>	where 	\> S.B=X1.B and S.C=X1.C) as X2	\\
	\>		\>		\>group by  X2.A, X2.B) as X3, R	 			\\
	\>		\>where R.A=X3.A and R.B=X3.B) as X4	\\
	\>group by X4.A 
	\end{tabbing}
	}}
	\end{minipage}
	}	
\caption{\autoref{ex:IntroExample}. 
A query $q$ in Datalog (a), its augmented incidence matrix (b),
its unique safe plan in our plan notation (c), and in SQL (d).
}
\label{Fig_IntroExampleIncidenceMatrix}
\end{figure}

The following proposition summarizes our discussion:

\begin{proposition}[Safety \cite{DBLP:journals/vldb/DalviS07}]\label{prop:uniqueSafePlan}
  (1) Let $P$ be a plan for query $q$. Then $\textit{score}(P) =
  \PP{q}$ for any probabilistic database iff $P$ is safe. 
  (2)   Assuming \#P$\neq$PTIME, a query $q$ is safe (i.e.\ $\PP{q}$ has
  PTIME data complexity) iff it has a safe plan $P$; in that case the
  safe plan is unique (up to permutation in the join orders), and $\PP{q} = \score(P)$.
\end{proposition}

\subsection{Boolean Formulas} 
Consider a set of Boolean variables $\mathbf{X}
= \set{X_1, X_2, \ldots}$ and a probability function $p : \mathbf{X}
\rightarrow [0,1]$.  Given a Boolean formula $F$, denote $\PP{F}$ the
probability that $F$ is true if each variable $X_i$ is independently true
with probability $p(X_i)$. 
In general, computing
$\PP{F}$ is \#P-hard in the number of variables $\mathbf{X}$. 

If $D$
is a probabilistic database then we interpret every tuple $t \in D$ as
a Boolean variable and denote the lineage of a Boolean query $q \datarule
a_1,\ldots,a_m$ on $D$ as the Boolean DNF formula $F_{q,D} =
\bigvee_{\theta: \theta \models q}\theta(a_1)\wedge\cdots\wedge
\theta(a_m)$, where $\theta$ 
ranges over all assignments of $\EVar(q)$ 
to constants in the active domain
that satisfy $q$ on $D$.
It is well known that $\PP{q} = \PP{F_{q,D}}$. In other
words the probability of a Boolean query is the same as the
probability of its lineage formula.

\begin{example}[Lineage]\label{ex:simple} 
If $F= X Y_1 Z_1 \vee X Y_2 Z_2$, then 
$\PP{F} = p(X) \big( p(Y_1) p(Z_1) \otimes p(Y_2) p(Z_2)\big)$. 
Next consider a query $q \datarule R(x), S(x,y), T(y)$ over the
database $D$ from \autoref{Fig_DissociationExample_a}.
Then the lineage formula is 
$F_{q,D} = \big( R(a)\wedge S(a,b)\wedge T(b)  \big)
	\vee \big( R(a) \wedge S(a,c) \wedge T(c) \big)$, 
i.e.\ the same as $F$ up to variable renaming. It is now easy to
see that $\PP{q} = \PP{F_{q,D}}$.
\end{example}

A key technique that we use in this paper is the following result
from~\cite{DBLP:journals/tods/GatterbauerS14}:  Let $F$ and $F'$ be two
Boolean functions 
with sets of variables $\mathbf{X}$ and $\mathbf{X}'$,
respectively.  We say that $F'$ is a ``\emph{dissociation}'' of $F$ if
there exists a substitution $\theta: \mathbf{X}' \rightarrow
\mathbf{X}$ such that $F'[\theta]=F$.  If $\theta^{-1}(X) =
\set{X',X'', \ldots}$ then we say that the variable $X$ {\em
  dissociates into} $X', X'', \ldots$; if $|\theta^{-1}(X)| = 1$ then
we assume w.l.o.g.\ that $\theta^{-1}(X) = X$ (up to variable renaming)
and we say that $X$ does not dissociate.  Given a probability function
$p : \mathbf{X} \rightarrow [0,1]$, we extend it to a probability
function $p' : \mathbf{X}' \rightarrow [0,1]$ by setting $p'(X') =
p(\theta(X'))$.  Then, we have previously shown:

\begin{theorem}[Oblivious DNF bounds~\cite{DBLP:journals/tods/GatterbauerS14}] \label{th:bool:dissoc} Let
  $F'$ be a monotone DNF formula that is a dissociation of $F$ through the
  substitution $\theta$.  Assume that for any variable $X$, no two
  distinct dissociations $X', X''$ of $X$ occur in the same prime
  implicant of $F'$.  
  Then:
  (1) $\PP{F} \leq \PP{F'}$, and 
  (2) if every
  dissociated variable $X \in \mathbf{X}$ is deterministic 
  (i.e.\
  $p(X)=0$ or $p(X)=1$), then $\PP{F}=\PP{F'}$.
\end{theorem}

Intuitively, a dissociation $F'$ is obtained from a formula $F$ by
replacing different occurrences of a variable $X$ with
fresh variables $X', X'', \ldots$; 
by doing this, $\PP{F'}$ gives us an upper bound for $\PP{F}$
and may be easier to compute.

\begin{example}[Dissociation]
$F' = X'Y \vee X''Z$ is a dissociation of $F = XY \vee XZ$, and its probability is 
$\PP{F'} = p(X)p(Y) \otimes p(X) p(Z)$.
Here, only the variable $X$
dissociates into $X', X''$.  It is easy to see that $\PP{F} \leq
\PP{F'}$. Moreover, if $p = 0$ or 1, then $\PP{F}=\PP{F'}$.
The condition that no two dissociations of the same variable occur in
a common prime implicant is necessary: for example, $F'= X'X''$ is a
dissociation of $F = X$ as $X = X X$.
However, $\PP{F} = p(X)$, $\PP{F'} = p(X)^2$,  and
thus $\PP{F} \not \leq \PP{F'}$.
\end{example}

\section{Dissociation and propagation for unsafe queries}\label{sec:dissociation}
This section defines our technique of ``\emph{query dissociation}'' and defines the ``\emph{propagation score}'' of a query.  
Our motivation comes from \autoref{th:dichotomy}: hierarchical queries are safe (i.e.\ in PTIME), while non-hierarchical queries are unsafe (i.e.\ \#P-hard). 
At its very core, our approach will approximate the probability of a non-hierarchical query with a set of related hierarchical queries.
We first define our approach
(\autoref{sec:queryDissociation}),
then draw the connection to propagation in graphs
(\autoref{sec:graphConnection}),
and finally derive a partial order between a set of hierarchical queries
(\autoref{sec:partialDissociationOrder}).

\begin{figure}[t]
\vspace{-3mm}	
\centering
	\renewcommand{\tabcolsep}{0.9mm}
	\renewcommand{\arraystretch}{0.95}
	\subfloat[$q$]{
		\begin{minipage}[t]{30mm}
		\hspace{10mm}
		\begin{tabular}[t]{@{\hspace{1pt}} >{$}c<{$}|>{$}c<{$} >{$}c<{$} @{\hspace{1pt}}}
				& x	& y  \\
			\hline
			R	& \circ	 \\
			S	& \circ & \circ \\
			T	& & \circ 
		\end{tabular}\label{Fig_DissociationExampleIncidenceMatrix_a}
		\end{minipage}	}
	\hspace{8mm}	
	\subfloat[$q^{\Delta}$]{
		\begin{minipage}[t]{35mm}
		\hspace{12mm}
		\begin{tabular}[t]{@{\hspace{1pt}} >{$}c<{$}|>{$}c<{$} >{$}c<{$} @{\hspace{1pt}}}
				& x	& y  \\
			\hline
			R	& \circ	& \bullet 	\\
			S	& \circ & \circ 	\\
			T	& & \circ 
		\end{tabular}\label{Fig_DissociationExampleIncidenceMatrix_b}
		\end{minipage}	}	
	\hspace{2mm}
	\renewcommand{\tabcolsep}{0.6mm}
	\subfloat[$D$]
		{{\begin{minipage}[t]{30mm}
			\mbox{
					\begin{tabular}[t]{ >{$}c<{$} |  >{$}c<{$} }
		 			R	& A	\\
					\hline
					p_1	& a			
					\end{tabular}	}
			\hspace{-1mm}
			\mbox{
					\begin{tabular}[t]{ >{$}c<{$} | >{$}c<{$} >{$}c<{$} >{$}c<{$} >{$}c<{$}}
		 			S	& A		& B\\
					\hline
					p_2	& a		& b		\\
					p_3	& a		& c		
					\end{tabular}		}
			\hspace{-1mm}
			\mbox{
					\begin{tabular}[t]{ >{$}c<{$} | >{$}c<{$} }
		 			T	& B			\\
					\hline
					p_4	& b		\\	
					p_5	& c		
					\end{tabular}				}
		\end{minipage}	
		}\label{Fig_DissociationExample_a}}
	\hspace{8mm}
	\subfloat[$D^{\Delta}$]
	   {{\begin{minipage}[t]{34mm}
       	\mbox{
	   			\begin{tabular}[t]{ >{$}c<{$} | >{$}c<{$}	>{$}c<{$}   }
	    		R^y	&  A	& B		\\
	   			\hline
	   			p_1	&  a		& b	\\
	   			p_1	&  a		& c					
	   			\end{tabular}		   	}
	   	\hspace{-1mm}
	   	\mbox{
	   			\begin{tabular}[t]{ >{$}c<{$} | >{$}c<{$} >{$}c<{$} >{$}c<{$} >{$}c<{$}}
	    		S	& A		& B\\
	   			\hline
	   			p_2	& a		& b		\\
	   			p_3	& a		& c		
	   			\end{tabular}	   	}
	   	\hspace{-1mm}
	   	\mbox{
	   		\begin{tabular}[t]{ >{$}c<{$} | >{$}c<{$} }
	   		T	& B			\\
	   		\hline
	   		p_4	& b		\\	
	   		p_5	& c		
	   		\end{tabular}	   	}
	   \end{minipage}
	   }\label{Fig_DissociationExample_b}}				
\caption{\autoref{ex:queryDissociation}: 
Incidence matrices of $q \datarule R(x), S(x,y), T(y)$ and dissociation $q^{\Delta} \datarule \R^y(x,y), S(x,y), T(y)$.
Original database $D$ and new database $D^{\Delta}$ with table $R$ dissociated on variable $y$.}
\label{Fig_DissociationExampleIncidenceMatrix}
\end{figure}

\subsection{Query dissociation}\label{sec:queryDissociation}

\begin{definition}[Query dissociation]\label{def:Dissociation}
Given a probabilistic database $D$ and a 
query $q(\vec z) \datarule R_1(\vec x_1), \ldots, R_m(\vec x_m)$.
  Let $\Delta = (\vec y_1, \ldots, \vec y_m)$ be a collection of sets
  of variables with $\vec y_i \subseteq \EVar(q)- \vec x_i$ for every relation $R_i$.
The ``\emph{query dissociation}'' defined by $\Delta$ has then two components:

\begin{enumerate}[nolistsep,label=(\arabic*)]

\item the ``\emph{dissociated query}'':
\begin{align*}
	q^{\Delta}(\vec z) \datarule R_1^{\vec y_1}(\vec x_1, \vec y_1), \ldots, R_m^{\vec y_m}(\vec x_m, \vec y_m)
\end{align*}
where each $R_i^{\vec y_i}(\vec x_i, \vec
y_i)$ is a new relation of arity $|\vec x_i| + |\vec y_i|$.

\item the ``\emph{dissociated database}'' $D^\Delta$ consisting of
  the tables over the vocabulary 
  $\sigma^\Delta = (R_1^{\vec y_1},\ldots, R_m^{\vec y_m})$
  obtained by 
	replacing each table $R_i$ with the $k_i$-fold Cartesian product
  $R_i \times \ADom_{y_{i1}} \times \cdots \times \ADom_{y_{ik}}$
  where $\vec y_i = (y_{i1}, \ldots, y_{ik_i})$. 
  For each new tuple $t' \in R_i^{\vec y_i}$, its probability is
  $p'(t') = p(\pi_{\vec x_i}\,t')$, i.e.\ 
  the probability of $t$ in the database $D$.
\markend
\end{enumerate}
\end{definition}

Thus, conceptually, we define the semantics of ``query dissociation'' as follows:  Add some existential
variables to some atoms in the query; this results in a \emph{dissociated query} over a new schema. 
Transform the probabilistic database by replicating some of their tuples and by adding new attributes to match the new schema; this is the \emph{dissociated database}.  
Finally, compute the probability of the dissociated query on the dissociated database. 
Recall that each tuple in
the original table represents an independent probabilistic event. The dissociated table now contains multiple copies of each tuple, all with the same probability, yet considered to represent \emph{independent} events. Thus, the dissociated table has a different probabilistic interpretation than the original table.
Notice that this is the semantics of a dissociated query, and not the way we actually evaluate queries
(in later sections we describe methods that evaluate the dissociated query without modifying the tables in the database).

\begin{example}[\autoref{ex:simple} cont.]\label{ex:queryDissociation}
We illustrate with the query $q \datarule R(x), S(x,y), T(y)$ and the database shown in 
\autoref{Fig_DissociationExample_a}
where a variable $p_i$ stands for the independent probability of a tuple with index $i$.
Then $\Delta = (\{y\}, \emptyset, \emptyset )$ defines the following dissociation:
$q^{\Delta} \datarule R^{y}(x,y), S(x,y), T(y)$.
Notice we write here and later $R^y$ instead of $R^{\{y\}}$ to simplify our notation.
The 
active domain $\ADom_y$ is $\{b,c\}$, and 
\autoref{Fig_DissociationExample_b} shows the new database with table $R^y$ as the original table $R$ dissociated on variable $y$.
Notice that the original tuple $R(a)$ got dissociated into two tuples $R^y(a,b)$ and $R^y(a,c)$ with the same probability $p_1$.
\Autoref{Fig_DissociationExampleIncidenceMatrix_b} shows $q^{\Delta}$ with the help of an incidence matrix that is \emph{augmented} in a 4th way:
while an empty circle ($\circ$) still indicates that the original relation contains a variable,
a full circle ($\bullet$) now indicates that a relation is dissociated on a variable.
Notice that the lineage of the dissociated query $q^\Delta$ is
$F_{q^{\Delta},D^{\Delta}} = R^y(a,b),S(a,b), T(b) \vee R^y(a,c),S(a,c), T(c)$ and
is the same (up to variable renaming) as the dissociation of the
lineage of query $q$: $F'= X' Y_1 Z_1 \vee X''Y_2 Z_2$.  
Also notice the deliberate similarity 
with \autoref{ex:1} and \autoref{Fig_IntroductionExample} from the introduction.
\markend
\end{example}

This example generalizes and allows us to prove
our first major technical result that query dissociation can only increase the probability:

\begin{theorem}[Upper query bounds]\label{th:upperBounds}
For every database $D$ and every dissociation $\Delta$ of a query $q$:
$\PP{q^{\Delta}} \geq \PP{q}$.
\end{theorem}

\begin{proof}[\autoref{th:upperBounds}]
This follows immediately from \autoref{th:bool:dissoc} by
  noting that the lineage $F_{q^\Delta,D^\Delta}$ is a dissociation of
  the lineage $F_{q,D}$ through the substitution $\theta : D^\Delta
  \rightarrow D$ defined as follows: for every tuple $t' \in R_i^{\vec
    y_i}$, $\theta(t') = \pi_{\vec x_i}(t')$.
\end{proof}

By \autoref{th:dichotomy}, the probability of a dissociation can be evaluated in PTIME iff $q^\Delta$ is hierarchical.
Hence, amongst all dissociations, we are interested in those that are easy to evaluate
and use them as a technique to approximate the probabilities of queries
that are
hard to compute:

\begin{definition}[Hierarchical dissociation]
A dissociation $\Delta$ of a query $q$ is called ``\emph{hierarchical}'' 
if the dissociated query ${q^{\Delta}}$ is hierarchical.
\end{definition}

\noindent
The idea now is simple: 
Find a hierarchical dissociation $\Delta$, compute $\PP{q^{\Delta}}$, and thereby obtain an upper bound on $\PP{q}$.  In fact,
we will consider \emph{all} hierarchical dissociations and take the
minimum of their probabilities, since this gives an even better upper bound
on $\PP{q}$ than that by a single dissociation.  We call this
quantity the ``\emph{propagation score}'' of the query $q$
because of 
  similarities with efficient relevance propagation algorithms on graphs.
\Autoref{fig:NetworkConnection} and the next subsection explain in more detail how query dissociation generalizes 
``\emph{relevance propagation}'' from graphs to hypergraphs.

\begin{definition}[Propagation]
  The ``\emph{propagation score}'' $\rho(q)$ for a query $q$ is the minimum probability of all \emph{hierarchical dissociations}, 
  i.e.\ $\rho(q) = \min_{\Delta} \PP{q^{\Delta}}$ with $\Delta$ ranging over all hierarchical dissociations.
\end{definition}

\definecolor{bad}{rgb}{0.96, 0.85, 0.85}

\definecolor{good}{rgb}{0.84, 0.91, 0.80}

\begin{figure}[t]
\centering
\small
\renewcommand{\tabcolsep}{0.7mm}
\renewcommand{\arraystretch}{1.20}
\begin{tabular}[b]{@{\hspace{0pt}} c | c @{\hspace{0pt}}}
\textbf{\normalsize Networks}	& \textbf{\normalsize Conjunctive queries} \\
\textbf{\normalsize (Graphs)}	& \textbf{\normalsize (Hypergraphs)} \\
\hline
\multicolumn{1}{@{\hspace{0pt}}p{41mm}|}{
\textbf{Network reliability}:
Probability that two nodes are connected.
\emph{Independent} of edge direction.
\mbox{\colorbox{good}{\hspace{10mm}($+$) undirected \hspace{16mm}}}
\mbox{\colorbox{bad}{\hspace{10mm}($-$) \#P-hard\hspace{30mm}}}
}
	&  \multicolumn{1}{p{40.5mm}@{\hspace{0pt}}}{
	\textbf{Query reliability}:
	Probability that query is true in a random world.
	\emph{Independent} of query plan.
	\mbox{\colorbox{good}{\hspace{10mm}($+$) undirected \hspace{10mm}}}
	\mbox{\colorbox{bad}{\hspace{10mm}($-$) \#P-hard \hspace{13mm}}}
	}		\\
\hline
\multicolumn{1}{@{\hspace{0pt}}p{41mm}|}{
\textbf{Propagation score}:
`Relatedness' propagates from source to target.
\emph{Dependent} on edge direction.
\emph{Upper bound} to reliability.
\hspace{10mm}
\mbox{\colorbox{bad}{\hspace{10mm}($-$) directed\hspace{20mm}}}
\mbox{\colorbox{good}{\hspace{10mm}($+$) PTIME \hspace{20mm}}}
}
	&  \multicolumn{1}{p{40.5mm}@{\hspace{0pt}}}{
	\textbf{Dissociation score}:
	A query plan evaluates from leafs to root.
	\emph{Dependent} on choice of dissociation.
	\emph{Upper bound} to reliability.
	\mbox{\colorbox{bad}{\hspace{10mm}($-$) directed\hspace{13.4mm}}}
	\mbox{\colorbox{good}{\hspace{10mm}($+$) PTIME \hspace{13.4mm}}}
	}		\\
\hline
	&  \multicolumn{1}{p{40.5mm}@{\hspace{0pt}}}{
	\textbf{Propagation score}:
	\emph{Minimum} over all hierarchical dissociations.
	\emph{Unique} for given query.
	\mbox{\colorbox{good}{\hspace{10mm}($+$) undirected \hspace{10mm}}}
	\mbox{\colorbox{good}{\hspace{10mm}($+$) PTIME \hspace{13.6mm}}}
	}		\\
\end{tabular}
\caption{Connection between \emph{reliability} and \emph{propagation} in networks and conjunctive queries (CQs):
In contrast to networks, the propagation score for CQs is the minimum over all possible hierarchical dissociations, and is therefore unique for every query and database.
($+$) and ($-$) denote positive or negative properties (best seen in color).}\label{fig:NetworkConnection}
\end{figure}

We propose to adopt the \emph{propagation score as an alternative semantics for ranking query results over probabilistic databases}.  
While the data complexity of computing the reliability $r(q)$ is \#P-hard in general, computing the propagation score $\rho(q)$ is always in PTIME in the size of the database.  Furthermore, $\rho(q) \geq r(q)$ and, if $q$ is safe, then $\rho(q) = r(q)$. Both claims follow immediately from \autoref{th:upperBounds}. 
Hence, the propagation score is a \emph{natural generalization of reliability from safe queries to all queries}: If the query is safe, both scores coincide; if the query is unsafe, propagation still allows to evaluate the query in PTIME (in addition, the next two sections will show how to evaluate the propagation very efficiently without first dissociating the tables).

\subsection{Dissociation and the relation to propagation on graphs}\label{sec:graphConnection}

Recall that our original motivation was to develop for queries a concept that is analogous to propagation on directed networks.
Queries have no concept of direction, and we suggest that the \emph{choice of direction} in a graph corresponds to a particular \emph{choice of hierarchical dissociation} of a query.
We now justify our definitions of query dissociation and propagation by drawing the connection to network reliability and propagation:
When a digraph is $k\!+\!1$-partite, then its two terminal reliability can be expressed by a conjunctive $k$-chain query.\footnote{A \emph{conjunctive $k$-chain query} is a query $q$ without self-joins in which each relation is binary, all relations are joined together, and there is no single variable common to more than two relations. Furthermore, the first and last variable are head variables and can be replaced by constants: $q(x_1, x_{k+1}) \datarule R_1(x_1, x_2), R_2(x_2, x_3), \ldots, R_k(x_k, x_{k+1})$. 
The fact that relations are binary entails that the query hypergraph is actually a standard graph. 
Similarly, the fact that a variable is not common to more than two relations also entails the ``dual hypergraph'' to be a graph as well.
The expression \emph{chain query} derives from the observation that both its hypergraph and dual hypergraph resemble a simple chain.}
Further, the propagation score over this network corresponds to one of several possible dissociations of this query $q$,
some of which khave no natural correspondence to propagation on graphs.
 Thus, \emph{query dissociation is a strict generalization of network propagation on $k$-partite graphs},
and we define query propagation as the minimum reliability of a set of query dissociations (see~\autoref{fig:NetworkConnection}).
Notice, however, that dissociation admits a natural interpretation as network propagation only on $k$-partite graphs, and says nothing about graphs that are not $k$-partite.

In the following, we use $[k]$ to denote the set $\{1, \ldots, k\}$, and $\vec x_{[i,j]}$ as short form for 
$(x_i, x_{i+1}, \ldots, x_j)$.

\begin{proposition}[Connection to networks]\label{prop:connectionPropagationScore}
  Let $G = (V, E)$ be a $k\!+\!1$-partite digraph with a source node $s$ and
  a target node $t$, where each edge has a probability.  The nodes are
  partitioned into $V = \set{s} \cup V_2 \cup \ldots \cup V_{k} \cup
  \set{t}$, and the edges are $E = \bigcup_i R_i$, where $R_i$ denotes
  the set of edges from $V_i$ to $V_{i+1}$ with $i \in [k]$.  Then:
\begin{enumerate}[label=\textup{(\alph*)}, itemsep=0pt, parsep=1pt, topsep = 1pt]

	\item The ($s,t$)-network reliability of $G$ is $\PP{q}$ with:
	\begin{align*}
		q & \datarule  R_1(s,x_2), R_2(x_2,x_3), \ldots, R_k(x_{k}, t)
	\end{align*}

	\item The directed propagation score from $s$ to $t$ (as defined in \autoref{ex:1}) is $\PP{q^\Delta}$ with:
	\begin{align*}
		q^\Delta  \datarule & R_1^{\vec x_{[3,k]}}(s, \vec x_{[2, k]}),
		R_2^{\vec x_{[4,k]}}(\vec x_{[2,k]}), \ldots, R_k^{\emptyset}(x_{k}, t)
	\end{align*}
	\end{enumerate}
\end{proposition}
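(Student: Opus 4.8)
The plan is to treat the two parts separately: part (a) by the standard path-unfolding argument that identifies $s$--$t$ paths with witnesses of the chain query, and part (b) by matching the extensional evaluation of a safe plan to the propagation recursion in \autoref{eq:dissociationFormula}.

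For part (a), I would first observe that because every edge of $G$ goes from some layer $V_i$ to the next layer $V_{i+1}$, any directed path from $s$ to $t$ must visit exactly one node in each layer, i.e.\ it has the form $s, x_2, \ldots, x_k, t$ with $x_i \in V_i$ and each consecutive pair an edge of $R_i$. Hence $s$ is connected to $t$ in a given subgraph precisely when there exist values $x_2, \ldots, x_k$ making $R_1(s,x_2) \wedge \cdots \wedge R_k(x_k,t)$ true, which is exactly the event that the Boolean query $q$ holds. Since the tuples of the probabilistic database are in bijection with the edges of $G$ and carry the same independent probabilities, the possible worlds of $G$ and of the database coincide as probability spaces; therefore the $(s,t)$-reliability of $G$ equals $\PP{q} = r(q)$.

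For part (b), I would first show that $q^\Delta$ is hierarchical, hence safe. Spelling out the dissociation, atom $R_i$ acquires the variables $x_{i+2}, \ldots, x_k$, so its dissociated variable set is $\{x_i, \ldots, x_k\}$ for $i \ge 2$ and $\{x_2, \ldots, x_k\}$ for $i=1$. Consequently $sg(x_j) = \{R_1, \ldots, R_j\}$, and these sets form a chain $sg(x_2) \subseteq sg(x_3) \subseteq \cdots \subseteq sg(x_k)$, so $q^\Delta$ satisfies \autoref{def:hierq}. By \autoref{prop:uniqueSafePlan} it then has a unique safe plan $P$ whose score equals $r(q^\Delta)$, and I would exhibit $P$ as the nested ``peel-off'' plan that projects variables away from $x_2$ (innermost) up to $x_k$ (outermost): $P^{(2)} = \projp{x_2}\joinp{}{R_1, R_2}$ and $P^{(j)} = \projp{x_j}\joinp{}{P^{(j-1)}, R_j}$ for $j = 3, \ldots, k$, with $r(q^\Delta) = \textit{score}(P^{(k)})$. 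One checks directly that every join in $P$ has operands with equal head variables, confirming safety per \autoref{def:safePlan}. The core is then an induction showing that for each $j$ the probability that $P^{(j)}$ attaches to an output tuple depends only on its $x_{j+1}$-coordinate and equals the propagation score $\rho(x_{j+1})$. The base case uses $\rho(s)=1$: each dissociated tuple of $R_1$ inherits the original edge probability $p_{(s,x_2)}$, which equals $\rho(x_2)$. In the inductive step, the probabilistic join forms the product $\rho(x_j)\cdot p_{(x_j,x_{j+1})}$ of a parent's score with the connecting edge probability, while the duplicate-eliminating projection $\projp{x_j}$ combines these over all parents $x_j$ as $1 - \prod_{x_j}(1 - \rho(x_j)\,p_{(x_j,x_{j+1})})$, which is exactly \autoref{eq:dissociationFormula}. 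The final projection $\projp{x_k}$ then gives $\textit{score}(P^{(k)}) = 1 - \prod_{x_k}(1 - \rho(x_k)\,p_{(x_k,t)}) = \rho(t)$, as defined in \autoref{ex:1}, so $r(q^\Delta) = \rho(t)$.

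I expect the main obstacle to be the bookkeeping in this induction, specifically aligning the cross-product semantics of table dissociation with the join keys of the safe plan. One must verify that because the dissociated copies of $R_i$ retain the original edge probability (which depends only on $(x_i,x_{i+1})$ and not on the added downstream variables), the intermediate score $\rho(x_{j+1})$ is genuinely a function of $x_{j+1}$ alone and does not spuriously depend on $x_{j+2}, \ldots, x_k$; absent edges contribute probability $0$ and hence factors $1$, so they drop out harmlessly. Making this precise is where the argument needs care, whereas the identification of $\projp{}$ with independent-or and of $\joinp{}{}$ with multiplication is immediate from the extensional semantics.
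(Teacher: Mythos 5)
Your proposal is correct and follows essentially the same route as the paper: part (a) is the same bijection between possible worlds of the layered graph and of the database, and part (b) identifies the unique safe plan for $q^\Delta$ and matches its extensional evaluation (join $=$ product, duplicate-eliminating project $=$ independent-or over parents) to the propagation recursion of \autoref{eq:dissociationFormula}. The only differences are cosmetic — you run the induction inside-out over the plan prefix with the invariant that the intermediate score at $x_{j+1}$ equals $\rho(x_{j+1})$, whereas the paper inducts on $k$ by peeling off the last layer and substituting $r(q^\Delta[a_i/x_k]) = \rho(a_i)$, and you additionally spell out the hierarchicality check that the paper leaves implicit.
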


\subsection{Partial dissociation order}\label{sec:partialDissociationOrder}

The difficulty in computing $\rho(q)$ is that the total number of
dissociations is large even for relatively small queries: 
the number corresponds to the cardinality of the power set of variables that can be added to atoms.
Thus, if $q$ has
$k$ existential variables and $m$ atoms, then $q$ has $2^{|K|}$
possible dissociations with $K = \sum_{i=1}^m \big( k- |
\texttt{Var}(a_i) | \big)$ forming a \emph{partial order} in the shape of a
\emph{power set lattice}:

\begin{definition}[Partial dissociation order]
We define the partial order on the dissociations of a query as:
\begin{align*}
	\Delta \preceq \Delta'   \,\,\Leftrightarrow\,\,  \forall i: \vec y_i \subseteq \vec y_i'
\end{align*}
\end{definition}

Whenever $\Delta \preceq \Delta'$, then $q^{\Delta'}, D^{\Delta'}$ is
a dissociation of $q^{\Delta}, D^{\Delta}$ (given by $\Delta'' =
\Delta' - \Delta$).  Therefore, we obtain immediately:
If $\Delta \preceq \Delta'$ then $\PP{q^\Delta} \leq
\PP{q^{\Delta'}}$.
However, the statement holds in both directions:

\begin{theorem}[Partial dissociation order]\label{th:partialDissociationOrder}
	For every two dissociations $\Delta$ and $\Delta'$ of a query $q$, the following holds over every database:
\begin{equation*}
		\Delta \preceq \Delta'   \,\,\Leftrightarrow\,\,  \PP{q^{\Delta}} \leq \PP{q^{\Delta'}}
\end{equation*}
\end{theorem}

\begin{example}[Partial dissociation order]\label{ex:partialDissociation}
	Consider the query $q \datarule R(x), S(x), T(x,y), U(y)$.
It is unsafe and allows $2^3 = 8$ dissociations which are shown in \autoref{Fig_PartialDissociationOrderExample_a} with the help of augmented incidence matrices.
Among the 8 dissociations, 5 are hierarchical,
and 2 among those 5 are minimal:
\begin{align*}
	q^{\Delta_3}	& \datarule  R(x), S(x), T(x,y), U^x(x,y) \\
	q^{\Delta_4}	& \datarule  R^y(x, y), S^y(x, y), T(x,y), U(y)
\end{align*}
The propagation score is the minimum score of all \emph{minimal hierarchical dissociations}:
$\rho(q) = \min_{i \in \{3,4\}} \PP{q^{\Delta_i}}$.
To illustrate that these dissociations are upper bounds, consider a database with 
$R=T=U =\{(1),(2)\}$, 
$S=\{(1,1),$ $(1,2),(2,2)\}$, 
and the probability of all tuples being $\frac{1}{2}$. 
Then $\PP{q} = \frac{83}{2^9} \approx 0.161$, 
while $\PP{q^{\Delta_3}}=\frac{169}{2^{10}} \approx 0.165$
and $\PP{q^{\Delta_4}}=\frac{353}{2^{11}} \approx 0.172$.
Both dissociations give upper bounds, and the propagation score is their minimum ($\approx 0.165$).
\Autoref{Fig_PartialDissociationOrderExample_b} is explained later in \Autoref{ex:queryPlans}.
\markend
\end{example}

\begin{figure}[t]
	\vspace{-3mm}
    \centering
	\subfloat[]{
		\label{Fig_PartialDissociationOrderExample_a}
		\hspace{-4mm}
		\includegraphics[scale=0.43]{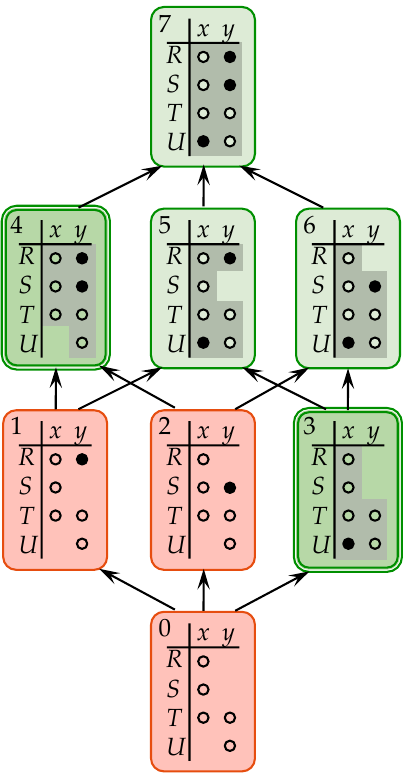}
		\hspace{-3mm}}
	\hspace{4mm}	
	\subfloat[]{
		\includegraphics[scale=0.53]{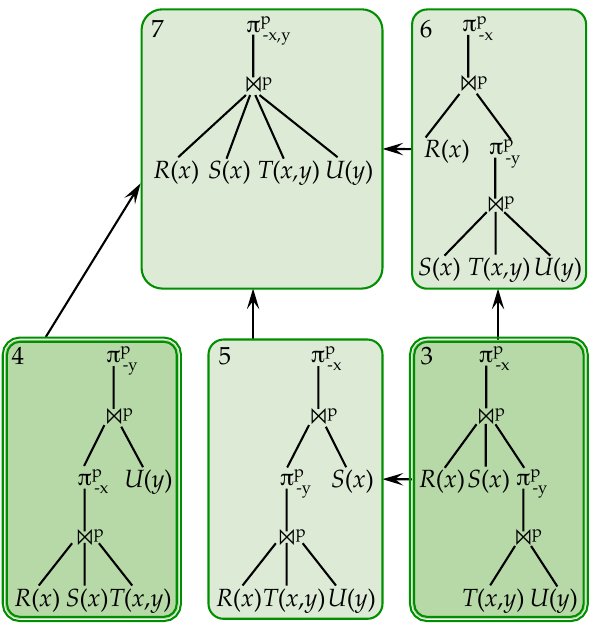}
		\hspace{-2mm}
		\label{Fig_PartialDissociationOrderExample_b}}
	\hspace{0mm}
	\caption{
		\autoref{ex:partialDissociation} and \autoref{ex:queryPlans}:
		(a):
		Partial dissociation order for $q \datarule R(x),S(x),$ $T(x,y), U(y)$.
		``\emph{Hierarchical dissociations}'' are green and have the hierarchies between variables shown in their augmented incidence matrices (3 to 7), 
		``\emph{minimal hierarchical dissociations}'' are dark green and double-lined (3 and 4). 
		(b):
		All 5 query plans, their correspondence to hierarchical dissociations, and their partial dissociation order (best viewed in color).
		}
	\label{Fig_PartialDissociationOrderExample}
	\vspace{-1mm}
\end{figure}

The smallest element in the lattice of dissociations is $\Delta_\bot = (\emptyset, \ldots, \emptyset)$ with
$q^{\Delta_\bot}=q$,
and the largest element is $\Delta_\top =
(\Var(q)-\Var(a_1), \ldots, \Var(q)-\Var(a_m))$. 
$q^{\Delta_\top}$ is always
hierarchical as every atom contains all variables.  As we move up in the
lattice the probability increases, but the 
hierarchy
status may
toggle arbitrarily from hierarchical to non-hierarchical and back.  
For example, the query 
$q \datarule R(x), S(y), T(x,y,z)$ is non-hierarchical, 
its dissociation
$q' \datarule R(x), S^y(x,y), T(x,y,z)$ is hierarchical, 
its dissociation 
$q'' \datarule R^z(x, z), S^y(x,y), T(x,y,z)$ is non-hierarchical again.

This suggests the following naive algorithm for computing $\rho(q)$:
Enumerate all dissociations $\Delta_1, \Delta_2, \ldots$ by traversing
the lattice breadth-first, bottom up (i.e.\ whenever $\Delta_i \prec
\Delta_j$ then $i < j$).  For each dissociation $\Delta_i$, check if
$q^{\Delta_i}$ is safe. If so, then first update $\rho \leftarrow \min(\rho,
\PP{q^{\Delta_i})}$, then remove from the list all dissociations
$\Delta_j \succ \Delta_i$.  However, this algorithm is inefficient for
practical purposes for two reasons: ($i$) we need to iterate over
many dissociations in order to discover those that are safe; and
($ii$) computing
$\PP{q^{\Delta_i}}$ requires computing a new database 
$D^{\Delta_i}$ for each hierarchical dissociation $\Delta_i$.
In the next two sections we show how to evaluate the propagation score very efficiently.

\section{Dissociations and minimal query plans}\label{sec:plans}
So far, in order to compute the propagation score of a query $q$, we need to dissociate its tables and compute several dissociated hierarchical queries $q^{\Delta}$. In practice, we will not apply naively query dissociation (\autoref{def:Dissociation}) because creating the dissociated database 
is very inefficient. 
We next show a 1-to-1 correspondence between hierarchical dissociations and query plans
which allows us to calculate $\PP{q^{\Delta}}$ on the original database
(\autoref{sec:queryPlans}), and then present an efficient algorithm for enumerating a minimum number of query plans we need to evaluate (\autoref{sec:enumeratingMinimalQueryPlans}).

\subsection{Hierarchical dissociations and query plans}\label{sec:queryPlans}
We next show ($i$) how to efficiently find hierarchical dissociations (by iterating over
query plans instead of all dissociations), and 
($ii$) how to compute $\PP{q^{\Delta}}$ without
having to materialize the dissociated database $D^{\Delta}$.

\begin{theorem}[Hierarchical dissociation]\label{th:safeDissociation}
For every sf-free CQ,
there is an isomorphism
between the set of query plans
and the set of hierarchical dissociations.
Moreover, the probability of a hierarchical dissociated query $q^{\Delta}$ is equal to the score of the corresponding plan: $\PP{q^{\Delta}} = \score(P_{\Delta})$.
\end{theorem}

We next describe the mappings:
(1) $\Delta \mapsto P_{\Delta}$:
Consider a hierarchical query dissociation
$q^{\Delta}$ and denote its corresponding unique safe plan
$P_{\Delta}$. This plan uses dissociated relations, hence each relation
$R^{\vec y_i}_i(\vec x_i, \vec y_i)$ has extraneous variables
$\vec y_i$.  Drop all variables $\vec y_i$ from the relations and
all operators using them.
Since we only remove existential variables from subgoals, the usual sanity conditions for projections are satisfied and each variable is still projected away in at most one project operator.
This transforms $P_{\Delta}$ into a
regular, generally unsafe plan $P$ for $q$.  For a trivial example,
the plan corresponding to the top dissociation $\Delta_\top$
of a query $q$
(i.e.\ the dissociation at the top of the partial dissociation order with $\vec y_i = \EVar(q) - \vec x_i$)
is $\projp{\EVar(q)} \joinp{}{P_1, \ldots, P_k}$: It
first joins all tables, then projects away all existential variables.

(2) $P \mapsto \Delta_P$: Conversely, consider any plan $P$ for $q$.  
We define its corresponding hierarchical dissociation $\Delta_P$ as follows: 
For each join
operation $\joinp{}{P_1,\ldots,P_k}$, let its \emph{join variables}
$\JVar$ be the union of the head variables of all subplans: $\JVar =
\bigcup_j \HVar(P_j)$.
We go from a plan $P$ to a hierarchical dissociation $\Delta = g(P)$ by recursively dissociating each relation $R_i$ occurring in a subplan $P_j$ of a join operation $\joinp{}{P_1,\ldots,P_k}$ on the missing \emph{existential variables} 
$\big(\JVar - \HVar(P_j)\big) \cap \EVar(P)$. 
Then, recursively and for every relation $R_i$ occurring in $P_j$,
add those variables to $\vec y_i$.\footnote{
Notice that dissociating a table on any head variable has no implication on the probability of a query result as it does not change its lineage.
We therefore only focus on dissociating existential variables.}

\begin{example}[\autoref{ex:partialDissociation} continued]\label{ex:queryPlans}
We saw in \autoref{ex:partialDissociation} that the query $q \datarule R(x), S(x), T(x,y), U(y)$ has 8 dissociations  depicted in \autoref{Fig_PartialDissociationOrderExample_a}. Among those, 5 are hierarchical, and \autoref{Fig_PartialDissociationOrderExample_b} shows their query plans:
\begin{align*}
	P_{\Delta_3}	& = \projp{x} \joinp{}{R(x), S(x), \projp{y} \joinp{}{T(x,y), U(y) }  }  \\
	P_{\Delta_4}	& = \projp{y} \joinp{}{U(y), \projp{x} \joinp{}{R(x),S(x),T(x,y)}   }	\\
	P_{\Delta_5}	& = \projp{x} \joinp{}{S(x), \projp{y} \joinp{}{R(x),T(x,y),U(y)}   }	\\
	P_{\Delta_6}	& = \projp{x} \joinp{}{R(x), \projp{y} \joinp{}{S(x), T(x,y), U(y) }  }  \\	
	P_{\Delta_7}	& = \projp{x,y} \joinp{}{R(x), S(x), T(x,y), U(y)  }  
\end{align*}
As every plan
corresponds to one hierarchical dissociation, the partial dissociation order
carries over to a partial order on all query plans.
The propagation score is thus the minimum of the scores of the two minimal plans:
$\rho(q) = \min_{i \in \{3,4\}}  \big[ \score\big(P_{\Delta_i}\big)  \big]$.
Next consider the subplan $\joinp{}{R(x), T(x,y), U(y)}$ 
in $P_{\Delta_5}$.
Here, $\JVar \cap \EVar(q)=
\set{x,y}$ and the corresponding hierarchical dissociation of this subplan is
$q^\Delta(x,y) \datarule$ $R^y(x, y),T(x,y), U^x(x,y)$.
\markend
\end{example}

Notice that a hierarchical dissociation is different from and does not imply a
safe plan for the original query.  It merely states that the
dissociated query $q^{\Delta}$ allowed a safe plan $P$ assuming
all tuples in its relations were independent. 
Further notice that while
there is a 1-to-1 mapping between hierarchical dissociations and query
plans, non-hierarchical dissociations do not correspond to plans and are still hard
(e.g., dissociations 0, 1, and 2 in \autoref{ex:partialDissociation} and
\autoref{Fig_PartialDissociationOrderExample_a}).

Recall from \autoref{sec:background}
that the extensional semantics of an unsafe plan $P$ differs from the query probability: 
$\score(P) \neq \PP{q}$, in general. 
Since we have previously shown that
$\score(P) = \PP{q^\Delta}$ for some dissociation $\Delta$, we derive the following rather surprising result:

\begin{corollary}[Query plans are upper bounds]\label{cor:queryPlanBounds}
  Let $P$ be any plan for a Boolean query $q$.  Then $\PP{q} \leq
  \score(P)$.
\end{corollary}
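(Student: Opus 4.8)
The plan is to derive this corollary purely by composing the two structural results already established, namely \autoref{th:safeDissociation} (the isomorphism between query plans and safe dissociations, together with the score equality) and \autoref{cor:upperBounds} (every dissociation can only increase reliability). No new combinatorial work on plans or possible worlds should be needed: the entire content has been pushed into those two statements, and the corollary is essentially a two-line chaining argument.

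First I would take an arbitrary plan $P$ for $q$ and invoke the isomorphism of \autoref{th:safeDissociation} in the plan-to-dissociation direction. By the construction described right after that theorem --- where, for each join $\joinp{}{P_1,\ldots,P_k}$ one sets $\JVar = \bigcup_j \HVar(P_j)$ and adds the missing variables $\JVar - \HVar(P_j)$ to each $\vec y_i$ --- the plan $P$ determines a unique safe dissociation $\Delta$, and $P$ is exactly the safe plan $P^{\Delta}$ with its extraneous dissociation variables dropped. The score-equality part of \autoref{th:safeDissociation} then gives $\score(P) = r(q^{\Delta})$. The key point to make explicit here is that this equality holds for \emph{every} plan $P$, safe or not: even though $P$ itself may be unsafe (so that $\score(P) \neq r(q)$ in general), it computes precisely the reliability of the \emph{dissociated} query $q^{\Delta}$, which is safe, and can therefore be evaluated correctly by its extensional plan.

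Finally I would apply \autoref{cor:upperBounds} to the dissociation $\Delta$, which yields $r(q^{\Delta}) \geq r(q)$. Chaining the two facts gives $\score(P) = r(q^{\Delta}) \geq r(q)$, as required. The only place where care is genuinely needed --- and the step I would regard as the real obstacle --- is justifying the score equality $\score(P) = r(q^{\Delta})$ for an \emph{unsafe} plan $P$; intuitively it holds because table dissociation merely replicates tuples of equal probability as independent events, and the extensional operators (the probabilistic join and the independent-or projection) evaluate these replicas in exactly the way the unsafe plan treats the shared tuple. Since that equality is the substance of \autoref{th:safeDissociation}, which we are entitled to assume as a black box, the corollary follows immediately.
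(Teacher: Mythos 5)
Your proposal is correct and is exactly the paper's own argument: the paper derives this corollary by noting that Theorem~\ref{th:safeDissociation} gives $\score(P) = r(q^{\Delta})$ for the safe dissociation $\Delta$ corresponding to $P$, and then applying Corollary~\ref{cor:upperBounds} to get $r(q^{\Delta}) \geq r(q)$. Your added remark that the score equality must hold even for unsafe $P$ (because $P$ extensionally computes the reliability of the safe dissociated query) is the right point to flag, and it is indeed the content already carried by Theorem~\ref{th:safeDissociation}.
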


\noindent
The proof follows immediately from $\PP{q} \leq \PP{q^{\Delta_P}}$
(\autoref{th:upperBounds}) and $\PP{q^{\Delta_P}} = \score(P)$
(\autoref{th:safeDissociation}).
In other words, any query plan for
$q$ as defined in~\autoref{def:queryPlans}
computes a probability score that is guaranteed to be an upper
bound on the correct  probability $\PP{q}$.

\subsection{Enumerating minimal query plans}\label{sec:enumeratingMinimalQueryPlans}

\autoref{th:safeDissociation} suggests the following improved
algorithm for computing the propagation score $\rho(q)$ of a query:
Iterate over all plans $P$, compute their scores, and retain the
minimum score $\min_P [\score(P)]$. Each plan $P$ is evaluated
directly on the original probabilistic database, and there is no need to
materialize the dissociated database.
However, this approach
is still inefficient because it computes several plans that
correspond to non-minimal dissociations (e.g., plans 5, 6, 7 in \autoref{ex:queryPlans} are ``\emph{dominated}'' by plan 3 since plan 3 is lower in the partial dissociation order).
It thus suffices to evaluate only the \emph{minimal query plans}, i.e.\
those for which the corresponding dissociation is minimal (i.e., not dominated) among all
hierarchical dissociations: in our \autoref{ex:partialDissociation}, these are plans 3 and 4.
We next describe
our third technical result,
the recursive \autoref{alg:basicAlgorithm}  that enumerates only the ``\emph{minimal query plans}''
(i.e.\ those that correspond to minimal hierarchical dissociations) 
and thus the minimum necessary number of query plans to evaluate $\rho(q)$.

We require some additional notation: 
Call a plan $P$ \emph{minimal} if $\Delta_P$ is minimal in the set of
hierarchical dissociations.  For example, in \autoref{ex:partialDissociation},
the minimal plans are $P_{\Delta_3}$ and $P_{\Delta_4}$.  
The propagation score is thus the
minimum of the scores of these two plans: $\rho(q) = \min_{i \in
  \{3,4\}} \big[ \score\big(P_{\Delta_i}\big) \big]$.  Our improved
algorithm will iterate only over minimal plans, by relying on a
connection between plans and sets of variables that disconnect a query:
A ``\emph{cut}'' is a set of existential
variables $\vec x \in \EVar(q)$ s.t.\ $q - {\vec x}$ is disconnected.\footnote{Recall that we say a query is \emph{connected} if all subgoals are connected by
considering only existential variables $\EVar(q)$. 
In other words, when computing query components we remove head variables from the query: $q -
\HVar(q)$.
An
  alternative way to write this is to first substitute all head
  variables by constants $q' = q[\vec a / \vec x]$ (here $q[\vec a /
  \vec x]$ denotes the query obtained by substituting each head
  variable $x_i \in \vec x$ with the constant $a_i \in \vec a$), then
  to let $q_1, \ldots, q_k$ be the components of $q'$ connected by any
  variable. The query is connected if $k=1$, otherwise it is
  disconnected, and $\forall i \neq j: \Var(q_i) \cap \Var(q_j)
  \subseteq \HVar(q)$.}
A ``\emph{min-cut}'' (for \emph{minimal cut}) is a cut for which no strict subset is a cut, i.e.\
no proper subset
$\vec y ' \subset \vec y \in \MinCuts(q)$ can disconnect the query
where $\MinCuts(q)$ denotes the set of all min-cuts.
Note that $\MinCuts(q) = \emptyset$ iff $q$ is
disconnected.

The connection between $\MinCuts(q)$ and query plans is given by two
observations:
(1)  Let $P$ be any plan for $q$. If $q$ is connected, then
the last operator in $P$ is a projection, i.e.\  $P = \projp{\vec x}
\joinp{}{P_1, \ldots, P_k}$, and the variables $\vec x$ projected away 
are the intersection of the join variables 
$\JVar = \bigcup_i \HVar(P_i)$ with existential variables,
as we must project away all existential variables.
We claim that $\vec x$ is a cut
for $q$ and that $q- {\vec x}$ has $k$ query
components corresponding to $P_1, \ldots, P_k$.
Indeed, if $P_i, P_j$
share any common variable $y$, then they must join on $y$, hence $y
\in \JVar$. Thus,
\emph{cuts are
in 1-to-1 correspondence with the top-most project-away operator} of a
plan.
(2) Next suppose that $P$ corresponds to a hierarchical dissociation
$\Delta_P$, and let 
$P' = \projp{\vec x} \joinp{}{P_1', \ldots, P_k'}$ 
be its unique safe plan.
Then
${\vec x} = \SepVar(q^{\Delta_P})$; i.e.\ the top-most project
operator removes all separator variables.\footnote{This
  follows from the recursive definition of the unique safe plan of a query in \autoref{lemma:hierarchical}: the top-most projection consists precisely of its separator variables.}
Furthermore, if $\Delta
\succeq \Delta_P$ is a larger hierarchical dissociation, then
$\SepVar(q^{\Delta}) \supseteq \SepVar(q^{\Delta_P}) $
(because any
separator variable of a query continues to be a separator variable in
any dissociation of that query).  Thus,
\emph{minimal plans correspond to min-cuts}; in other words,
$\MinCuts(q)$ is in 1-to-1 correspondence with the top-most projection
operator of {\em minimal plans}.

\begin{algorithm2e}[t]
\caption{enumerates all minimal query plans for a query $q$.}\label{alg:basicAlgorithm}
\SetKwInput{Algorithm}{Recursive algorithm}
\SetKwFunction{MP}{MP}
\SetKwFor{ForAll}{forall}{do}{endfch}	
\Algorithm{\FuncSty{MP (EnumerateMinimalPlans)}}
\KwIn{Query $q(\vec z) \datarule R_1(\vec x_1), \ldots, R_m(\vec x_m)$}
\KwOut{Set of all minimal query plans $\mathcal{P}$}
\BlankLine
\lIf{$m=1$}{$\mathcal{P} \leftarrow \{\projpd{\vec z} R_1(\vec x_1)\}$}\label{alg1:line1}
\Else{
	Set $\mathcal{P} \leftarrow \emptyset$ \;
	\uIf{$q$ is disconnected}{\label{alg1:disconnected}
		Let $q = q_1, \ldots, q_k$ be the query components  of $q$\;
		\lForEach{$q_i$}{Let $\HVar(q_i) \leftarrow \vec z \cap \Var(q_i)$}
		\ForEach{$(P_1, \ldots, P_k) \in \MP(q_1) \times \cdots \times \MP(q_k)$} {
			$\mathcal{P} \leftarrow \mathcal{P} \cup \{\joinp{}{P_1, \ldots, P_k}\}$ \;
		}
	}
	\label{alg1:connected}\Else{
		\ForEach{$\vec y \in \MinCuts(q)$ \label{alg1:line10}}{
			Let $q' \leftarrow q$ with $\HVar(q') \leftarrow \vec z \cup \vec y$ \;
			\lForEach{$P \in \MP(q')$}{
				$\mathcal{P} \leftarrow \mathcal{P} \cup \{\projp{\vec y} \, P\}$ \label{alg1:line12}
			}
		}
	}
}
\end{algorithm2e}

Our discussion leads immediately to \autoref{alg:basicAlgorithm} for computing the
propagation score $\rho(q)$.
The algorithm proceeds recursively: 
If $q$ is a single atom (\autoref{alg1:line1}), then it is safe and we only need to project on the head variables.\footnote{Note that if there are no existential variables ($\vec z = \vec x_i$), then there is no need for the projection operator and one could instead simplify to $\mathcal{P} \leftarrow \{R_i(\vec z)\}$, instead of $\mathcal{P} \leftarrow \{\projpd{\vec z} R_i(\vec x_i)\}$.}
If the query has more than one atom, then we consider two cases depending on whether the query is
connected.  
If the query is disconnected (\autoref{alg1:disconnected}), then the algorithm recursively computes the minimal subplans for each query component, then creates a query plan for each combination of those subplans. 
If the query is connected (\autoref{alg1:connected}), the algorithm creates a separate plan for each min-cut $\vec y \in MinCuts(q)$ by moving $\vec y$ from the existential variables $\EVar(q)$ to the head variables $\HVar(q)$, thereby disconnecting the query.
Notice that recursive calls of the algorithm will alternate between
these two cases, until they reach a single atom.

\begin{figure}[t]
\centering
\renewcommand{\tabcolsep}{0.9mm}
\renewcommand{\arraystretch}{0.95}
\subfloat[$q$]{
	\label{Fig_ExampleHierarchies_a}
	\begin{tabular}[t]{@{\hspace{1pt}} >{$}c<{$}|>{$}c<{$} | >{$}c<{$} >{$}c<{$} >{$}c<{$} >{$}c<{$} @{\hspace{1pt}}}
			& v    	& x		& y 	& z 	& u 	 		\\
					\hline
		R	&		&\circ	& \circ & 	 	& \circ	 	\\
		S	&\circ	&  		& \circ & \circ & \circ  	\\
		T	&\circ	& 		& 		& \circ &
	\end{tabular}}
\hspace{7mm}
\subfloat[$q^{\Delta_1}$]{
	\label{Fig_ExampleHierarchies_b}
	\begin{tabular}[t]{@{\hspace{1pt}} >{$}c<{$}|>{$}c<{$} | >{$}c<{$} >{$}c<{$} >{$}c<{$} >{$}c<{$} @{\hspace{1pt}}}
			& v    	& z	 			& y 		& u 		& x 		 		\\
			\hline
		R	&		&\graycell\bullet	&\graycell\circ &\graycell\circ	&\graycell\circ	 	\\
		S	&\circ	&\graycell\circ 	&\graycell\circ &\graycell\circ &  			 	\\
		T	&\circ	&\graycell\circ 	& 			& 			&
	\end{tabular}}
\hspace{7mm}
\subfloat[$q^{\Delta_2}$]{
	\label{Fig_ExampleHierarchies_c}
	\begin{tabular}[t]{@{\hspace{1pt}} >{$}c<{$}|>{$}c<{$} | >{$}c<{$} >{$}c<{$} >{$}c<{$} >{$}c<{$} @{\hspace{1pt}}}
			& v    	& y 			& u				& x	 		& z  		 	\\
			\hline
		R	&		&\graycell\circ 	&\graycell\circ	 	&\graycell\circ	&  			  	\\
		S	&\circ	&\graycell\circ 	&\graycell\circ  	&  			&\graycell\circ   	\\
		T	&\circ	&\graycell\bullet	&\graycell\bullet	& 			&\graycell\circ
	\end{tabular}}
\caption{\autoref{ex:topVariables}. Query $q$ and its two minimal hierarchical dissociations. 
Notice the hierarchies between $\EVar(q)$ for both dissociations.}
\label{Fig_ExampleHierarchies}
\end{figure}

\begin{figure*}[t]
    \centering
	\vspace{-2mm}
	\subfloat[Probability $\PPP$, dissociation $\rho$]
       	{\includegraphics[scale=0.41]{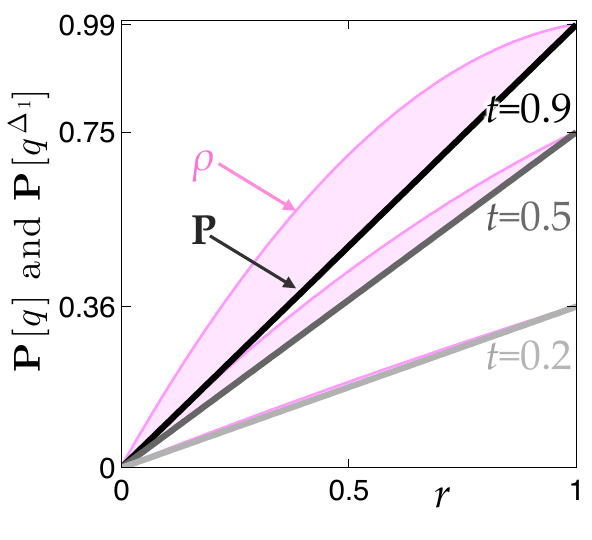}
    	\label{Fig_VLDBJ_SyntheticDissociationBounds1}}
	\hspace{3mm}
	\subfloat[Relative ratio $\rho/\PPP$]
		{\includegraphics[scale=0.41]{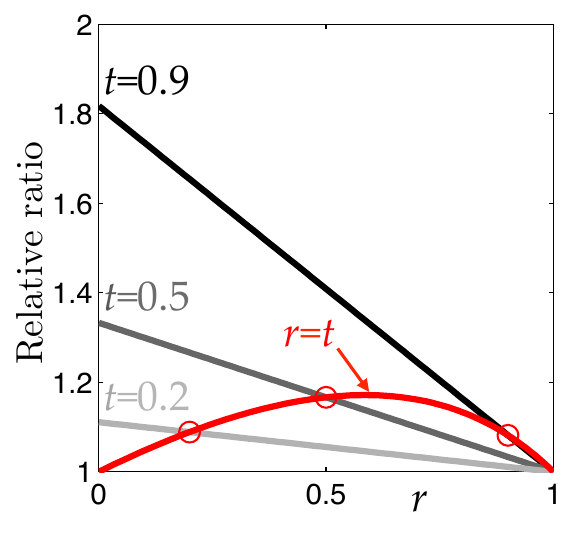}
		\label{Fig_VLDBJ_SyntheticDissociationBoundsRatio1}}
	\hspace{3mm}
	\subfloat[Relative not-ratio $\bar \PPP /\bar \rho$]
		{\includegraphics[scale=0.41]{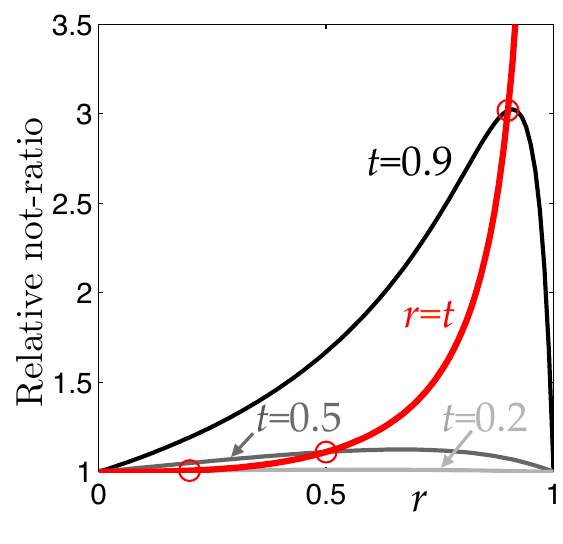}
		\label{Fig_VLDBJ_SyntheticDissociationBoundsRatio2}}
	\hspace{3mm}
	\subfloat[Odds ratio 
	$(\rho \bar \PPP) / (\bar \rho \PPP)$
	]
		{\includegraphics[scale=0.41]{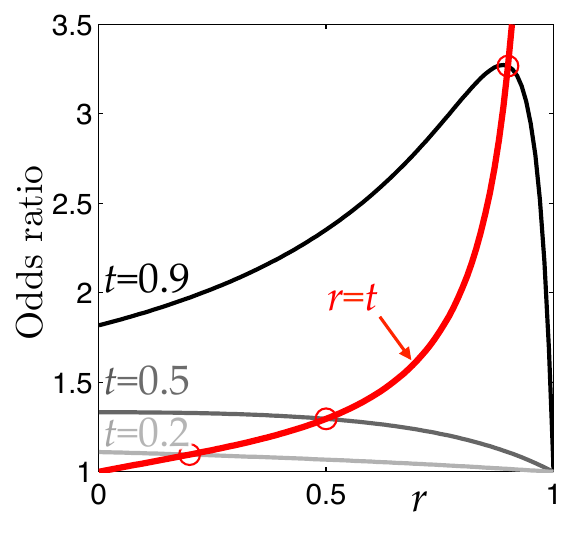}
		\label{Fig_VLDBJ_SyntheticDissociationBoundsRatio3}}
	\caption{\autoref{ex:smallProbabilities}: Comparing 
	$\PPP=\PP{q}=\PP{r_1 t_1 \vee r_1 t_2}$ 
	and 
	$\rho=\PP{q^{\Delta}}=\PP{r_1 t_1 \vee r_1' t_2}$ 
	for varying input probabilities
	$p = \PP{r_1}$ and $q = \PP{t_i}$.
	}
    \label{Fig_DissociationBounds}
\end{figure*}

\begin{theorem}[\Autoref{alg:basicAlgorithm}]\label{prop:algorithmSound}
\Autoref{alg:basicAlgorithm} returns a sound and complete enumeration of minimal query plans. 
\end{theorem}

\noindent
\Autoref{alg:basicAlgorithm} is sound in that only plans are generated which are not dominated by any other plan. It is complete in that the minimum score of all generated plans is equal to the propagation score of the query.

\begin{example}[Enumerate minimal query plans]\label{ex:topVariables}
Consider the non-Boolean query $q(v) \datarule R(x,y,u), S(y,z,u,v), T(z,v)$. 
\Autoref{Fig_ExampleHierarchies_a} shows its incidence matrix with its head variables $\HVar(q) = \{v\}$ and existential variables $\EVar(q) = \{x,y,z,u\}$ shown separately. 
The query is connected, and among $2^4$ different subsets of $\EVar(q)$, there are 2 ``\emph{min-cuts}'', i.e.\  minimum sets of variables for which removing them disconnects the query: $\MinCuts(q) = \{ \{ z\}, \{y,u \} \}$. 
Projecting away the first min-cut $\{z\}$ separates the query into $q_1(z,v) \datarule R(x,y,u),$ $S(y,z,u,v)$ and $q_2(z,v) \datarule T(z,v)$. 
Notice that $q_1$ and $q_2$ share no existential variables (they only share head variables $z$ and $v$). 
Projecting away the second min-cut $\{y,u \}$ separates $q$ into $q_3(y,u) \datarule R(x,y,u)$ and 
$q_4(y,u,v) \datarule $ $S(y,z,u,v), T(z,v)$. Recursive evaluation of $q_1$ to $q_4$ shows that they are all hierarchical, from which follows that $q$ has 2 minimal query plans:
\begin{align*}
	P_{\Delta_1}	& = \projp{z} \joinp{}{
				\projp{y,u} \joinp{}{
				\projp{x} R(x,y,u), S(y,z,u,v)}, T(z,v)}  \\
	P_{\Delta_2}	& = \projp{y,u} \joinp{}{
				\projp{x} R(x,y,u),
				\projp{z} \joinp{}{S(y,z,u,v), T(z,v)}}
\end{align*}
\Autoref{Fig_ExampleHierarchies_b} and \autoref{Fig_ExampleHierarchies_c} show their respective hierarchical dissociations, with existential variables re-ordered as to show the hierarchy implied by the query plans.
\markend
\end{example}

\subsection{Other observations}\label{sec:4:other}

\introparagraph{(1) Conservation} 
Some probabilistic database systems first check
if a query $q$ is safe, and in that case compute the exact probability
using the safe plan, otherwise use some approximation technique.
We show that \autoref{alg:basicAlgorithm}
is \emph{conservative}, in the sense that, if $q$ is safe,
then $\rho(q) = \PP{q}$.
Indeed, in that
case $\MP(q)$ returns a single plan, namely the safe $P$ for $q$,
because the empty dissociation, $\Delta_\bot = (\emptyset, \ldots,
\emptyset)$, is safe, and it is the bottom of the dissociation
lattice, making it the unique minimal hierarchical dissociation.

\smallsection{(2) Approximation quality}
We next show that 
the relative error of approximating query reliability with dissociation improves when the input probabilities decrease.   
As a consequence, the quality of the ranking also increases.
As a practical consequence, the rankings returned by dissociation are better if the input probabilities are small.

\begin{proposition}[Small probabilities]\label{prop:smallProbabilities}
Given a query $q$ and database $D$.
Consider the operation of scaling down the probabilities of all tuples in $D$ with a positive factor $f<1$. Then the relative error of approximation of the query probability $\PP{q}$ by the propagation score $\rho(q)$ decreases as $f$ goes to 0:
$\lim_{f \rightarrow 0^+} \frac{\rho(q) - \PP{q}}{\PP{q}} \rightarrow 0$.
\end{proposition}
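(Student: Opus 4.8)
The plan is to treat both $r(q)$ and $\rho(q)$ as polynomials in the scaling factor $f$ and compare their lowest-order terms as $f\to 0^+$. Write $p(t)=f\,p_0(t)$ for the scaled probability of a tuple $t$, where $p_0$ are the original probabilities. First I would expand $r(q)$ through the lineage of $q$: since $q$ is self-join-free with $m$ subgoals, its lineage is a monotone DNF $\bigvee_{w}\bigwedge_{t\in w} X_t$, where $X_t$ is the event that tuple $t$ is present and each witness $w$ is a variable assignment selecting exactly one tuple from each of the $m$ relations, so $|w|=m$. Inclusion--exclusion over the witness set $W$, using independence, gives
\begin{align*}
 r(q) = \sum_{\emptyset \neq S \subseteq W}(-1)^{|S|+1}\prod_{t\in \bigcup_{w\in S}w} f\,p_0(t).
\end{align*}
A singleton $S=\{w\}$ contributes $f^m\prod_{t\in w}p_0(t)$, whereas any $S$ with $|S|\ge 2$ ranges over a union of at least $m+1$ distinct tuples (two distinct witnesses differ in at least one tuple, since every variable occurs in some atom), hence contributes $O(f^{m+1})$. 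Thus $r(q)=f^m c_0+O(f^{m+1})$ with $c_0:=\sum_{w\in W}\prod_{t\in w}p_0(t)$.

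The crucial step is to show that every safe dissociation carries the \emph{same} leading coefficient. Because $\Var(q^{\Delta})=\Var(q)$ and each dissociated atom $R_i^{\vec y_i}(\vec x_i,\vec y_i)$ is satisfied exactly when $R_i(\vec x_i)$ is (the extra $\vec y_i$ merely range over the active domain $A$), the witnesses of $q^{\Delta}$ are in bijection with those of $q$ via the identity on variable assignments; the only change is that a witness $w$ now uses the dissociated copy $\hat t_i$ of each tuple $t_i$, which carries the same probability $p(\hat t_i)=p(t_i)$. Running the same inclusion--exclusion on the lineage of $q^{\Delta}$ therefore gives $r(q^{\Delta})=f^m c_0+O(f^{m+1})$ with the \emph{identical} $c_0$, since distinct witnesses of $q^{\Delta}$ still have $m$ tuples each and pairwise unions of size at least $m+1$. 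As this holds for each of the finitely many safe dissociations, taking the minimum preserves the leading term, so $\rho(q)=\min_{\Delta} r(q^{\Delta})=f^m c_0+O(f^{m+1})$ as well.

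Combining the two expansions, $\rho(q)-r(q)=O(f^{m+1})$ while $r(q)=f^m c_0\,(1+O(f))$, so
\begin{align*}
 \frac{\rho(q)-r(q)}{r(q)} = \frac{O(f^{m+1})}{f^m c_0 + O(f^{m+1})} = \frac{O(f)}{c_0 + O(f)} \longrightarrow 0
\end{align*}
as $f\to 0^+$, provided $c_0>0$. This positivity holds whenever $q$ has at least one witness over $D$ with positive-probability tuples, i.e.\ whenever $r(q)>0$ (after a semi-join reduction discarding zero-probability and non-joining tuples), which is precisely the regime in which the ratio is defined; note the ratio is automatically nonnegative by \autoref{cor:upperBounds}. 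I expect the main obstacle to be the middle step: rigorously justifying the witness bijection and verifying that dissociation perturbs only the $O(f^{m+1})$ terms while leaving the $f^m$-coefficient $c_0$ untouched. Everything else reduces to degree bookkeeping in the inclusion--exclusion expansion.
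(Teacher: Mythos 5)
Your proposal is correct and follows essentially the same route as the paper's proof: both expand $r(q)$ and $r(q^{\Delta})$ as multilinear polynomials whose lowest-order part is the sum over witnesses of the product of tuple probabilities, observe that dissociation leaves this leading term unchanged (the paper packages this as Lemma~\ref{lem:multlinearDNF} on $k$-uniform DNFs, you phrase it as a witness bijection), and conclude that the numerator $\rho(q)-r(q)$ is one order higher in $f$ than the denominator. Your explicit treatment of the minimum over finitely many safe dissociations and of the positivity of the leading coefficient $c_0$ are small refinements the paper leaves implicit.
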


In the following analytic example, we illustrate \autoref{prop:smallProbabilities} by calculating the relative ratio between propagation and reliability for changing input probabilities.

\begin{example}[Small probabilities]\label{ex:smallProbabilities}
We consider the Boolean query
$q \datarule R(x), S^d(x,y), T(y,z)$ and the dissociation
$q^{\Delta} \datarule $ $R^y(x, y), S^d(x,y), T(y,z)$
over the database $r_1 = R(a)$, $s_1=S^d(a,b)$, $s_2=S^d(a,c)$, $t_1 = T(b)$, and $t_2 = T(c)$. With deterministic relation $S^d$, the lineages of $q$ and $q^{\Delta}$ are
$\Lineage(q) = r_1 t_1 \vee r_1 t_2$ and $\Lineage(q^{\Delta}) = r_1 t_1 \vee r_1' t_2$, respectively.
Assuming $\PP{r_1}=r$ and $\PP{t_1}=\PP{t_2}=t$, the respective probabilities become
$\PP{q} = r(t\otimes t) = rt(2-t)$ and
$\rho(q):=\PP{q^{\Delta}} = rt \otimes rt = rt(2-rt)$
with $r_1$ dissociated.

There are four reasonable metrics to measure the approximation quality of dissociation
$\rho$
with regard to a probability $\PPP$:
(1) their \emph{absolute difference} $\rho-\PPP$,
which is not meaningful when both are too close to either 0 or 1;
(2)~their \emph{relative ratio} 
$\rho / \PPP$,
which is not meaningful close to 1;
(3)~their \emph{relative not-ratio} 
$\bar \PPP / \bar \rho$
with $\bar x \define 1-x$,
which is not meaningful close to 0; and
(4) the \emph{odds ratio} 
$(\rho / \bar \rho) / (\PPP / \bar \PPP)$,
which is the product of the former two ratios and which is meaningful everywhere in $[0,1]$.
Notice that all four metrics are defined so they are $\geq 1$.
\Autoref{Fig_VLDBJ_SyntheticDissociationBounds1}
shows the original probabilities $\PPP$ (full lines) and those
of their dissociations $\rho$ (border of shaded areas) for various values of $r$ and $t$.
The horizontal axis varies the probability of the dissociated tuple $x$ within $[0,1]$, and the different lines keep the non-dissociated tuples $y_1$, $y_2$ at the same probability either $0.2$, $0.5$, or $0.9$.
\autoref{Fig_VLDBJ_SyntheticDissociationBoundsRatio1},
\autoref{Fig_VLDBJ_SyntheticDissociationBoundsRatio2}, and
\autoref{Fig_VLDBJ_SyntheticDissociationBoundsRatio3}
show the approximation quality in terms of our three previously defined ratios.
Notice that the red line varies both $r$ and $t$ at the same time by keeping $r=t$.
We see that the approximation is good when both input probabilities are small, but get increasingly worse when the probability of the \emph{non-dissociated} variables $t_1, t_2$ gets close to 1.

Finally notice that the \emph{relative error} 
$(\rho-\PPP)/\PPP 
= {\rho}/{\PPP} - 1 
= \frac{t(1-r)}{2-t}$,
which clearly tends towards 0 as $r,t \rightarrow 0$.
\markend
\end{example}

\begin{figure}[t]
\centering
\small
\renewcommand{\tabcolsep}{0.9mm}
\renewcommand{\arraystretch}{0.95}
	\begin{tabular}[t]{@{\hspace{1pt}} >{$}c<{$} | >{$}r<{$} >{$}r<{$} >{$}r<{$} ||
		>{$}r<{$} | >{$}r<{$} >{$}r<{$} >{$}r<{$} @{\hspace{1pt}}}
		 	\multicolumn{4}{c || }{$k$-star query}	&\multicolumn{4}{c }{$k$-chain query}\\
		k 	&\multicolumn{1}{c}{\#MP}	&\multicolumn{1}{c}{\#P}	&\multicolumn{1}{c||}{$\#\Delta$}
		&k	&\multicolumn{1}{c}{\#MP}	&\multicolumn{1}{c}{\#P}	&\multicolumn{1}{c}{$\#\Delta$}	\\
		\hline
		1	&1 		&1		 &1			&2  &1  		&1		&1		   \\
		2	&2 		&3		 &4			&3  &2  		&3		&4		   \\
		3	&6 		&13		 &64		&4  &5  		&11		&64	       \\
		4	&24 	&75		 &4096		&5  &14  		&45		&4096	   \\
		5	&120	&541	 &>10^6		&6  &42  		&197	&>10^6	   \\
		6	&720	&4683	 &>10^9		&7  &132  		&903	&>10^9	   \\
		7	&5040	&47293	 &>10^{12}	&8  &429  		&4279	&>10^{12}   \\
		\hline
		\textrm{seq}	&k!		&\href{http://oeis.org/classic/A000670}{A000670} 	 &2^{k(k\!-\!1)}
		& \textrm{seq}
			&\href{http://oeis.org/classic/A000108}{A000108}
			&\href{http://oeis.org/classic/A001003}{A001003}	&2^{(k\!+\!1)k}	   \\
	\end{tabular}
\caption{Number of minimal plans, total plans, and total dissociations for star and chain queries (A are  OEIS sequence numbers~\protect\cite{oeis}).}
\label{table:numberMinimalQueryPlans}
\end{figure}

\smallsection{(3) Number of minimal query plans}
We end this section by commenting on the number of minimal hierarchical dissociations.  Not surprisingly, this number is exponential in the size of the query.  To see this, consider a Boolean $k$-\emph{star query}\footnote{A Boolean \emph{conjunctive $k$-star query} is a query with $k$ unary relations and one $k$-ary relation: $q \datarule R_1(x_1), \ldots, R_k(x_k), U(x_1, \ldots, x_k)$. The fact that each variable appears in exactly two relations implies that the dual query hypergraph is actually a standard graph 
(the dual hypergraph of a query is defined by the relations as vertices and variables as the hyperedges).
The expression \emph{star query} derives from the observation that the query's dual (hyper)graph resembles a star with the table $U$ connected to all other relations.}
$q \datarule R_1(x_1), \ldots, R_k(x_k), U(x_1, \ldots, x_k)$. There are exactly $k!$ minimal hierarchical dissociations: Take any consistent preorder $\preceq$ on the variables.  It must be a total preorder, i.e.\ for any $i, j$, either $x_i \preceq x_j$ or $x_j \preceq x_i$, because $x_i, x_j$ occur together in $U$.  Since it is minimal, $\preceq$ must be an order, i.e.\ we can't have both $x_i \preceq x_j$ and $x_j \preceq x_i$ for $i \neq j$.  Therefore, $\preceq$ is a total order, and there are $k!$ such.
Note that while the number of hierarchical dissociations is exponential in the size of the query, the number of query plans is independent of the size of the database, and hence our approach has PTIME data complexity~\cite{DBLP:conf/stoc/Vardi82} for all queries.
\Autoref{table:numberMinimalQueryPlans} gives an overview of the number of minimal query plans, total query plans, and dissociations for star and chain queries. 
Recall that in our definition of query plans, we \emph{do not consider permutations in the joins} (called join orderings~\cite{Moerkotte:BuildingQueryCompilers}). Also, our problem differs from the standard problem of optimal join enumeration in relational database engines. For example, every safe query has only one single minimal query plan, whereas any relational database engine compares several query plans.
Later \autoref{sec:optimizations} gives optimizations that allow us to evaluate a large number of plans efficiently.

\vspace{5mm}
In summary, our approach allows to rank answers to both safe and unsafe queries in polynomial time in the size of the database, and is conservative w.r.t.\ the ranking according to exact probabilistic inference for both safe queries and for data-safe queries~\cite{JhaOS2010:EDBT}. 
The latter follows easily from the point that if a query over a particular database allows one single safe plan, then this plan must be among the minimal plans in the partial dissociation order.

\section{Optimizations with schema knowledge}\label{sec:optimizationsWithSchema}
In this section, we show how \emph{deterministic relations} (i.e.\ all tuples in a relation have
probability $1$), and \emph{functional dependencies} (e.g., keys) can reduce the number of plans needed to
calculate the propagation score.

\subsection{Deterministic relations (DRs)}\label{sec:deterministicOptimization}
In the following, we denote deterministic relations (DRs) with an exponent ``$d$'', i.e.\ a relation $R$ is probabilistic, and a relation $R^d$ is deterministic. 
First notice that we can treat DRs just like probabilistic
relations, and \autoref{cor:queryPlanBounds} with $\PP{q} \leq score(P)$ still holds for any
plan $P$. Just as before, our goal is to find a minimum number of plans that compute the
minimal score of \emph{all plans}: $\rho(q) = \min_P score(P)$. It is known that a non-hierarchical query
$q$ can become safe (i.e., $\PP{q}$ can be calculated in PTIME with one single plan) if we
consider DRs. Thus, we would still like an improved algorithm that returns one
single plan if a query with DRs is safe. The following lemma will help us achieve this goal:

\begin{lemma}[Dissociation and DRs]\label{lemma:DetDissociation}
Dissociating a deterministic relation does not change the probability.
\end{lemma}

We thus define a new \emph{probabilistic dissociation preorder} $\preceq^p$ that only focuses on probabilistic relations:
\begin{align*}
  \Delta &\preceq^p \Delta' \Leftrightarrow \forall i \textrm{ with } R_i \mbox{ probabilistic}: {\vec y}_i \subseteq {\vec y}_i'
\end{align*}

\noindent 
In other words, $\Delta \preceq^p \Delta'$ still implies $\PP{q^\Delta} \leq
\PP{q^{\Delta'}}$, but $\preceq^p$ is defined on probabilistic relations only. Notice, that
for queries without DRs, the relations $\preceq^p$ and $\preceq$ coincide. However, for queries
with DRs, $\preceq^p$ is a preorder, not an order. Therefore, there exist distinct
dissociations $\Delta$, $\Delta'$ that are equivalent under $\preceq^p$ (written as $\Delta
\equiv^p \Delta'$), and thus have the same probability: $\PP{q^\Delta} = \PP{q^{\Delta'}}$.
As a consequence, using $\preceq^p$ instead of $\preceq$, allows us to further reduce the
number of minimal hierarchical dissociations we need to evaluate.

\begin{figure}[t]
	\vspace{-3mm}
    \centering
	\subfloat[No DRs]{
		\label{Fig_PartialDissociationOrderSimple_a}
		\includegraphics[scale=0.5]{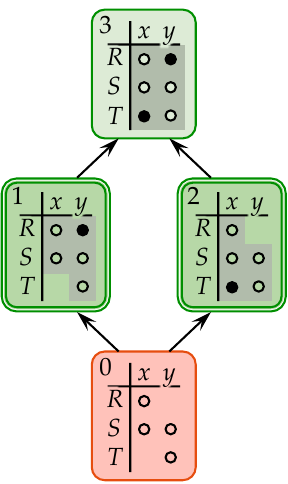}}
	\hspace{2mm}	
	\subfloat[{$T^d$}]{
		\includegraphics[scale=0.5]{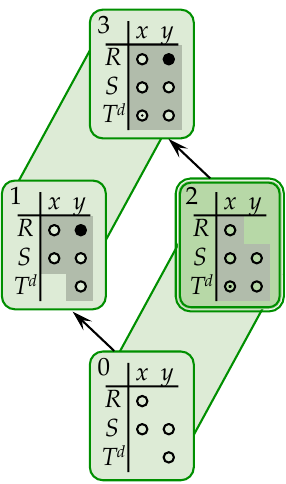}
		\label{Fig_PartialDissociationOrderSimple_b}}
	\hspace{2mm}	
	\subfloat[$R^d$ and $T^d$]{
		\includegraphics[scale=0.5]{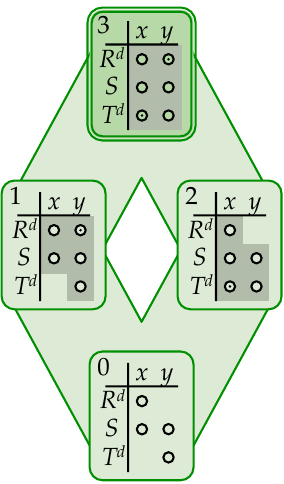}
		\label{Fig_PartialDissociationOrderSimple_c}}		
	\caption{\Autoref{ex:DRsSimple}:
The presence of DRs $R^d$ or $T^d$ changes the \emph{probabilistic dissociation preorder} for $q\datarule R(x),S(x,y),T(y)$:
Several dissociations now have the same probability (shown with shaded areas instead of arrows). Our modified algorithm now returns, for \emph{each minimal safe equivalence class}, the query plan for the top most hierarchical dissociation (shown in dark green and double-lined).}
	\label{Fig_PartialDissociationOrderSimple}
	\vspace{-1mm}
\end{figure}

\begin{example}[RST query with DRs]\label{ex:DRsSimple}
Consider the query $q \datarule $ $R(x),S(x,y),T^d(y)$. 
This query is known to be safe.
We thus expect our
definition of $\rho(q)$ to find that $\rho(q) = \PP{q}$.  
Ignoring that $T^d$ is deterministic, $\preceq$ has two minimal plans corresponding to dissociations
$q^{\Delta_1} \datarule R^{\{y\}}(x,y),S(x,y),T^d(y)$, and 
$q^{\Delta_2} \datarule R(x),$ $S(x,y),T^{d \{x\}}(x,y)$.
Since $\Delta_2$ dissociates only $T^d$, we now know from \autoref{lemma:DetDissociation} that $\PP{q} = \PP{q^{\Delta_2}}$. 
Thus, by using $\preceq$ as before,
and ignoring information about DRs,
we still get the correct answer.
However, evaluating the plan $P_{\Delta_1}$ 
is always unnecessary since
$\Delta_2 \preceq^p \Delta_1$. 
In contrast, without information about DRs, $\Delta_2 \not \preceq^p \Delta_1$, and we would thus have to evaluate both plans. 

\indent
\Autoref{Fig_PartialDissociationOrderSimple} illustrates these ideas with incidence matrices that are augmented in a 5th way:
dissociated variables in DRs do not affect the probability and are now marked with dotted circles ($\dotcirc$) instead of full circles ($\bullet$).
Thus, the preorder $\preceq^p$ is determined entirely by full circles (representing dissociated variables in probabilistic relations). 
However, as before, the correspondence to plans (as implied by the hierarchy between all variables) is still determined by all circles.
\Autoref{Fig_PartialDissociationOrderSimple_b} shows that 
$\rho(q) = \PP{q^{\Delta_2}} = \PP{q}$
since
$\Delta_0 \equiv^p \Delta_2 \preceq^p \Delta_1 \equiv \Delta_3$ 
(equivalence under $\preceq^p$ is shown with former arrows being replaced by broad connectors).
Thus, the query is safe, and it suffices to evaluate only $P_{\Delta_2}$.
Notice that $q$ is \emph{not hierarchical}, but still \emph{safe} since it is in an \emph{equivalence class} with a query that is hierarchical: $\Delta_0 \equiv^p \Delta_2$.

\Autoref{Fig_PartialDissociationOrderSimple_c} shows that for $R^d$ and $T^d$ being deterministic, all three possible query plans (corresponding to $\Delta_1$,  $\Delta_2$, and  $\Delta_3$) form an equivalence class in $\preceq^p$ with $\Delta^0$, and thus give the exact probability.
In other words, the number of hierarchical dissociations ``minimal in $\preceq^p$'' has increased to 3, 
but all of them are now in the same equivalence class and thus have the same probability.
We, therefore, want to modify our algorithm to return just one plan from each ``\emph{minimal safe equivalence class}''. Ideally, we prefer the plan corresponding to $\Delta_3$ (or more generally, the plan for the top hierarchical dissociation in $\preceq$ 
for each minimum safe equivalence class)
since $P_{\Delta_3}$ least constrains the join order between tables: compare
$P_{\Delta_3} = \projp{x,y} \joinp{}{ R(x), S(x,y), T^d(y)}$ with
$P_{\Delta_2} = \projp{x} \joinp{}{ R(x), \projp{y} \joinp{}{S(x,y), T^d(y)}}$.
\markend
\end{example}

We now explain two simple modifications to \autoref{alg:basicAlgorithm} that achieve our desired optimizations described above:
\begin{enumerate}[nolistsep,label=(\arabic*)]

\item 
Let a ``\emph{p-cut}'' be a set of existential variables $\vec x \in \EVar(q)$ s.t.\ $q - {\vec x}$ has at least two query components, each of which has at least one probabilistic table.
Denote by $\MinPCuts(q)$ the set of all 
``\emph{minimal p-cuts}'' and replace $\MinCuts(q)$ with $\MinPCuts(q)$ in \autoref{alg1:line10}.

\item
Denote with $m_p$ the number of probabilistic relations in a query, and
w.l.o.g.\ order the relations in a query $q$ as to first list the probabilistic relations, followed by DRs.
Replace the stopping condition
in \autoref{alg1:line1} with:
\textbf{if~}{$m^p \leq 1$} \textbf{then~}{$\mathcal{P} \leftarrow \{\projpd{\vec x} \joinp{}{R_1(\vec x_1), R_2^d(\vec x_2), \ldots R_m^d(\vec x_m)}\}$}. In other words, if a query has maximal one probabilistic relation, then first join all relations, then project on the head variables.
\end{enumerate}

\begin{theorem}[\Autoref{alg:basicAlgorithm} with DRs]\label{prop:algorithm2}
Above two modifications to \autoref{alg:basicAlgorithm} return one plan 
for each minimal safe equivalence class in ${\preceq^p}$, i.e.\
it returns a minimum number of plans to calculate $\rho(q)$ given schema knowledge about deterministic relations.
\end{theorem}

\begin{example}[\autoref{ex:DRsSimple} continued]
For our simple query
$q\datarule R(x),S(x,y),T^d(y)$, $\MinCuts(q) = \{\{x\}, \{y\}\}$, while $\MinPCuts(q) = \{\{x\}\}$. 
Therefore, the modified algorithm returns $P_{\Delta_2}$ as single plan.
For $q\datarule R^d(x),S(x,y),T^d(y)$, the stopping condition is reached (also, $\MinPCuts(q) = \emptyset $) and the algorithm returns $P_{\Delta_3}$ as single plan (\autoref{Fig_PartialDissociationOrderSimple_c}).
\markend
\end{example}

Note that as before, if the query is safe, then the algorithm produces one single query plan.
Furthermore, if all relations are deterministic, then the returned query plan consists of one multi-join between all relations followed by a single projection:
$\projpd{\vec x} \joinp{}{R(\vec x_i), \ldots, R(\vec x_m)}$. The translation into SQL is thus one single standard deterministic SQL query and the query optimizer is unconstrained to determine the optimal join order between the relations. 
Therefore, \emph{our algorithm conservatively extends deterministic SQL queries to probabilistic SQL queries} in that fully deterministic queries are evaluated exactly as deterministic SQL.

\begin{figure}[t]
	\vspace{-3mm}
    \centering
	\subfloat[$U^d$]{
		\includegraphics[scale=0.5]{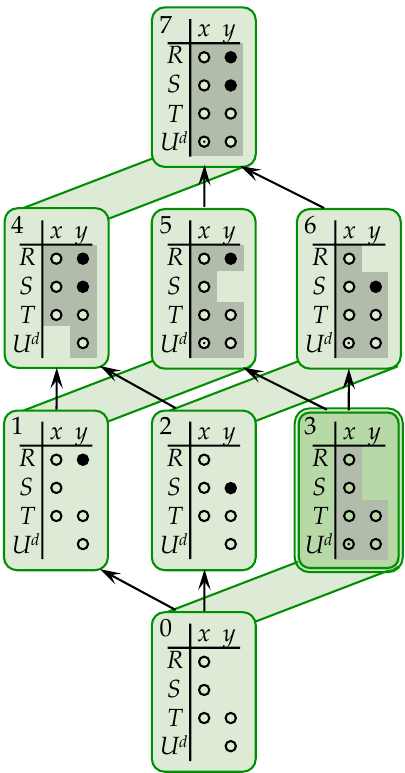}
		\label{Fig_PartialDissociationOrderExample_c}}		
	\hspace{8mm}	
	\subfloat[{$S^d$}]{
		\includegraphics[scale=0.5]{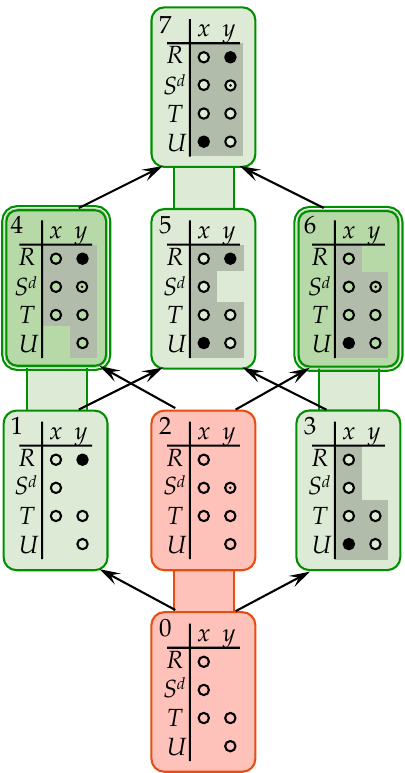}
		\label{Fig_PartialDissociationOrderExample_d}}
	\caption{\Autoref{ex:DRsComplicated} (\autoref{ex:queryPlans} continued):
The presence of DRs (either $U^d$ or $S^d$) changes the probabilistic dissociation preorder and thus the minimal plans returned by our algorithm: $P_{\Delta_3}$, or $P_{\Delta_4}$ and $P_{\Delta_6}$.}
	\label{Fig_PartialDissociationOrderExampleDT}
	\vspace{-1mm}
\end{figure}

\begin{example}[\autoref{ex:queryPlans} continued]\label{ex:DRsComplicated}
First consider relation $U$ to be deterministic: 
$q_1 \datarule R(x), S(x), T(x,y), U^d(y)$. 
\Autoref{Fig_PartialDissociationOrderExample_c}
shows that 
$\rho(q) = \PP{q^{\Delta_3}} = \PP{q}$
and thus the query is safe.
Put differently, there is only one \emph{minimal safe equivalence class} 
$\Delta_0 \equiv^p \Delta_3$, and 
our modified algorithm returns $P_{\Delta_3}$ as single minimal plan. 
Next consider $S$ to be deterministic: 
$q_2 \datarule R(x), S^d(x), T(x,y), U(y)$. 
The query is now in an equivalence class with $\Delta_2$ (\autoref{Fig_PartialDissociationOrderExample_d}). However, neither $\Delta_0$ nor $\Delta_2$ is hierarchical and thus the query is hard. Also, $\Delta_3$ is in a minimal safe equivalence class with $\Delta_6$, the latter of which has fewer constraints on the joins. Thus, the algorithm returns $P_{\Delta_4}$ and $P_{\Delta_6}$ as the least constrained plans, one from each minimum safe equivalence class.
\markend
\end{example}

We end with a short commend on actual implementation in SQL: In practice, deterministic relations do not have a probabilistic attribute, which simplifies the calculations. 
Consider a plan $P=\joinp{x,z}{T^d(z), R^d(x,z), M(x,y,z,u)}$. This query is specified as join with standard semantics $T(x,y,z,u,p) \datarule T(z), R(x,z), M(x,y,z,u,p)$ over the input relations with $p$ as the probability attribute.

\subsection{Functional dependencies (FDs)}\label{sec:FDOptimization}

Knowledge of functional dependencies (FDs), such as keys, can also restrict
the number of necessary minimal plans.
A well known example is the query $q \datarule R(x), S(x,y), T(y)$ from \autoref{ex:DRsSimple}: It becomes safe 
if we know that $S$ satisfies the FD $\Gamma: x \rightarrow y$ and has a unique safe plan that corresponds to dissociation $\Delta_2$. In other words, we would like our modified algorithm to take $\Gamma$ into account and to not return the plan corresponding to dissociation $\Delta_1$.

Let $\bm{\upGamma}$ be the set of FDs on $\Var(q)$ 
consisting of the union 
of FDs on every atom
$R_i$ in $q$.  
As usual, denote ${\vec x}_i^+$ the ``\emph{closure}'' of a set of
attributes ${\vec x}_i$ under $\bm{\upGamma}$,
i.e.\ ${\vec x}_i^+$  is the smallest set of variables that contains $\vec x_i$, and contains $z$ whenever it contains $y$ and 
$y \leftarrow z$ is a FD in $\bm{\upGamma}$.\footnote{E.g., if $\vec x = \{y \}$ and $\bm{\upGamma} = \{ x \rightarrow y, y \rightarrow z, z \rightarrow u \}$, then ${\vec x}^+ = \{y, z, u \}$.}
Then we show:

\begin{lemma}[Dissociation and FDs]\label{lemma:FDs1}
Dissociating an atom $R_i (\vec x_i)$ 
on any variable 
$y \in ({\vec x}_i^+  \!- {\vec x}_i)$ does not change the probability of the query.
\end{lemma}

\noindent
In other words, dissociating a table on a variable that is functionally dependent on the existing variables does not change the probability. 
This lemma is similar to \autoref{lemma:DetDissociation} for DRs. 
Next, for any query $q$, denote $q^+$ the query where each atom $R_i(\vec x_i)$ 
is replaced with $R_i(\vec x_i^+)$, 
and call $q^+$ the closure of $q$.
Call a query ``\emph{closed}'' if $q^+ = q$, and
call a dissociation $\Delta$ closed if $q^\Delta$ is closed,  
i.e.\ $(\vec x_i \cup \vec y_i)^+ = (\vec x_i \cup \vec y_i)$.  
We can then further refine our \emph{probabilistic dissociation preorder} ${{\preceq^p}}$ by:
\begin{align*}
  	\Delta {\preceq^p} \Delta' \Leftrightarrow \forall i \textrm{ with } R_i \textrm{ probabilistic}: 
	({\vec x}_i \cup \vec y_i)^+ \subseteq ({\vec x}_i \cup {\vec y}_i')^+
\end{align*}

\noindent
In other words, we only need to consider closed dissociations.
As a consequence, using our refined definition of ${\preceq^p}$ allows us to further reduce the number of \emph{minimal safe equivalence classes}. 
We next state a result by \cite{DBLP:conf/icde/OlteanuHK09} in our notation:
\begin{proposition}[Safety and FDs~\protect{\cite[Prop.~IV.5]{DBLP:conf/icde/OlteanuHK09}}] A query $q$
  is safe 
  under FDs $\bm{\upGamma}$
  iff $q^+$ is hierarchical.
\end{proposition}

This justifies our third modification to \autoref{alg:basicAlgorithm} for enumerating the minimum number of plans for computing 
$\rho(q)$ over a database that
satisfies FDs $\bm{\upGamma}$ and has DRs: 

\begin{enumerate}[nolistsep,label=(\arabic*)]
\setcounter{enumi}{2}		

\item 
At each recursive call of \autoref{alg:basicAlgorithm}, just before \autoref{alg1:line1}, replace $q$ with its closure $q^+$.
\end{enumerate}

\begin{theorem}[\autoref{alg:basicAlgorithm} with DRs and FDs]\label{prop:allQueryPlansFDs}
Above three modifications to \autoref{alg:basicAlgorithm} return one plan for each minimal safe equivalence class in ${\preceq^p}$, i.e.\
it returns a minimum number of plans to calculate $\rho(q)$ in the presence of deterministic relations and functional dependencies.
\end{theorem}

It is easy to see that our modified algorithm returns one single
plan iff the query is safe (taking into account its structure,
DRs and FDs). 
It is thus a
\emph{strict generalization of all known safe self-join-free
  conjunctive
  queries}~\cite{DBLP:journals/vldb/DalviS07,DBLP:conf/icde/OlteanuHK09}.
In particular, we can reformulate the known safe query dichotomy~\cite{DBLP:journals/vldb/DalviS07} in our notation very succinctly:

\begin{corollary}[Dichotomy]\label{cor:dichotomy}
	$\PP{q}$ is in PTIME iff there exists a dissociation of $q^+$ 
	that 
	is hierarchical and that 
	dissociates only deterministic relations.
	In particular, if all relations are probabilistic then $\PP{q}$ is in PTIME iff $q^+$ is hierarchical.	
\end{corollary}

\begin{corollary}[Dichotomy in plans]
$\PP{q}$ can be calculated in PTIME iff our modified algorithm returns one single plan.
\end{corollary}

\noindent
To see what \autoref{cor:dichotomy} says, assume first that there are no FDs: 
Then $q$ is in PTIME iff there exists a hierarchical dissociation $\Delta$ that dissociates only DRs.
If there are FDs, then we first compute the closure $q^+$ (called ``\emph{full chase}'' in \cite{DBLP:conf/icde/OlteanuHK09}), then apply the same criterion to $q^+$.

\begin{figure}[t]
	\vspace{-3mm}
    \centering
	\subfloat[$\Gamma: y \rightarrow x$]{
		\includegraphics[scale=0.5]{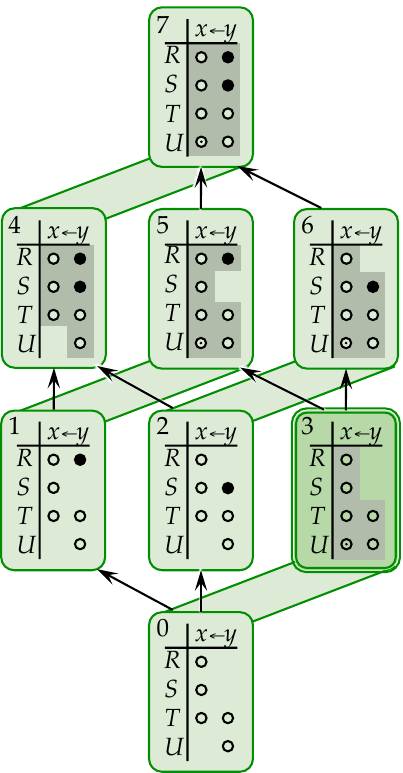}
		\label{Fig_PartialDissociationOrderExample_e}}		
	\hspace{8mm}	
	\subfloat[{$\Gamma: x \rightarrow y$}]{
		\includegraphics[scale=0.5]{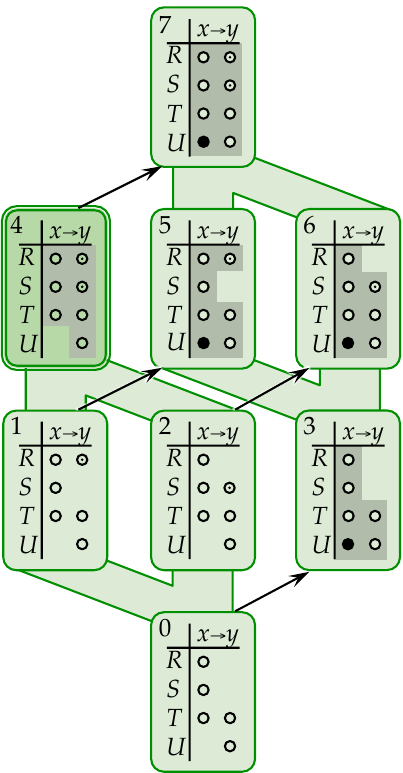}
		\label{Fig_PartialDissociationOrderExample_f}}
	\caption{\Autoref{ex:FDsComplicated}: (\autoref{ex:DRsComplicated} continued):
The presence of FDs also changes the probabilistic dissociation preorder and thus the minimal plans returned by our algorithm: either $P_{\Delta_3}$ (as in \autoref{Fig_PartialDissociationOrderExample_c}) or $P_{\Delta_4}$.}
	\label{Fig_PartialDissociationOrderExampleFD}
	\vspace{-1mm}
\end{figure}

\begin{example}[\autoref{ex:DRsComplicated} continued]\label{ex:FDsComplicated}
We illustrate here how FDs can change the ``\emph{probabilistic dissociation preorder}''. 
Analogously to DRs, we mark variables in the incidence matrix that are dissociated as result of an FD and do not affect the probability with a dotted circle ($\dotcirc$) instead of a bullet ($\bullet$).
As before, the preorder ${\preceq^p}$ is determined entirely by full circles (representing dissociated variables in probabilistic relations that are not implied by FDs on the other variables). 
However, as before, the correspondence to plans (as implied by the hierarchy between all variables) is still determined by all circles.

First consider $\Gamma: y \rightarrow x$:
\Autoref{Fig_PartialDissociationOrderExample_e} shows that this FD
leads to the same preorder as for DR $U^d$ from \autoref{Fig_PartialDissociationOrderExample_c}. Thus, the minimal plan is also $P_{\Delta_3}$.
Next consider $\Gamma: x \rightarrow y$:
\Autoref{Fig_PartialDissociationOrderExample_f} shows that there are now only two equivalence classes, both of which are safe, and one of which is minimal: 
$\Delta_0 \equiv^p \Delta_1 \equiv^p \Delta_2 \equiv^p \Delta_4$.
Among those, only $\Delta_4$ is hierarchical and is thus the one returned by the algorithm.
\markend
\end{example}

\section{Multi-query Optimizations}\label{sec:optimizations}
So far, we enumerate all minimal query plans, then take the minimum score of those plans in order to calculate the propagation score $\rho(q)$.
In this section, we develop three optimizations that can considerably reduce 
the necessary calculations for evaluating all minimal query plans. 
Notice that these three optimizations and the previous optimizations using schema knowledge are orthogonal and can be arbitrarily combined in the obvious way.
We use the following example to illustrate the first two optimizations:

\begin{example}[Multi-query optimizations]\label{ex:optimizationExample}
Consider the query
$q \datarule R(x,z),S(y,u),$ $T(z), U(u), M(x,y,z,u)$.
Our default is to evaluate all 6 minimal plans returned by \specificref{Algorithm}{alg:basicAlgorithm}, 
then take the minimum score (shown in \figref{Fig_AllOptimizedQueryPlans}{a}).
\Figref{Fig_AllOptimizedQueryPlans}{b}
and \figref{Fig_AllOptimizedQueryPlans}{c}
illustrate the optimized evaluations after applying Opt.~1, or Opt.~1 and Opt.~2, respectively.
\markend
\end{example}

\subsection{Opt.\ 1: One single query plan}

Our first optimization creates one single query plan by \emph{pushing the min-operator down into the leaves}. It thus avoids calculations when it is clear that other calculations must have lower bounds. 
The idea is simple: Instead of creating one query subplan for each min-cut  $\vec y \in \MinCuts(q)$ in \autoref{alg1:line12} of \autoref{alg:basicAlgorithm}, 
the adapted \autoref{alg:opt1Algorithm} takes the minimum score over those min-cuts, for each tuple of the head variables in \autoref{algDet:P:connected}. It thus creates one single query plan.
\Figref{Fig_AllOptimizedQueryPlans}{b} shows this single plan for our running example.

\begin{algorithm2e}[t]
\caption{Optimization~1 recursively pushes the min operator into the leaves and generates one single query plan.}\label{alg:opt1Algorithm}
\SetKwInput{Algorithm}{Recursive algorithm}
\SetKwFunction{SP}{SP}
\SetKwFor{ForAll}{forall}{do}{endfch}	
\LinesNumbered

\Algorithm{\FuncSty{SP (SinglePlan)}}
\KwIn{Query $q(\vec z) \datarule R_1(\vec x_1), \ldots, R_{m_p}(\vec x_{m_p}), \ldots, R_m^d(\vec x_m)$}
\KwOut{Single query plan $P$}
\BlankLine
		
\lIf{$m^p \leq 1$}{$\mathcal{P} \leftarrow \{\projpd{\vec x} 
	\joinp{}{R_1(\vec x_1), R_2(\vec x_2), \ldots, R_m^d(\vec x_m)}\}$}
\Else{
	\uIf{$q$ is disconnected}{
		Let $q = q_1, \ldots, q_k$ be the query components of $q$ \;
		\lForEach{$q_i$}{$\HVar(q_i) \leftarrow \HVar(q) \cap \Var(q_i)$}
		$P \leftarrow \joinp{}{\SP (q_1 ), \ldots, \SP (q_k )}$ \;\label{algDet:P:disconnected}
	}
	\Else{
		Let $\MinPCuts(q) = \{\vec y_1, \ldots, \vec y_j\}$ \;
		\lForEach{$\vec y_i$}{$q'_i \leftarrow q_i$ 
			with $\HVar(q'_i) \leftarrow \HVar(q) \cup \vec y_i$} 
		$P \leftarrow  \minp{\projp{\vec y_1} \SP (q_1'), \ldots, 
			\projp{\vec y_j} \SP (q_j')}$ \;\label{algDet:P:connected}
	}
}
\end{algorithm2e}

\subsection{Opt.~2: Re-using common subplans}\label{sec:reusingCommonSubplans}
Our second optimization calculates only once, then \emph{re-uses common subplans shared between the minimal plans}.
Thus, whereas our first optimization reduces computation by combining plans at their roots, the second optimization stores and re-uses common results in the branches by re-using views.
The adapted algorithm works as follows: It first traverses the whole single query plan and remembers each subplan by the atoms used and its head variables in a HashSet. If it sees a subplan twice, it creates a new view for this subplan, mapping the subplan to a new view definition. The actual plan then uses these views whenever possible. The order in which the views are created assures that the algorithm also discovers and exploits \emph{nested common subexpressions}.
\Figref{Fig_AllOptimizedQueryPlans}{c} shows the generated views and plans for our running example: Notice that the main plan and the view $V_3$ both re-use views $V_1$ and $V_2$.

\begin{figure}[t]
\centering
{\includegraphics[scale=0.82]{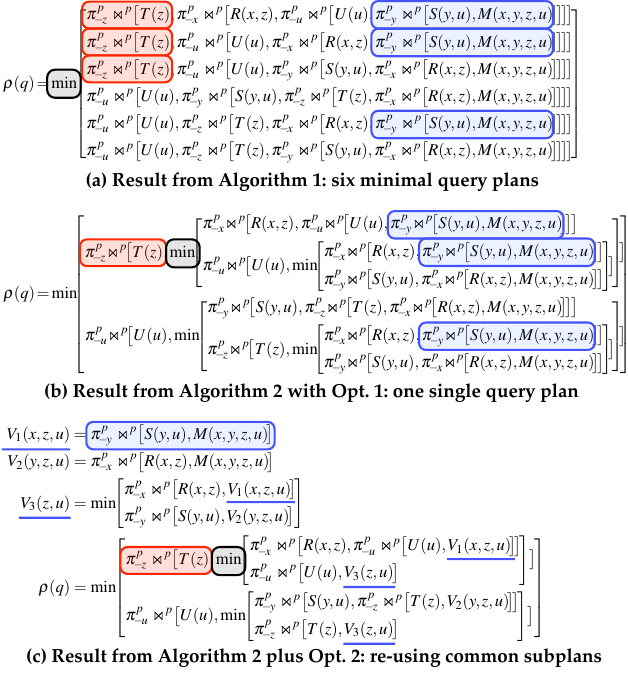}}
\vspace{-4mm}
\caption{
\Autoref{ex:optimizationExample} before and after applying optimizations 1 and 2.
  }
\label{Fig_AllOptimizedQueryPlans}
\end{figure}

\subsection{Opt.~3: Deterministic semi-join reduction}\label{sec:opt3}
The most expensive operations in probabilistic query plans are the group-bys for the probabilistic project operations. These are often applied early in the plans to tuples which are later pruned and do not contribute to the final query result. 
Our third optimization is to first apply a \emph{full semi-join reduction on the input relations} before starting the probabilistic evaluation from these \emph{reduced input relations}. 
 
We like to draw here an important connection to \cite{DBLP:conf/icde/OlteanuHK09}, which introduces the idea of ``lazy plans'' and shows orders of magnitude performance improvements for safe plans by computing confidences not after each join and projection, but rather at the very end of the plan.
We note that our semi-join reduction \emph{serves the same purpose} with similar performance improvements and also apply for safe queries.
The advantage of semi-join reductions, however, is that we do not require any modifications to the query engine.

\section{Experiments}\label{sec:experiments}

\noindent
We are interested in the efficiency (``how fast?'') and the quality (``how good?'') of ranking by dissociation as compared to 
exact probabilistic inference, 
Monte Carlo simulation (MC), 
and standard deterministic query evaluation (``deterministic SQL''). 
Our experiments, thus, investigate the following questions:
\emph{
How much can our three optimizations improve dissociation?
How fast is dissociation as compared to exact probabilistic inference, MC, and deterministic query evaluation? 
How good is the ranking from dissociation as compared to MC and ranking by lineage size? 
What are the most important parameters determining the ranking quality for each of the three methods?}

\introparagraph{Ranking quality}
We use \textit{mean average precision} (MAP) to evaluate the quality of a ranking by comparing it against the ranking from exact probabilistic inference as ground truth (GT). 
MAP rewards rankings that place relevant items earlier;
the best possible value is 1, and the worst possible 0.
We use a variant of ``Average Precision at 10'' defined as $\mathrm{AP}@10 := \frac{\sum_{k=1}^{10} \mathit{\mathrm{P}@k}}{10}$. Here, P@$k$ is the precision at the $k$th answer, i.e., the fraction of top $k$ answers according to GT that are also in the top $k$ answers returned.
Averaging over several experiments yields MAP~\cite{Manning:2008:IIR:1394399}.
We use a variant of the analytic method proposed in~\cite{McSherryN08} to  calculate $\mathrm{AP}$ in the presence of ties.
As baseline for no ranking, we 
use ``\emph{random average precision}''~\cite{DetwilerGLST2009:ICDE}, i.e.\ 
we assume all tuples have the same score and are thus tied for the same position.

\introparagraph{Exact probabilistic inference} 
Whenever possible, we calculate GT rankings with a tool called  
SampleSearch~\cite{gogate&dechter11,DBLP:conf/uai/GogateD10},
which also serves to evaluate the cost of exact probabilistic inference.
We describe the method of evaluating the lineage DNF with SampleSearch  in~\cite{DBLP:journals/tods/GatterbauerS14}.

\introparagraph{Monte Carlo (MC)}
We evaluate the MC simulations for different numbers of samples and write MC($x$) for $x$ samples. For example, AP for MC(10k) is the result of sampling the individual tuple scores 10\,000 times from their lineages and then evaluating AP once over the sampled scores. The MAP scores together with the standard deviations are then the average over several repetitions.

\introparagraph{Ranking by lineage size}
To evaluate the potential 
of non-probabi\-listic methods for ranking answers,
we also rank the answer tuples by decreasing size of their lineages; i.e.\ number of clauses in their DNFs.
Intuitively, a larger lineage size indicates that an answer has more ``support'' and should thus be more important. 
Notice that, in contrast to other methods, we ignore here the weight of support and correlations.

\begin{figure}[t]
\centering
\vspace{-3mm}
\renewcommand{\tabcolsep}{0.9mm}
\renewcommand{\arraystretch}{0.95}
\subfloat[$Q(a)$]{
\label{Fig_TPCHhierarchies_a}
\begin{tabular}[t]{@{\hspace{1pt}} >{$}r<{$}|>{$}c<{$} | >{$}c<{$} >{$}c<{$} >{$}c<{$} >{$}c<{$} @{\hspace{1pt}}}
				& a    	& s		& p 	& n 		\\
				\hline
	S			&\circ	&\circ	&		& 	 		\\
	\mathit{PS}	&		&\circ	& \circ & 		 	\\
	P			&		& 		& \circ	& \circ
\end{tabular}
}
\hspace{4mm}
\subfloat[$Q^S(a)$]{
\label{Fig_TPCHhierarchies_b}
\begin{tabular}[t]{@{\hspace{1pt}} >{$}r<{$}|>{$}c<{$} | >{$}c<{$} >{$}c<{$} >{$}c<{$} >{$}c<{$} @{\hspace{1pt}}}
				& a    	& s		& p 	& n 		\\
				\hline
	S			&\circ	&\graycell\circ	&\graycell\bullet& 	 		\\
	\mathit{PS}	&		&\graycell\circ	&\graycell \circ & 		 	\\
	P			&		&				&\graycell \circ & \graycell\circ
\end{tabular}
}
\hspace{4mm}
\subfloat[$Q^P(a)$]{
\label{Fig_TPCHhierarchies_c}
\begin{tabular}[t]{@{\hspace{1pt}} >{$}r<{$}|>{$}c<{$} | >{$}c<{$} >{$}c<{$} >{$}c<{$} >{$}c<{$} @{\hspace{1pt}}}
				& a    	& s		& p 	& n 		\\
				\hline
	S			&\circ	&\graycell\circ	&				& 	 		\\
	\mathit{PS}	&		&\graycell\circ	&\graycell \circ & 		 	\\
	P			&		&\graycell\bullet&\graycell \circ & \graycell\circ
\end{tabular}
}
\caption{
Parameterized Deterministic SQL query $Q(a)$ over TPC-H.
Incidence matrices for TPC-H query $Q(a)$ and its two minimal hierarchical dissociations from either dissociating table $S$ or table $P$.}
\label{Fig_TPCHhierarchies}
\end{figure}

\introparagraph{Setup 1}
We use the {TPC-H} DBGEN data generator~\cite{tpc-h}
to generate a 1GB database to which we add a column \sql{P} for each table and store it in PostgreSQL 9.2~\cite{postgresql}.
We assign to each input tuple $i$ 
a random probability $p_i$ uniformly chosen from the interval $[0, p_{i\max}]$, resulting in an expected average input probability $\avg[p_{i}] = p_{i\max}/2$. 
By using databases with $\avg[p_{i}] < 0.5$, we can avoid output probabilities close to $1$ for queries with very large lineages. 
We use the following intuitive parameterized hard query:
\begin{align*}
	&Q(a)  \datarule S(\underline{s},a), \mathit{PS}(s,u), P(\underline{u},n), s \leq \$1, n \sql{ like } \$2 \\
&
\parbox{30mm}{
\vspace{-1mm}
	\sql{
		\begin{tabbing}
		\hspace{3mm}\=\hspace{3mm}\=\hspace{3mm}\=\hspace{0cm}\=\hspace{0cm}\=\kill
		select distinct s\_nationkey	
		from Supplier, Partsupp, Part	\\
		where s\_suppkey = ps\_suppkey	
		and ps\_partkey = p\_partkey	\\
		and s\_suppkey $<=$ \$1			
		and p\_name like \$2	
		\end{tabbing}
\vspace{-3mm}
}}
\notag
\end{align*}
Relations $S$, $\mathit{PS}$ and $P$ represent tables \sql{Supplier}, \sql{PartSupp} and \sql{Part}, respectively. Variable $a$ stands for attribute \sql{nationkey} (``answer tuple''), $s$ for \sql{suppkey}, $u$ for \sql{partkey} (``unit''), and $n$ for \sql{name}.
The probabilistic version of this query is:
``\emph{Which nations (as determined by the attribute \sql{nationkey}) are most likely to have suppliers with \sql{suppkey} $\leq {\$1}$ that supply parts with a $\sql{name like } \$2$?}''
Parameters {\$1} and {\$2} allow us to 
change the lineage size.
Tables \sql{Supplier}, \sql{Partsupp} and \sql{Part} have 10k, 800k and 200k tuples, respectively.
There are 25 different numeric attributes for \sql{nationkey} and our goal is to efficiently rank these 25 nations. 
As baseline for not ranking, we use random average precision for 25 answers, which leads to MAP@10 $\approx 0.220$.
This query has the following two minimal query plans (\autoref{Fig_TPCHhierarchies}):
{\fontsize{9}{20}	
\begin{align*}
	& P_S(a)  = \projpd{a}\joinp{u}{\projpd{a,u}\joinp{s}{
		S(\underline{s},a),\mathit{PS}(s,u), s \leq \$1},P(\underline{u},n), n \sql{ like } \$2}	\\
	& P_P(a) = \projpd{a} \joinp{s}{
		S(\underline{s},a), \projpd{s} \joinp{u}{\mathit{PS}(s,u), s \leq \$1,
			P(\underline{u},n), n \sql{ like } \$2 }}
\end{align*}}
Here, $P_S$ and $P_P$ stand for the plans that dissociate tables \sql{Supplier} or \sql{Part}, respectively. 
We take the minimum of the two bounds to determine the propagation score for each answer tuple $a$.
We will also evaluate the speed-up from applying the following deterministic semi-join reduction (Optimization 3) on the input tables
and then reusing intermediate query results across both query plans:
\begin{align*}		
	\mathit{PS}^*(s,u) & \datarule 
		\mathit{PS}(s,u), S(\underline{s},a), P(\underline{u},n),
		s \leq \$1, n \sql{ like } \$2\\
	\mathit{P}^*(u,n) & \datarule 
		P(\underline{u},n),\mathit{PS}^*(s,u)	
\end{align*}

\begin{figure*}[t]
    \centering
	\vspace{-4mm}
	\subfloat[$4$-chain query]
       	{\includegraphics[scale=0.425]{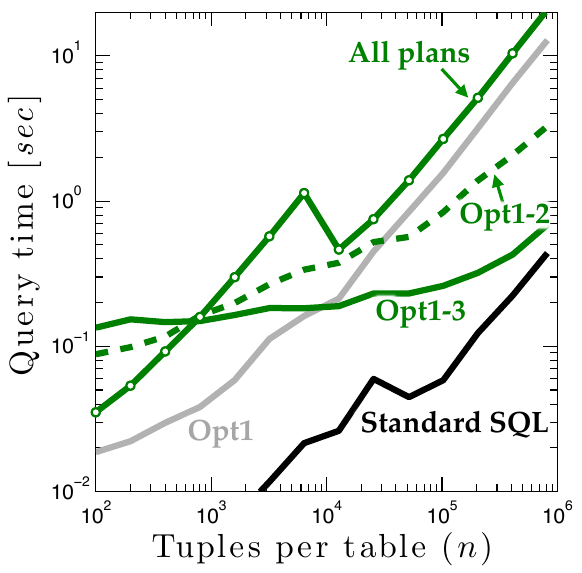}
    	\label{Fig_SyntheticChain4}}
	\hspace{-0.1mm}
	\subfloat[$7$-chain query]
		{\includegraphics[scale=0.425]{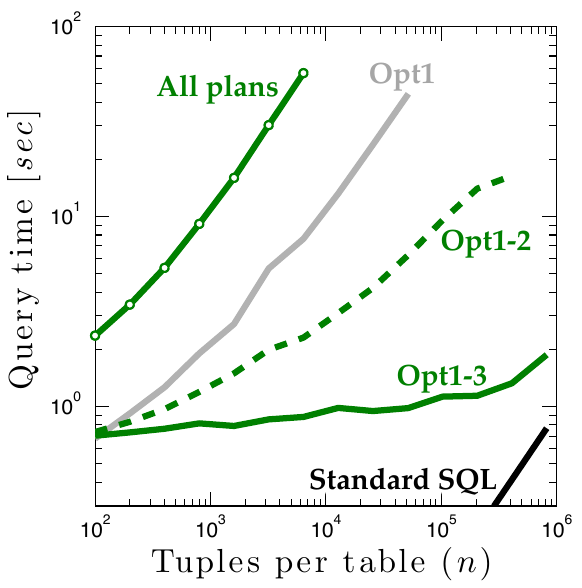}
		\label{Fig_SyntheticChain7}}
	\hspace{-0.1mm}
	\subfloat[$2$-star query]
		{\includegraphics[scale=0.425]{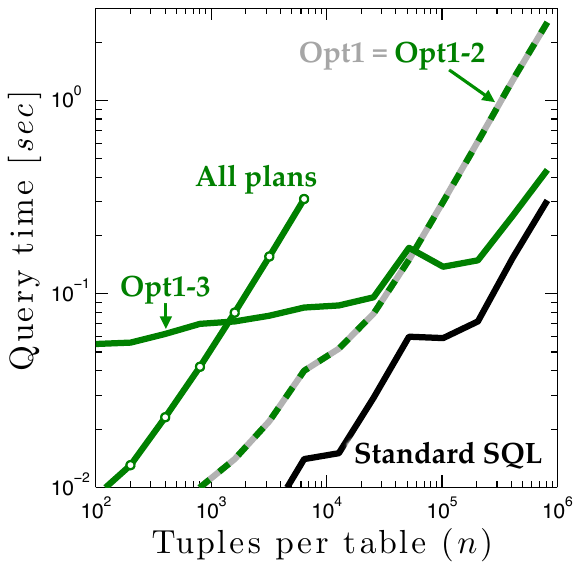}
		\label{Fig_SyntheticStar2}}
    \hspace{-0.1mm}
	\subfloat[$5$-star query]
       	{\includegraphics[scale=0.425]{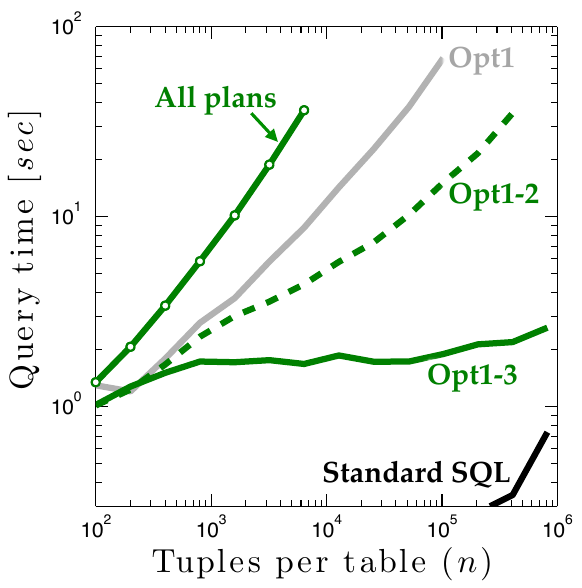}
    	\label{Fig_SyntheticStar5}}
	\hspace{5mm}
    \subfloat[\$2 = \%red\%green\%]
		{\includegraphics[scale=0.32]{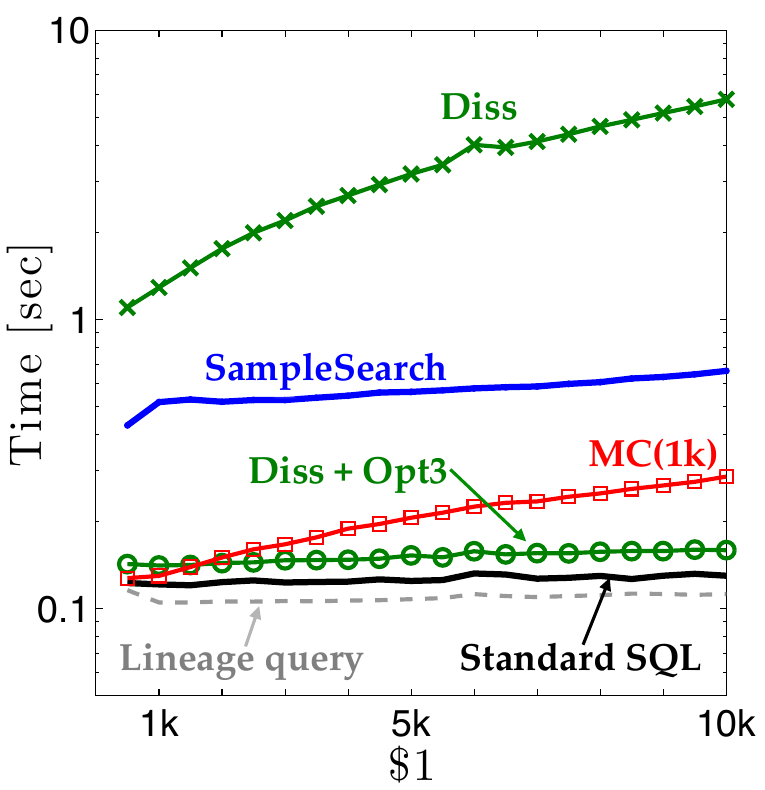}
		\label{Fig_VLDBJ_TPCH_timing1}}					
	\hspace{-0.2mm}
	\subfloat[\$2 = \%red\%]
		{\includegraphics[scale=0.32]{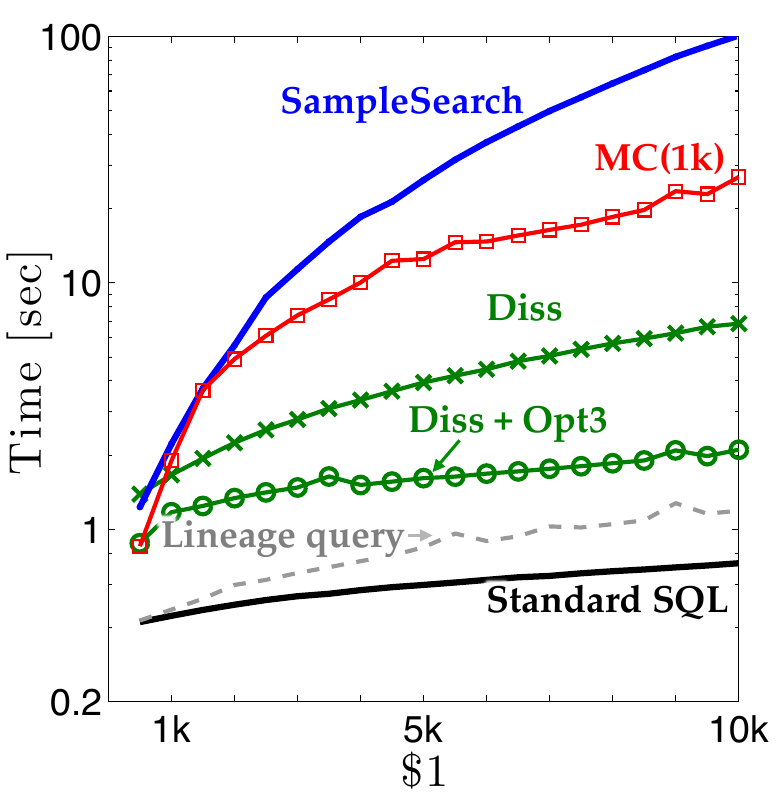}
		\label{Fig_VLDBJ_TPCH_timing2}}
	\hspace{-0.2mm}
    \subfloat[\$2 = \%]
		{\includegraphics[scale=0.32]{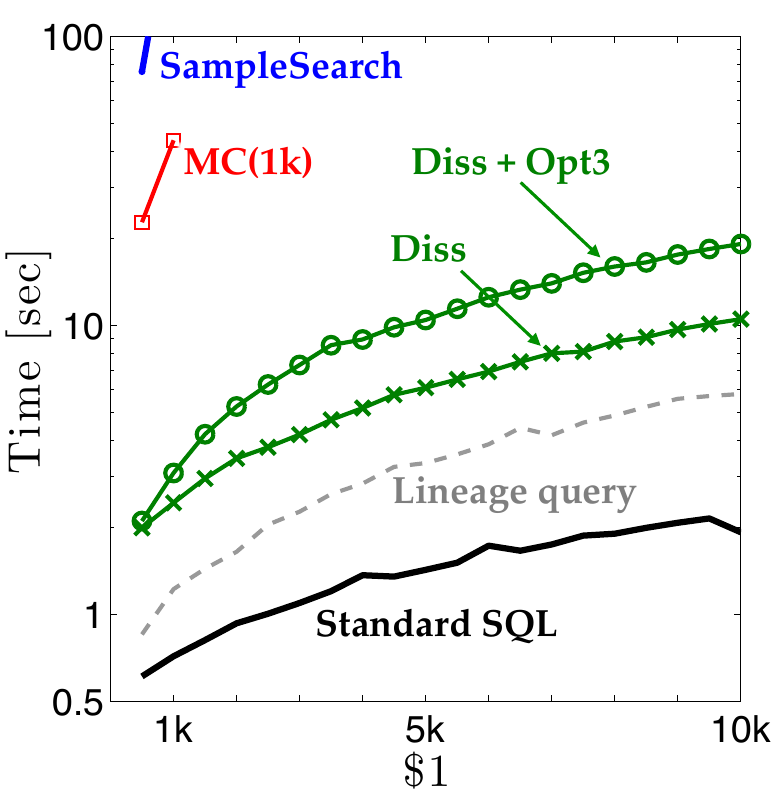}
		\label{Fig_VLDBJ_TPCH_timing3}}	
	\hspace{-0.2mm}
    \subfloat[Combining (a)-(c)]
		{\includegraphics[scale=0.32]{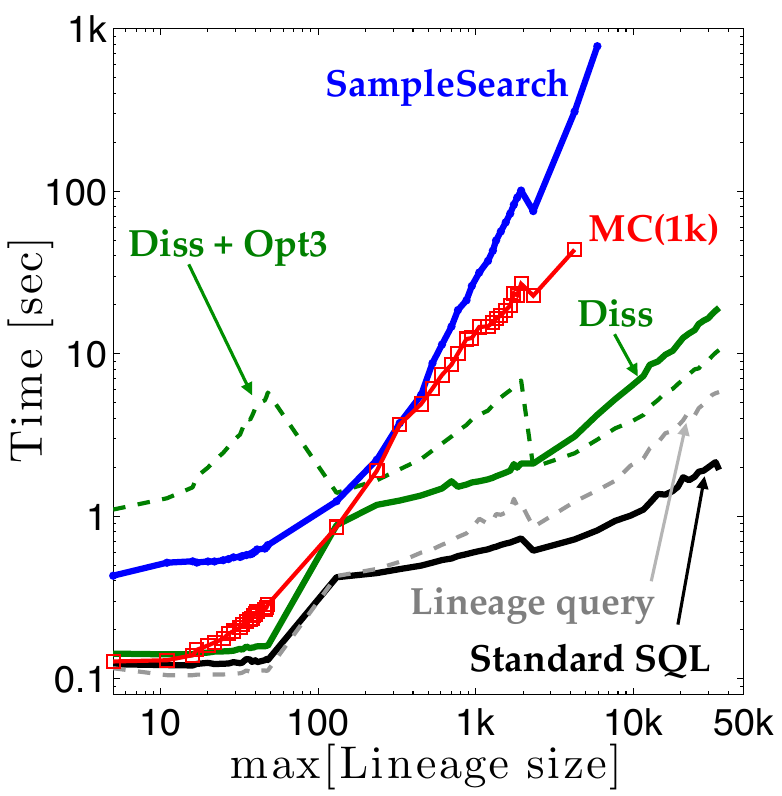}
		\label{Fig_VLDBJ_TPCH_timing4}}		
	\caption{Timing results:
	(a)-(d) For increasing database sizes and constant cardinalities, our optimizations approach deterministic SQL performance.
	(e)-(h) For the TPC-H query, the best evaluation for dissociation is within a factor of $6$ of that for deterministic query evaluation.}
    \label{Fig_VLDBJ_TPCH_timing}
\end{figure*}

\introparagraph{Setup 2}
We compare the runtimes for our three optimizations against evaluation of all plans 
for $k$-chain queries and $k$-star queries
over varying database sizes (data complexities) and varying query sizes (query complexities). 
The $k$-chain queries return many results, whereas the $k$-star queries return one tuple, 
thus representing a Boolean query:
\begin{align*}
	&{\textrm{$k$-chain: }}q(x_0, x_k)  \datarule R_1(x_0, x_1), R_2(x_1,x_2), \ldots, R_k(x_{k-1},x_k) \\
	&{\textrm{$k$-star: }}q(a)	 \datarule R_1(a, x_1), R_2(x_2), \ldots, R_k(x_k), R_0(x_1, \ldots, x_k)
\end{align*}

\noindent
We denote the length of the query with $k$, the number of tuples per table with $n$, and the domain size with $N$. We use integer values which we uniformly draw from the range $\{1,2, \ldots N\}$. 
Thus, the parameter $N$ determines the \emph{selectivity} and is varied as to 
keep the answer cardinality constant around 20-50 for chain queries, or the
answer probability between 0.90 and 0.95 for star queries.
For the data complexity experiments, we vary the number of tuples $n$ per table between $100$ and $10^6$. 
For the query complexity experiments, we vary $k$ between $2$ and $8$ for chain queries.
For these experiments, the optimized (and often \emph{extremely long}) SQL statements are ``calculated'' in JAVA and then sent to Microsoft SQL server 2012~\cite{sqlServer}.
To illustrate with numbers, we have to issue 429 query plans in order to evaluate the $8$-chain query (see \autoref{table:numberMinimalQueryPlans}). Each of these plans joins 8 tables in a different order. Optimization 1 then merges those plans together into one truly gigantic single query plan.

\subsection{Runtime experiments}\label{sec:exOptimizations}

\begin{tcolorbox}[top=-2mm]
\begin{questionW}\label{question11}
When and how much do our three query optimizations speed up query evaluation?
\end{questionW}

\begin{resultW}\label{lesson11}
Combining plans (Opt.~1) and using intermediate views (Opt.~2) almost always speeds up query times.
The semi-join reduction (Opt.~3) 
slows down queries with high selectivities, but 
considerably speeds up queries with small selectivities,
bringing probabilistic query evaluation close to deterministic evaluation.
\end{resultW}
\end{tcolorbox}

\introparagraph{Setup 2}
\Autoref{Fig_SyntheticChain4} 
to
\autoref{Fig_SyntheticStar5}
show the results for increasing database sizes,
and \autoref{Fig_SyntheticQuerySize} for increasing query sizes.
For example, 
\autoref{Fig_SyntheticChain7} shows the performance of computing a 
7-chain query
which has 132 hierarchical dissociations.  
Evaluating each of these queries separately takes a long time,
while our optimization techniques 
bring evaluation time close to deterministic query evaluation.  
Especially on larger databases, where the running time is I/O bound, the
penalty of the probabilistic inference is only a factor of 2-3 in this example.
Notice here the trade-off between optimization~1,2 and optimization 1,2,3: Optimization 3 applies a full semi-join reduction on the input relations before starting the probabilistic plan evaluation from these reduced input relations. This operation imposes a rather large constant overhead, both at the query optimizer and at query execution. For larger databases (but constant selectivity), this overhead is amortized. 
Without self-join reductions, opimization~1,2 would not execute on the $6$-star query with 720 minimal query plans at all (``The query processor ran out of internal resources and could not produce a query plan'').
In practice, this suggests that dissociation allows us a large space of optimizations depending on the query and particular database that can conservatively extend the space of optimizations performed today in deterministic query optimizers.

\introparagraph{Setup 1} \Autoref{Fig_VLDBJ_TPCH_timing1} to \autoref{Fig_VLDBJ_TPCH_timing3} 
compare the running times for
dissociation with two minimal query plans  (``Diss''), 
dissociation with semi-join reduction (``Diss + Opt3''), exact probabilistic inference (``SampleSearch''), 
Monte Carlo with 1000 samples (``MC(1k)''),
retrieving the lineage only (``Lineage query''), 
and deterministic query evaluation without ranking (``Standard SQL'').
As experimental platform, we use PostgreSQL 9.2 on a 2.5 Ghz Intel Core i5 with 16G of main memory. We run each query 5 times and take the average execution time.
We fixed $\$2$ to $\sql{'\%red\%green\%'}$, $\sql{'\%red\%'}$ or $\sql{'\%'}$ and varied 
$\$1 \in \{500, 1000, $ $\ldots 10k\}$.
\Autoref{Fig_VLDBJ_TPCH_timing4} combines all three previous plots and shows the times as function of the maximum lineage size (i.e. the size of the lineage for the tuple with the maximum lineage) of a query.
We see here again that the semi-join reduction speeds up evaluation considerably for small lineage sizes
(\autoref{Fig_VLDBJ_TPCH_timing1} shows speedups of up to 36).
For large lineages, however, the semi-join reduction is an unnecessary overhead, as most tuples are participating in the join anyway (\autoref{Fig_VLDBJ_TPCH_timing2} shows overhead of up to 2).

\begin{figure}[t]
    \centering
	\vspace{-3mm}
	\subfloat[$k$-chain queries]
       	{\includegraphics[scale=0.41]{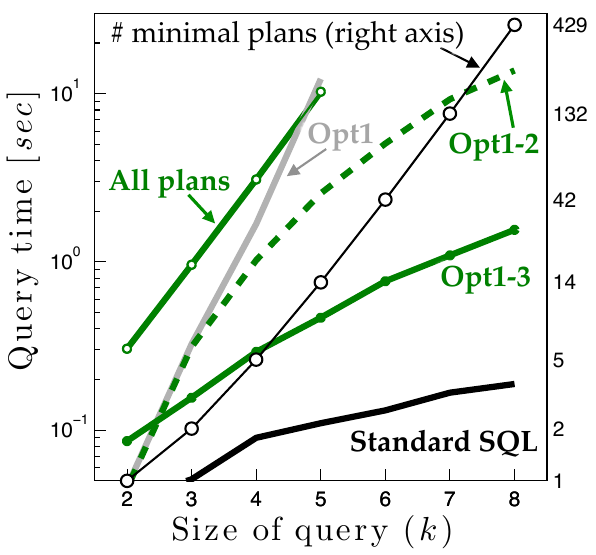}
    	\label{Fig_SyntheticChainQuerySize}}
	\hspace{-2mm}
	\subfloat[$k$-star queries]
		{\includegraphics[scale=0.41]{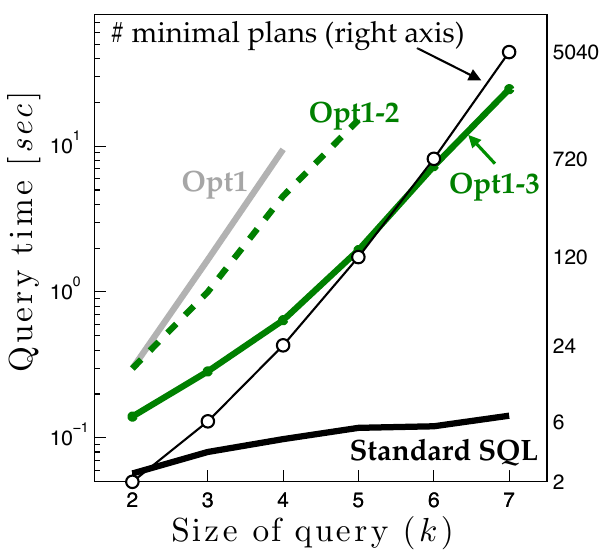}
		\label{Fig_SyntheticStarQuerySize}}
	\caption{While the query complexity is exponential (number of minimal plans are shown on the right side),
	our optimizations can even evaluate a very large number of minimal plans (here shown up to 429 for a 8-chain query and 5040 (!) for a 7-star query).
	}
    \label{Fig_SyntheticQuerySize}
\end{figure}

\begin{tcolorbox}[top=-2mm]
\begin{questionW}\label{question8}
How does dissociation compare against other probabilistic methods and standard query evaluation?
\end{questionW}

\begin{resultW}\label{lesson8}
The best evaluation strategy for dissociation takes only a small overhead over standard SQL evaluation and is considerably faster than other probabilistic methods for large lineages.
\end{resultW}
\end{tcolorbox}

\Autoref{Fig_VLDBJ_TPCH_timing1} to \autoref{Fig_VLDBJ_TPCH_timing4} 
show that SampleSearch does not scale to larger lineages as 
the performance of exact probabilistic inference
depends on the
tree-width of the Boolean lineage formula, which generally increases with the size of the data. In contrast, dissociation is \emph{independent of the treewidth}.
For example, SampleSearch needed 780 sec for calculating the ground truth for a query with $\max[\textrm{lin}]= 5.9$k 
for which dissociation took 3.0 sec, and MC(1k) took 42 sec for a query with $\max[\textrm{lin}]= 4.2$k 
for which dissociation took 2.4 sec.
Dissociation takes only 10.5 sec for our largest query $\$2 = \sql{'\%'}$ and  $\$1 = 10k$ with  $\max[\textrm{lin}]= 35$k.
Retrieving the lineage for that query alone takes 5.8 sec, which implies that any probabilistic method 
that evaluates the probabilities outside of the database engine needs to issue this query to retrieve the DNF for each answer and would thus have to evaluate lineages of sizes around 35k in only 4.7 (= 10.5 - 5.8) sec to be faster than dissociation.\footnote{The time needed for the lineage query thus serves as minimum benchmark for \emph{any} probabilistic approximation.
The reported times for SampleSearch and MC are the sum of time for retrieving the lineage plus the actual calculations, without the time for reading and writing the input and output files for SampleSearch.}

\begin{figure}[t]
\centering
\vspace{-0mm}
\begin{minipage}[b]{86mm}
\setlength{\tabcolsep}{1.5mm}
\scriptsize
\begin{tabular}[b]{@{\hspace{0pt}} r | r r r r r r @{\hspace{0pt}}  }
	\$2	&\multicolumn{2}{c}{\!\!\!\scriptsize{\%red\%green\%}\!\!\!\!}
		&\multicolumn{1}{c}{\!\!\!\!\scriptsize\%red\%\!\!\!\!}
		&\multicolumn{1}{c}{\!\!\!\!\scriptsize\%red\%\!\!\!\!}
		&\multicolumn{1}{c}{\scriptsize\%} &\multicolumn{1}{c}{\scriptsize\%} \\
	\$1						&500		&10\,000	&500		&10\,000	&500	&10\,000	\\
	\hline     
	max[lineage size]       		&$5$	 &$48$		&$131$	&$1,941$		&$2\,320$&$35\,040$ \\
	total lineage size				&$42$	 &$1\,004$ &$2\,218$	&$44\,152$	&$40\,000$	&$800\,000$	\\			  	    	        	
	\textbf{SampleSearch} [sec]		&$0.43$  &$0.66$  	&$1.23$	&$100.71$		&$75.47$&$-$ 	\\	
	\textbf{MC(1k)} [sec]			&$0.13$  &$0.29$  	&$0.86$	&$26.87$		&$22.75$&$-$ 	\\		
	\textbf{Dissociation \& SJ} [sec]&$0.14$  &$0.16$  	&$0.88$	&$2.11$			&$2.11$	&$19.14$ 	\\
	\textbf{Dissociation} [sec]		&$1.10$  &$5.76$  	&$1.39$	&$6.83$			&$2.00$	&$10.52$	\\
	\textbf{Lineage SQL}	[sec]	&$0.12$  &$0.11$  	&$0.43$	&$1.19$			&$0.86$	&$5.80$		\\
	\textbf{Deterministic SQL} [sec]&$0.12$  &$0.13$  	&$0.42$	&$0.73$			&$0.61$	&$1.93$		\\
\end{tabular}
\end{minipage}
\caption{Overview timing results TPC-H.}
\label{Fig_VLDBJ_TPCH_timingTable}
\end{figure}

\introparagraph{Further optimizations}: We found that materialized views performed better than just views. For example, the query $\$1 = 500$ and $\$2 = \sql{'\%red\%green\%'}$ takes over 3 sec with common views instead of our reported 0.88 sec for materialized views.
We also found that using standard database-provided aggregates (which requires us to use the logarithm for products) instead of user-defined aggregates notably speeds up query evaluation for large lineages. 
Concretely, instead of every occurrence of \sql{'ior(T.P) as P'} in our queries, we used the following nested PostgreSQL expression: 
\sql{'case when (sum(case T.P when 1 then -746 else ln(1-T.P) end)) $<$ -745 then 1
else 1-exp(sum(case T.P when 1 then -746 else ln(1-T.P) end)) end as P'}.
The outer case statement prevents errors for deterministic tuples (i.e.\ with $p_i = 1$), and the inner case statement prevents errors due to underflows. 
As illustration of the improvements, the query $\$1 = 10k$ and $\$2 = \sql{'\%'}$ would take 42.2 sec instead of 20.7 with semi-join reduction, and 32.5 sec instead of 11.3 for the two individual query plans when using a UDF instead of the above expression. We also found that removing the outer case statement would reduce the time by 5\% (which could be used if there were no deterministic tuples in a table), and removing the inner case by another 1\% (which could be used if there was no risk of underflows). 
An important by-product of using standard database-defined aggregates is that dissociated queries (and their optimized versions) can be executed with the help of \emph{any standard relational database}, even cloud-based databases that commonly do not allow users to define their own UDAs, e.g.\ Microsoft SQL Azure.
To our best knowledge, this is the currently only technique to approximate rankings of probabilistic queries \emph{without any modifications to the database engine nor performing any calculations outside the database}.

\begin{figure*}[t]
    \centering
	\vspace{-3mm}
	\hspace{0.5mm}
	\subfloat[\autoref{lesson1}]
		{\includegraphics[scale=0.325]{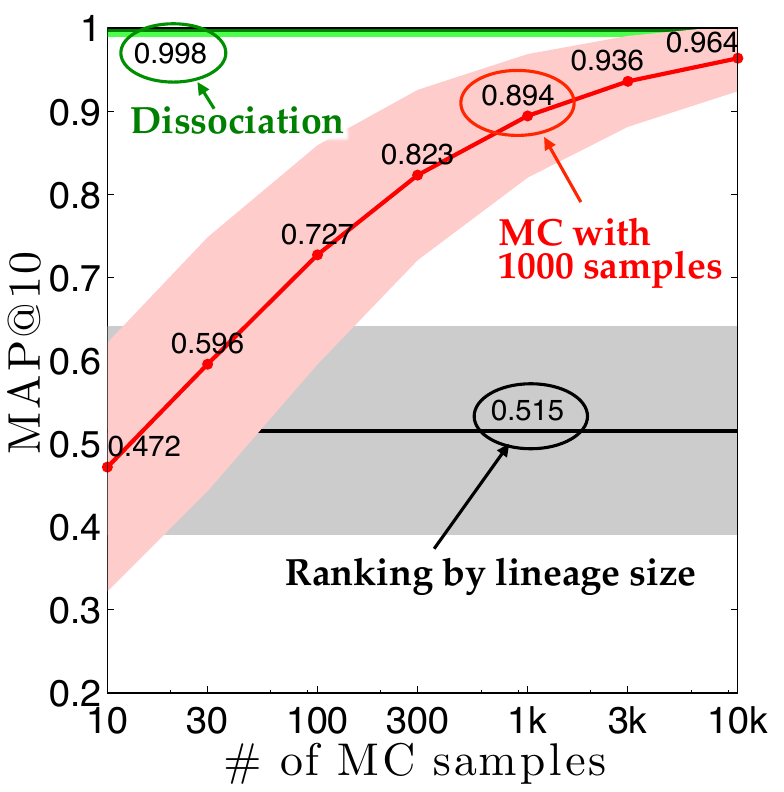}
		\label{Fig_VLDBJ_TPCH_AP_MC_aggregated_09}}
	\hspace{-1mm}
    \subfloat[\autoref{lesson2}]
		{\includegraphics[scale=0.325]{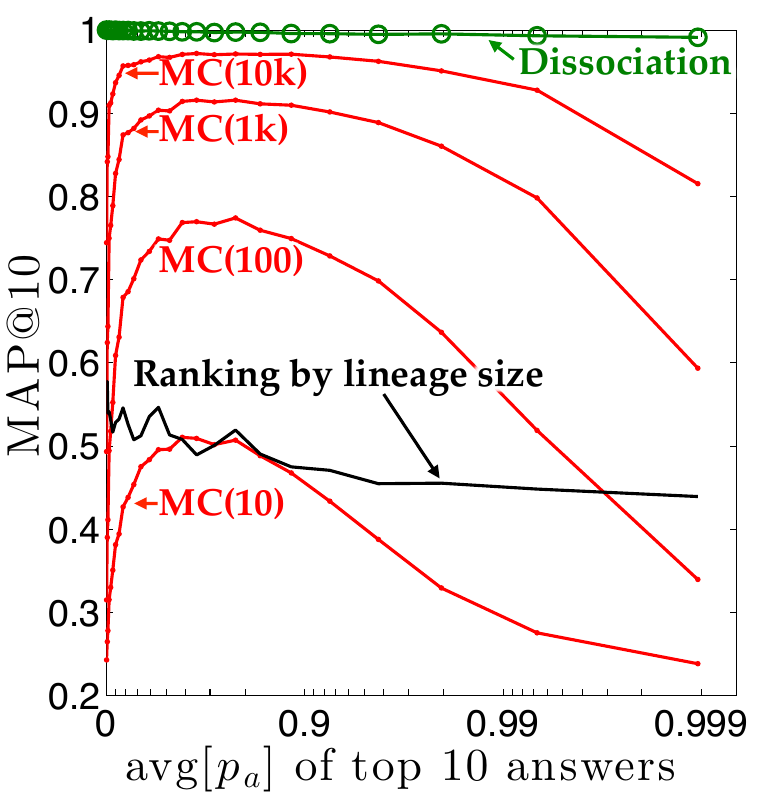}
		\label{Fig_VLDBJ_TPCH_ap_mc_random_redgreen_equalSizedBins_log}}	
	\hspace{-1mm}
    \subfloat[\autoref{lesson3}]
		{\includegraphics[scale=0.325]{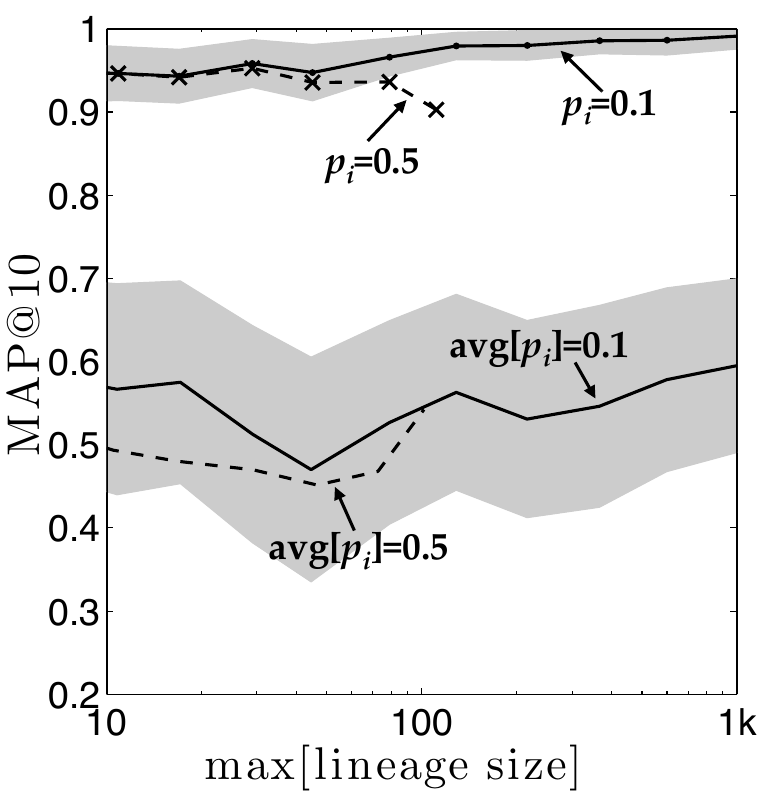}
		\label{Fig_VLDBJ_TPCH_AP_LineageRanking}}
	\hspace{-1mm}
	\subfloat[\autoref{lesson4}]
		{\includegraphics[scale=0.325]{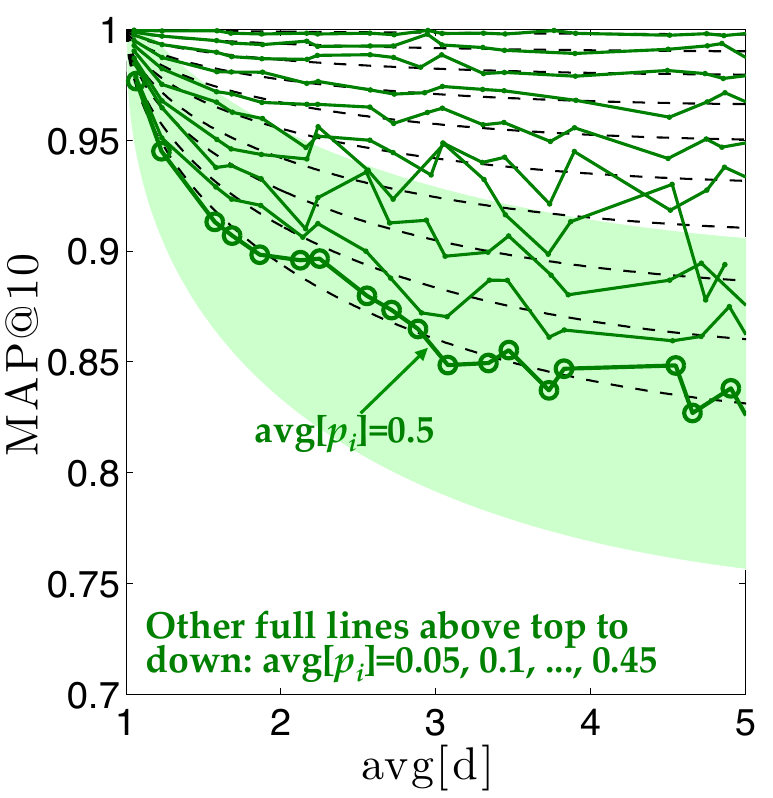}
		\label{Fig_VLDBJ_TPCH_AP_Diss_MC_Tradeoff1}}
	\hspace{-1mm}
	\subfloat[\autoref{lesson4}]
		{
		\includegraphics[scale=0.325]{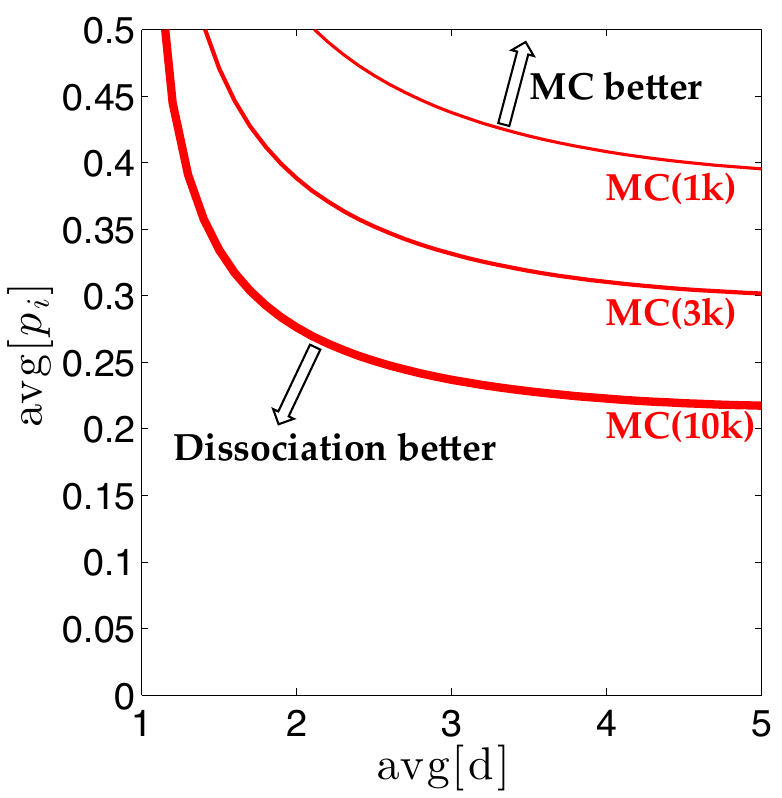}
		\label{Fig_VLDBJ_TPCH_AP_Diss_MC_Tradeoff2}}
	\hspace{-1mm} 
	\subfloat[\autoref{lesson6}]
		{\includegraphics[scale=0.325]{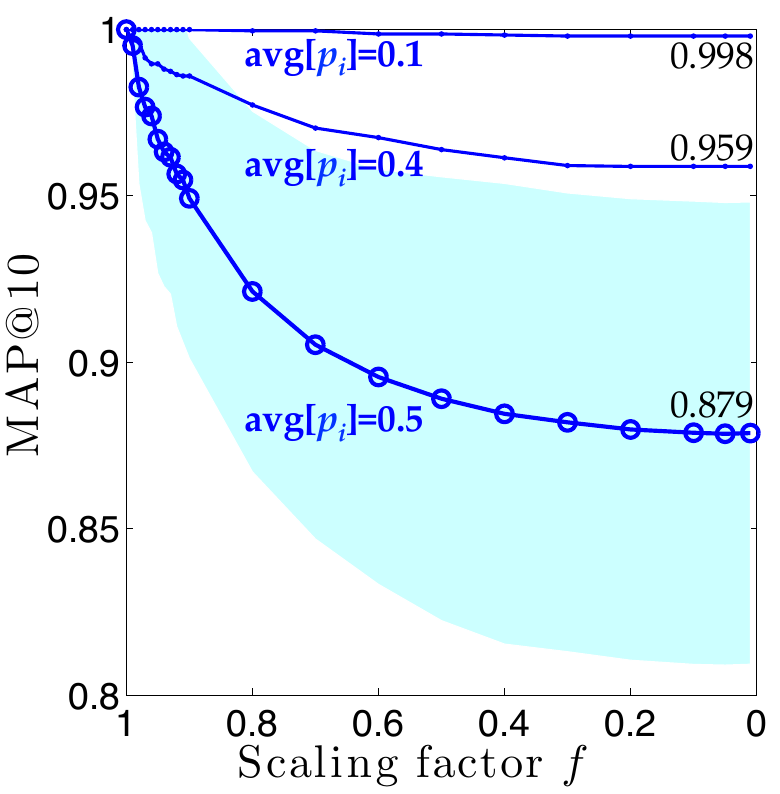}
		\label{Fig_VLDBJ_TPCH_AP_ScaledGT}}	
	\hspace{1mm} 
	\subfloat[\autoref{lesson6}]
		{\includegraphics[scale=0.38]{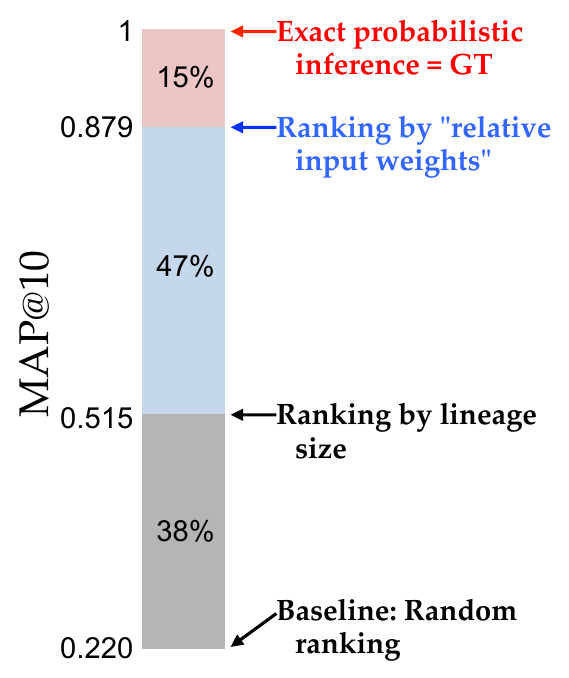}
		\label{Fig_VLDBJ_TPCH_AP_overviewBars}}	
	\hspace{3mm} 
	\subfloat[\autoref{lesson7}]
		{\includegraphics[scale=0.325]{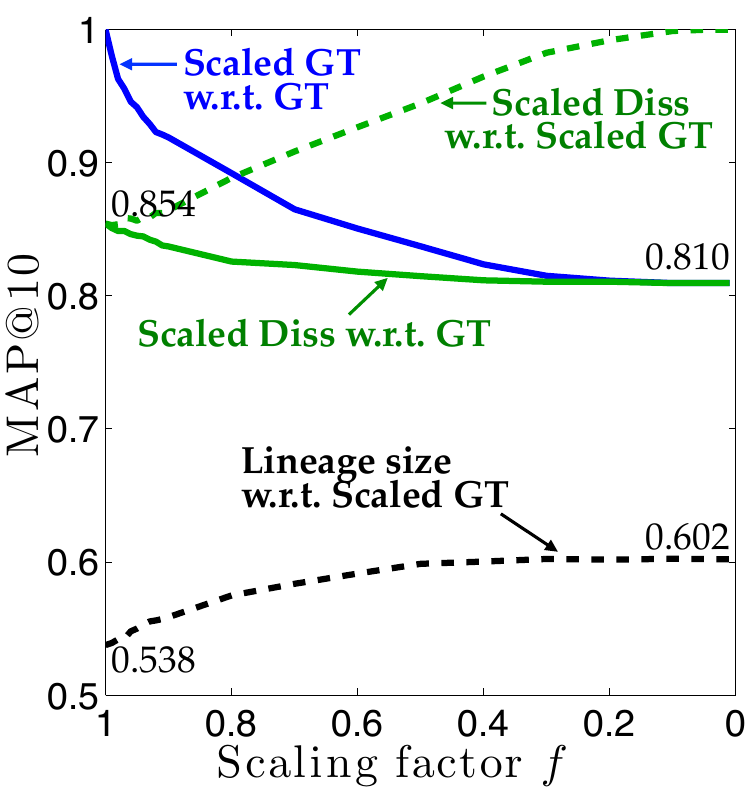}
		\label{Fig_VLDBJ_TPCH_AP_ScaledDissociation}}	
	\caption{
Ranking experiments on TPC-H: Assumptions for from each plot and conclusions are described below each respective result in the text.}
    \label{Fig_VLDBJ_TPCH_AP_Diss_MC_Tradeoff}
\end{figure*}

\subsection{Ranking experiments}\label{sec:exRanking}

\noindent
For the following experiments, we are limited to those query parameters \$1 and \$2 for which we can get the ground truth (and results from MC) in acceptable time.
We systematically vary $p_{i\max}$ between $0.1$ and $1$ (and thus $\avg[p_{i}]$ between $0.05$ and $0.5$) and evaluate the rankings several times over randomly assigned input tuple probabilities.
We only keep data points (i.e.\ results of individual ranking experiments) for which the output probabilities are not too close to 1 to be meaningful ($\max[p_{a}]<0.999\,999$).

\begin{tcolorbox}[top=-2mm]
\begin{questionW}\label{question1}
How does ranking quality compare for our three ranking methods and which are the most important factors that determine the quality for each method?
\end{questionW}

\begin{resultW}\label{lesson1}
Dissociation performs better than MC which performs better than ranking by lineage size.
\end{resultW}
\end{tcolorbox}

\Autoref{Fig_VLDBJ_TPCH_AP_MC_aggregated_09} shows averaged results of our probabilistic methods for $\$2 = \sql{'\%red\%green\%'}$.\footnote{Results for MC with other parameters of \$2 are similar. However, the evaluation time for the experiments becomes quickly infeasible.} 
Shaded areas indicate standard deviations and the x-axis shows varying numbers of MC samples. 
We only used those data points for which $\avg[p_{a}]$ of the top 10 ranked tuples is between $0.1$ and $0.9$ according to ground truth 
($\approx 6$k data points for dissociation and lineage, $\approx 60$k data points for MC, as we repeated each MC simulation 10 times), as this is the best regime for MC, according to \autoref{lesson2}.
\Autoref{Fig_TODS_TPCH_Result_AP_tableOverview} 
gives an overview of the performance of each method, depending on the most important parameters which we will explain next.

We also evaluated quality for dissociation and ranking by lineage for more queries by choosing parameter values for $\$2$ from a set of 28 strings, such as \sql{'\%r\%g\%r\%a\%n\%d\%'} 
and \sql{'\%re\%re\%'}.
The average MAP over all 28 choices for parameters \$2 is 0.997 for ranking by dissociation and 0.520 for ranking by lineage size ($\approx 100$k data points). Most of those queries have too large of a lineage to evaluate MC. Note that ranking by lineage always returns the same ranking for given parameters \$1 and \$2, but the GT ranking would change with different input probabilities.

\begin{tcolorbox}[top=-2mm]
\begin{resultW}\label{lesson2}
Ranking quality of MC increases with the number of samples and decreases when the average probability of the answer tuples $\avg[p_a]$ is close to $0$ or $1$.  
\end{resultW}	
\end{tcolorbox}

\Autoref{Fig_VLDBJ_TPCH_ap_mc_random_redgreen_equalSizedBins_log}
shows the AP as a function of $\avg[p_{a}]$ of the top 10 ranked tuples according to ground truth by  logarithmic scaling of the x-axis (each point in the plot averages AP over $\approx 450$ experiments for dissociation and lineage and over $\approx 4.5$k experiments for MC). We see that MC performs increasingly poor for ranking answer tuples with probabilities close to $0$ or $1$ and even approach the quality of random ranking (MAP@10 = 0.22). 
This is so because, for these parameters, 
the probabilities of the top 10 answers are very close, and MC needs many iterations to distinguish them.
Therefore, MC performs increasingly poorly for increasing size of lineage but fixed average input probability $\avg [p_{i}] \approx 0.5$, as the average answer probabilities $\avg[p_{a}]$ will be close to $1$. 
In order not to ``bias against our competitor,'' we compared against MC
in its best regime with $0.1< \avg[p_a] <0.9$ in \autoref{Fig_VLDBJ_TPCH_AP_MC_aggregated_09}.

\begin{tcolorbox}[top=-2mm]
\begin{resultW}\label{lesson3}
Ranking by lineage size has good quality only when all input tuples have the same probability.
\end{resultW}
\end{tcolorbox}

\Autoref{Fig_VLDBJ_TPCH_AP_LineageRanking} shows that ranking by lineage is good only when all tuples in the database have the \emph{same} probability
(labeled by $p_i = \textrm{const}$ as compared to $\avg[p_i] = \textrm{const}$). 
This is a consequence of the output probabilities depending mostly on the size of the lineages
if all probabilities are equal.
Dependence on other parameters, such as overall lineage size and magnitude of input probabilities (here shown for $p_i = 0.1$ and $p_i = 0.5$), seem to matter only slightly.

\begin{tcolorbox}[top=-2mm]
\begin{resultW}\label{lesson4}
The quality of dissociation decreases with the average number of dissociations per tuple $\avg[d]$ and with the average input probabilities $\avg[p_{i}]$.
Dissociation performs very well and notably better then MC(10k) if either $\avg[d]$ or $\avg[p_i]$ are small.
\end{resultW}
\end{tcolorbox}

Each answer tuple $a$ gets its score $p_a$ from one of two query plans $P_S$ and $P_P$ that dissociate tuples in tables $S$ and $P$, respectively. For example, if the lineage size for tuple $a$ is 100 and the lineage contains 20 unique suppliers from table $S$ and 50 unique parts from table $P$, then $P_S$ dissociates each tuple from $S$  into 5 tuples and $P_P$ each tuple from $P$ into $2$ tuples, on average. Most often, $P_P$ will then give the better bounds as it has fewer average dissociations. 
Let $\avg[d]$ be the mean number of dissociations for each tuple in the dissociated table of its respective optimal query plan, averaged across all top 10 ranked answer tuples. 
For all our queries (even those with $\$1 = 10k$ and $\$2 = \sql{'\%'}$), $\avg[d]$ stays below $1.1$ as, for each tuple, there is usually one plan that dissociates few variables. 
In order to \emph{understand the impact of higher numbers of dissociations} (increasing $\avg[d]$), we also measured AP for the ranking for \emph{each query plan individually}. 
Hence, for each choice of random parameters, we record two new data points -- one for ranking all answer tuples by using only $P_S$ and one for using only $P_P$ -- together with the values of $\avg[d]$ in the respective table that gets dissociated. This allows us to draw conclusions for a larger set of parameters.
\Autoref{Fig_VLDBJ_TPCH_AP_Diss_MC_Tradeoff1} plots MAP values 
as a function of $\avg[d]$ of the top 10 ranked tuples on the horizontal axis, and various values of $\avg[p_i]$ ($\avg[p_i] = 0.05, 0.10, \ldots, 0.5$).
Each plotted point averages over at least 10 data points (some have 10, other several 1000s). 
Dashed lines show a fitted parameterized curve to the data points on $\avg[p_i]$ and $\avg[d]$.
The figure also shows the standard deviations as shaded areas for $\avg[p_i] = 0.5$.
We see that the quality is very dependent on $\avg[p_i]$, as predicted by \autoref{prop:smallProbabilities}.

\Autoref{Fig_VLDBJ_TPCH_AP_Diss_MC_Tradeoff2} maps the trade-off between dissociation and MC for the two important parameters for the quality of dissociation ($\avg[d]$ and $\avg[p_i]$) and the number of samples for MC. For example, MC(1k) gives a better expected ranking than dissociation only for the small area above the thick red curve marked MC(1k). 
For MC, we used the test results from \autoref{Fig_VLDBJ_TPCH_AP_MC_aggregated_09}; i.e.\ assuming $0.1< \avg[p_a] <0.9$ for MC.
Also recall that for large lineages, having an input probability with $\avg[p_{i}] = 0.5$ will often lead to answer probabilities close to $1$ for which ranking is not possible anymore (recall \autoref{Fig_VLDBJ_TPCH_AP_LineageRanking}). Thus, for large lineages, we need small input probabilities to have meaningful interpretations. And for small input probabilities, dissociation considerably outperforms any other method.

Notice that one should not confuse ($i$) the AP score of each of the two plans taken separately with ($ii$) the AP score of the min between the two plans; the ranking produced by the latter can be much better.  
For example, one experiment ($\$1 = 10k$, and $\$2 = \sql{'\%re\%bl\%re\%'}$) with maximal lineage size 106 has $\avg[d]$ equal $1.053$ and $1.099$ for $P_P$ and $P_S$, respectively. None of the two plans gets perfect AP@10. However, using the minimum score of both plans \emph{for each tuple individually} has $\avg[d] = 1.049$ and perfect AP@10 = 1.
We also evaluated MAP for ranking all tuples by the plan that has the minimal mean $\avg[d]$ as compared to ranking by the minimum scores for each tuple individually. MAP over all 100k data points would then drop from 0.997 (\autoref{Fig_VLDBJ_TPCH_AP_overviewBars}) to only 0.995, which shows the value of taking the minimum score for each tuple \emph{individually}.

\begin{figure}[t]
    \centering
	\begin{minipage}[b]{60mm}
	\setlength{\tabcolsep}{1.5mm}
	\scriptsize
	\begin{tabular}[b]{@{\hspace{1pt}} l | c r | c c @{\hspace{1pt}}  }
		\hline
		\textbf{Dissociation}	&$\textrm{avg}[p_i]$& $\textrm{avg}[d]$ 		&MAP@10	& stdv	\\
						&0.05		& 5 			&0.997	& 0.011	\\
						&0.25		& 2 			&0.967	& 0.036	\\
						&0.50		& 1.1 			&0.968	& 0.035	\\
						&0.50		& 2 			&0.894	& 0.061	\\
						&0.50		& 5 			&0.833	& 0.074	\\																								
		\hline                                          
		\textbf{MC}		& $\textrm{avg}[p_a]$ &trials		&MAP@10	& stdv	\\
						& $0.1 - 0.9$	&10k		&0.964	& 0.040	\\
						& $0.1 - 0.9$	 	&3k			&0.936	& 0.055	\\
						& $0.1 - 0.9$	 	&1k			&0.894	& 0.074	\\
						& $\approx 0.99$	&10k		&0.945	& 0.046	\\
						& $\approx 0.99$ 	&3k			&0.897	& 0.059	\\
						& $\approx 0.99$	&1k			&0.827	& 0.076	\\
		\hline                           		                      	
		\textbf{Lineage size}	& $p_i$ 			&	&MAP@10	& stdv	\\
						& random 			&	& 0.520	& 0.130 \\
						& all equal	 		&	& 0.949	& 0.033 \\						
		\hline                           		                      	
		\textbf{Random ranking}	& 			&	&MAP@10	& stdv			\\
						&			&	& 0.220	& 0.112
	\end{tabular}
	\end{minipage}			
\caption{Quality results TPC-H: Our three methods, their respectively most important parameters, and their average ranking qualities.}\label{Fig_TODS_TPCH_Result_AP_tableOverview}
\end{figure}

\begin{tcolorbox}[top=-2mm]
\begin{questionW}\label{question6}
How much would the ranking change according to exact probabilistic inference if we scale down all input tuples?
\end{questionW}

\begin{resultW}\label{lesson6}
If the probabilities of all input tuples are already small, then scaling them further down does not affect the ranking much.
\end{resultW}
\end{tcolorbox}

This result is a more general statement about the applicability of ranking over probabilistic databases, and motivated by the observation that dissociation works surprisingly well for small input probabilities.
Here, we repeatedly evaluated the exact ranking for 7 different parameterized queries over randomly generated databases with one query plan that has 
$\avg[d] \approx 3$, for two conditions:
first on a probabilistic database with $avg[p_{i}]$ input probabilities (we defined the resulting ranking as GT);
then again on a scaled version, where all input probabilities in the database are multiplied by the same scaling factor $f \in (0,1)$. We then compared the new ranking against GT. 
\Autoref{Fig_VLDBJ_TPCH_AP_ScaledGT} shows that if all input probabilities are already small (and dissociation already works well), then scaling has little 
effect on the ranking. 
However, for $avg[p_{i}]=0.5$ (and thus many tuples with $p_i$ close to $1$), we have a few tuples with $p_i$ close to $1$. These tuples are very influential for the final ranking, but their relative influence decreases if scaled down even slightly. 
Also note that even for $avg[p_{i}]=0.5$, scaling a database by a factor $f = 0.01$ instead of $f = 0.2$ does not make a big difference. However, the quality remains well above ranking by lineage size (!). 
This suggests that the difference between ranking by lineage size (MAP $\approx  0.529$) and the ranking on a scaled database for $f \rightarrow 0$ (MAP $\approx  0.879$) can be attributed to the relative weights of the input tuples (we thus refer to this as ``\emph{ranking by relative input weights}''). The remaining difference in quality then comes from the \emph{actual probabilities} assigned to each tuple.
Using MAP $\approx  0.220$ as baseline for random ranking, 
38\% of the ranking quality can be found by the lineage size alone vs.\ 85\% by the lineage size plus the relative weights of input tuples. The remaining 15\% come from the actual probabilities (\autoref{Fig_VLDBJ_TPCH_AP_overviewBars}).
While these exact numbers only hold for this particular choice of queries and while the implicit assumption that the quality of ranking were a linear scale of MA is debatable, we think that this ``thought experiment'' provides an interesting way to think about ``the value'' of exact probabilistic inference.

\begin{tcolorbox}[top=-2mm]
\begin{questionW}\label{question7}
Does the expected ranking quality of dissociation decrease to random ranking for increasing fractions of dissociation (just like MC does for decreasing number of samples)?
\end{questionW}

\begin{resultW}\label{lesson7}
The expected performance of dissociation for increasing $\avg[d]$ for a particular query is lower bounded by the quality of ranking by relative input weights.
\end{resultW}
\end{tcolorbox}

Here, we 
use a similar setup as before and now compare various rankings against each other:
SampleSearch on the original database (``GT'');
SampleSearch on the scaled database (``Scaled GT'');
dissociation on the scaled database (``Scaled Diss''); and
ranking by lineage size (which is unaffected by scaling).
From \autoref{Fig_VLDBJ_TPCH_AP_ScaledDissociation}, we see
that the quality of Scaled Diss w.r.t.\ Scaled GT $\rightarrow 1$ for $f \rightarrow 0$ since dissociation works increasingly well for small $\avg[p_i]$ (recall \autoref{prop:smallProbabilities}). We also see that Scaled Diss w.r.t.\ GT decreases towards Scaled GT w.r.t.\ GT for $f \rightarrow 0$.
Since dissociation can always reproduce the ranking quality of ranking by relative input weights by first downscaling the database (though losing information about the actual probabilities) the expected quality of dissociation for smaller scales does not decrease to random ranking, but rather to ranking by relative weights. 
Note this result only holds for the expected MAP; any particular ranking can still be very much off.

\section{Related Work}\label{sec:relatedWork}

\introparagraph{Probabilistic databases} 
Query evaluation over probabilistic databases corresponds to solving the weighted model counting problem, 
and current approaches 
can be classified into three
categories (\autoref{TableWorkComparison}): 
(1) \emph{incomplete approaches} identify tractable
cases either at the
query-level~\cite{DBLP:journals/vldb/DalviS07,DBLP:journals/jacm/DalviS12,FinkO:PODS2014dichotomy,DBLP:conf/icde/OlteanuHK09}
or the
data-level~\cite{DBLP:conf/sum/OlteanuH08,DBLP:conf/icdt/RoyPT11,SenDeshpandeGetoor2010:ReadOnce} and ignore the rest;
(2) \emph{exact approaches}
\cite{DBLP:conf/icde/AntovaJKO08,JhaOS2010:EDBT,DBLP:conf/icde/SenD07}
are based on variants and extensions of a complete search based on the DPLL procedure~\cite{DBLP:series/faia/GomesSS09}
and work well for queries over databases with simple lineage expressions, but perform
poorly on complex lineage expressions; and
(3) \emph{approximate approaches} 
usually first compute the lineage of the query on the given database to obtain a Boolean formula, 
then  
either apply variants of Monte Carlo 
sampling
methods~\cite{DBLP:conf/sigmod/JampaniXWPJH08,DBLP:journals/vldb/JoshiJ09,DBLP:conf/icde/KennedyK10,DBLP:conf/icde/ReDS07},
or approximate the number of models of the Boolean lineage
expression~\cite{DBLP:dblp_conf/icdt/FinkO11,OlteanuHK2010:ICDE,DBLP:journals/pvldb/ReS08}.
A recent approach combines safe plans with Monte Carlo simulation~\cite{DBLP:journals/pvldb/GribkoffS16}.
An approximate ``anytime method'' based on DPLL search is developed in \cite{DBLP:journals/vldb/FinkHO13} 
that stops the full search whenever a given confidence bound can be guaranteed.
This approach allows evaluating a query to a precision determined by a given computational budget.
A variant of this method with confidence bounds over first-order lineage formulas is developed in \cite{DBLP:conf/icde/DyllaMT13}.
Our work can be seen as a generalization of some of
of these techniques: our algorithm returns the exact probability if the
query is
safe~\cite{DBLP:journals/vldb/DalviS07,OlteanuHK2010:ICDE} or
data-safe~\cite{JhaOS2010:EDBT} and
gives a unique and well-defined value for every query.
This property can be useful when learning the probabilities from queries.
In addition, our method can be used together with any existing relational database 
without any modifications to the engine.
On the other side, our query-centric approach currently works only for self-join-free conjunctive queries and does not allow an iterative refinement or a trade-off between computational complexity and precision for applications where the exact probability scores are required.

\introparagraph{Lifted and approximate inference} Lifted inference was
introduced in the AI literature as an approach to probabilistic
inference that uses the first-order formula to exploit symmetries at
the grounded level~\cite{DBLP:conf/ijcai/Poole03}.  This research
evolved independently of that on probabilistic databases, and the two
have many analogies: A formula is called \emph{domain liftable} iff
its data complexity is in polynomial time~\cite{jaeger-broeck-2012},
which is the same as a \emph{safe query} in probabilistic databases,
and the FO-d-DNNF circuits described
in~\cite{DBLP:conf/ijcai/BroeckTMDR11} correspond to the safe plans
discussed in this paper.  See~\cite{BroeckSuciu:UAI2014tutorial} for a
recent discussion on the similarities and differences.

\introparagraph{Representing correlations} The most popular approach
to represent correlations between  tuples in a probabilistic
database is by a Markov Logic network (MLN) which is a set of
\emph{soft constraints}~\cite{DBLP:series/synthesis/2009Domingos}.
Quite remarkably, all complex correlations introduced by an MLN can be
rewritten into a query over a tuple-independent probabilistic
database~\cite{DBLP:conf/uai/GogateD11a,BroeckMD:KR2014,DBLP:journals/pvldb/JhaS12}.
In combination with such rewritings, our techniques can be also
applied to MLNs if their rewritings results in conjunctive
queries without self-joins.

\introparagraph{Dissociation} 
In a graph-based scenario \cite{DetwilerGLST2009:ICDE} that basically corresponds to our abstracted \autoref{ex:1}, 
we observed that propagation-based methods often perform as well as reliability-based methods for predicting protein functions from integrated uncertain biological databases.
We then first introduced dissociation in the
workshop paper~\cite{GatterbauerJS2010:MUD} as an attempt to generalize the success of propagation methods from graphs to hypergraphs.  
\cite{DBLP:journals/tods/GatterbauerS14} provides a general framework for approximating the probability of Boolean functions with both upper and lower bounds. We also illustrate how upper bounds to hard queries can be complemented by lower bounds (those lower bounds, however are not as tight, which is why we only use upper bounds for ranking in this paper).
Dissociation is closely related to a number of recent approaches in the graphical model and constraint satisfaction literature which approximate an intractable problem with a tractable relaxed version after \emph{treating multiple occurrences of variables or nodes as independent} or ignoring some equivalence constraints. Those approaches are usually referred to as \emph{relaxation}~\cite{DBLP:conf/uai/BroeckCD12}
(see~\cite{DBLP:journals/tods/GatterbauerS14} for a detailed discussion on 
similarities and differences).

\begin{figure}[t]
\setlength{\tabcolsep}{1.8mm}
\renewcommand{\arraystretch}{1.1}	
\begin{tabular}[t]{@{\hspace{0pt}} r @{\hspace{1mm}} | >{$}c<{$} >{$}c<{$} >{$}c<{$} >{$}c<{$} >{$}c<{$}
	@{\hspace{2mm}}    l    >{$}c<{$} >{$}c<{$} >{$}c<{$} >{$}c<{$} >{$}c<{$} }
		\multicolumn{1}{l|}{}&\multicolumn{8}{l}{
		\!\!\!\!\!\;
		\begin{turn}{55}all queries\end{turn}
		\!\!\!\!\!\!\!\!\!\!\!
		\begin{turn}{55}all data\end{turn}	
		\!\!\!\!\!\!
		\begin{turn}{55}unique score\end{turn}	
		\!\!\!\!\!\!\!\!\!\!\!\!
		\begin{turn}{55}performance\end{turn}
		\!\!\!\!\!\!\!\!\!\!\!\!\!\!
		\begin{turn}{55}rel.\ algebra\end{turn}
		}\\						
        \cline{1-6}
	Safe query plans \cite{DBLP:journals/vldb/DalviS07,DBLP:journals/jacm/DalviS12}
		&	&\bullet 		&\bullet &\bullet &\bullet &
		\multirow{2}*{\Big\}\! \scriptsize(1) incomplete}\\
	Read-once formulas \cite{DBLP:conf/sum/OlteanuH08,DBLP:conf/icde/OlteanuHK09,SenDeshpandeGetoor2010:ReadOnce}		
		&\bullet	& 		&\bullet &\bullet 	 	\\		
	Exact prob.\ inference \cite{JhaOS2010:EDBT}
		&\bullet 	&\bullet 	&\bullet 	& &	
		&\} \scriptsize(2) slow		\\			
	Monte Carlo \cite{DBLP:conf/sigmod/JampaniXWPJH08,DBLP:conf/icde/KennedyK10,DBLP:conf/icde/ReDS07} 
		&\bullet 	&\bullet 	& 	&\circ 	&
		&\multirow{2}*{\Big\}\! \scriptsize(3) approximate}	  		\\		
	Approx.\ mod.\ count.\ \cite{DBLP:conf/icde/DyllaMT13,DBLP:journals/vldb/FinkHO13,OlteanuHK2010:ICDE} 
		&\bullet 	&\bullet 	& 	&\circ	\\				
\end{tabular}
\caption{Current techniques for evaluating probabilistic queries are either (1) \emph{incomplete} and work only on a subset of queries and data instances, or (2) always work but may become arbitrarily \emph{slow} on general data instances, or (3) only \emph{approximate} the actual score.}
\label{TableWorkComparison}
\end{figure}

\section{Conclusions and Outlook}\label{sec:conclusion}

This paper developed a new scoring function called \emph{propagation} for ranking query results over probabilistic databases. Our semantics is based on a sound and principled theory of \emph{query dissociation}, and can be evaluated efficiently in an off-the-shelf relational database engine for \emph{any type of self-join-free conjunctive query}. 
We proved that the propagation score is an upper bound to query reliability, 
that both scores coincide for safe queries, 
and that propagation naturally extends the case of safe queries to unsafe queries.
We further showed that the scores for chain queries before and after dissociation correspond to two well-known scoring functions on graphs, namely network reliability (which is \#P-hard) and propagation (which is related to \mbox{PageRank} and in PTIME), and that our dissociation scores are thus generalizations of the propagation score from graphs to hypergraphs.
We calculated the propagation score
by evaluating a fixed number of safe queries, each providing an upper bound on the true probability, then taking their minimum. 
We provided algorithms that takes into account schema information to enumerate only the minimal necessary plans among all possible plans, and prove our method to be a strict generalization of all known results of PTIME self-join free conjunctive queries. 
We described relational query optimization techniques that 
allow us to evaluate all minimal queries in a single query and very fast.
Our evaluations show that the optimizations of our approach bring probabilistic query evaluation close to standard query evaluation while providing high ranking quality.
In future work, we plan to generalize the approach to full first-order queries.

\begin{acknowledgements}
This work was supported in part by 
NSF grants IIS-0513877, IIS-0713576, IIS-0915054, IIS-1115188, IIS-1247469,
and CAREER IIS-1553547.
We like to thank Abhay Jha for help with the experiments in the workshop version of this paper, Alexandra Meliou for suggesting the name ``dissociation'', and Vibhav Gogate for guidance in using his tool SampleSearch.
WG would also like to thank Manfred Hauswirth for a small comment in 2007 that was crucial for the development of the ideas in this paper.
\end{acknowledgements}

\bibliography{\bibpath}

\clearpage
\appendix
\section{Nomenclature}\label{sec:nomenclature}

\vspace{-2mm}

\capstartfalse			
\begin{table}[h!]
    \centering
    \small
    \begin{tabularx}{\linewidth}{  @{\hspace{0pt}} >{$}l<{$}  @{\hspace{1mm}}  X @{}}  
    \hline
	R,S,T,U		& relational tables \\
	r_i, s_i, t_i, u_i & tuple identifiers \\
	A,B,C		& attribute names \\
	a,\ldots, f,s,t	& constants \\
	s,t			& source and target nodes \\  
	t			& a tuple \\	
	x,y,z		& variables \\	
	q			& query	\\
	a_i			& atom \\	
	\at(x_j)	& set of atoms that contain variable $x_j$	\\
	\Var(a_i) 	& set of variables of a query $q$ or atom $a_i$ \\
	\HVar(P) 	& set of head variables of a query $q$ or a plan $P$ \\	
	\EVar(q) 	& set of existential variables: $\EVar(q) \!=\! \Var(q) \!-\! \HVar(q)$ \\		
	\MinCuts(q)	& set of minimal subsets of $\EVar(q)$ that disconnects $q$  \\
	\MinPCuts(q)& set of minimal p-cuts \\	
	\SepVar(q)	& existential variables that appear in all atoms  \\		
	\SepPVar(q)	& separator variables that appear in all probabilistic atoms \\				
	p			& probability function	\\
	r(q)		& reliability score of $q$\\
	\rho(q)		& propagation score of $q$\\
	\phi, \psi	& Boolean expression 	\\
	\PP{\phi}	& probability of a Boolean expression		\\
	m			& number of subgoals \\		
	D			& database  \\
	\ADom_x		& active domain of variable $x$ \\
	\Delta		& collection of sets of variables $\Delta = (\vec y_1, \ldots, \vec y_m)$ \\
	R_i^{\vec y_i} & dissociated relation $R_i(\vec x_i)$ on variables $\vec y_i$: $R_i(\vec x_i, \vec y_i)$ \\	
	q^{\Delta}	& dissociated query\\
	P			& query plan	\\
	\mathcal{P}	& set of plans	\\	
	\joinp{}{\ldots}	& probabilistic join operator in prefix notation	\\
	\projpd{\vec x},\projp{\vec y}	& probabilistic project operators: onto $\vec x$, or project $\vec y$ away \\
	\textit{score}(P)	& score of a query plan	\\
	\JVar		& join variables for a join operator \\
	\vec x		& unordered set or ordered tuple \\
	{[a/x]}		& substitute value $a$ for variable $x$ \\
    \hline	
    \end{tabularx}
\end{table}
\capstarttrue

\vspace{-3mm}

\section{\autoref{sec:backgroundPDBs}: 
Proof
Safety}

\begin{proof}[\autoref{prop:uniqueSafePlan}: Safety]
(1) is proven in \cite{DBLP:journals/vldb/DalviS07}; we prove here only (2):

(a): 
Hierarchical query $\Rightarrow$ unique safe plan:
	We prove the following statement by induction:  
	Let $\vec x = \SepVar(q)$ be the set of \emph{separator variables} for a
  query $q(\vec x)$, i.e.\ every variable in $\vec x$ occurs in every
  atom in $q$; then $q(\vec x)$ admits a unique safe plan either as
  $\joinp{}{P_1, \ldots, P_k}$ or as $\projpd{\vec x} P$: Define a graph where the nodes are
  the atoms of $q$ and any two nodes are connected by an edge iff they
  share an existential variable, i.e.\ a variable not occurring in $\vec x$.  If the graph has $k$
  query components represented by the queries $q_1, \ldots, q_k$,
  then $q \equiv \,\,\joinp{}{q_1, \ldots, q_k}$, and we apply induction
  hypothesis to each $q_i(\vec x)$.  If the graph has a single
  query component with additional variables  $\vec y \neq \emptyset$,
then $q \equiv \projpd{\vec x} q(\vec x, \vec y)$, and we apply
  induction hypothesis to $q(\vec x, \vec y)$.  Finally, if the graph has a
  single component and only the variables $\vec x$, then $q$ has a
  single atom, hence $q \equiv R(\vec x)$. 

(b):
Safe plan $\Rightarrow$ hierarchical query: We construct inductively $q$ from its  derivation by noting that $\joinp{}{P_1, \ldots, P_k}  \equiv  q_1 \wedge \ldots \wedge q_k$, 
  where $q_1, \ldots, q_k$ are the hierarchical queries obtained
  inductively from $P_1, \ldots, P_k$, 
	and $\projpd{\vec x} P  \equiv  \exists \vec x. q$, where $q$ is obtained inductively
  from $P$.  It is easy to
  check inductively that all resulting queries are hierarchical.\qed
\end{proof}

\section{\autoref{sec:graphConnection}: 
Proof
\Autoref{prop:connectionPropagationScore}:
Connection to networks}\label{app:connectionGraphQueryPropagation}

\begin{proof}[\autoref{prop:connectionPropagationScore}: connection to networks]
Notice that we use digraphs to enforce that each path from $s$ to $t$ has exactly $k$ edges.

(a) We first establish the connection between \emph{graph reliability} and \emph{query reliability} $r(q) \define \PP{q}$.
  The first claim is obvious: a possible world contains a path from
  $s$ to $t$ iff the query is true on that world:
The chain query is true exactly if, in a randomly chosen world, there is a set of tuples of each relation that forms at least one output tuple. This corresponds exactly to the graph reliability, i.e.\ the probability that the nodes $s$ and $t$ are connected in a randomly chosen subgraph.

(b) We next establish the connection between \emph{propagation in networks} and \emph{dissociation in databases}.
Consider the unique safe query plan for the dissociated query $q^{\Delta}$:
	\begin{align*}
		P	& =
			\projp{x_{k}} 	\!	\joinp{}{R_k^{\emptyset}(x_{k},t),
	        \projp{x_{k\!-\!1}}		\!	\joinp{}{R_{k-1}^{\emptyset}  (\vec x_{[k-1,k]}) \ldots , \\
	& \quad\; \projp{x_{2}} 	\!	\joinp{}	
			{R_1^{\vec x_{[3,k]}}  (s,\vec x_{[2,k]}), 
			R_2^{\vec x_{[4,k]}}  (\vec x_{[2,k]})   }\ldots}}
	\end{align*}
This plan is evaluated from the inside out. The table $R_1$ is dissociated on all variables except $x_2$, i.e.\ each consequent project on previous join results from $R_1$ will treat each tuple as independent. The independent project corresponds exactly to the way propagation is calculated at each node iteratively from the probabilities of its parents and incoming edges. 
  We prove this by induction on $k$:  When $k=1$ then $R_1$ contains a single
  edge $(s,t)$, whose probability is equal to $r(q)$, to $r(q^\Delta)$,
  and to the network propagation score.   
  To prove for $k\geq 1$, let
  $V_{k} = \set{a_1, \ldots, a_n}$ be the nodes in the before-last
  partition (the last partition is $V_{k+1} = \set{t}$).  

{	\begin{center}
		\includegraphics[scale=0.5]{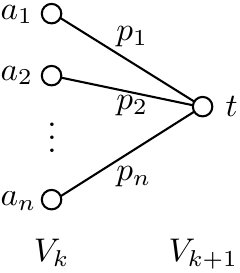}
	\end{center}
}

\noindent
We defined
  in \autoref{ex:1} the network propagation score to be:
  \begin{align*}
    \rho(t) & =  \bigotimes_i \rho(a_i) \cdot p_i
  \end{align*}
  where $p_i$ is the probability of the edge $(a_i,t)$.  On the other hand, the reliability of the dissociated query is given by the following formula which represents 
a probabilistic join with $R_k(x_{k}, t)$, 
followed by 
a probabilistic projection on the variable $x_{k}$, 
and where an expression $[a/x]$ stands for substitution of a variable $x$ by a constant $a$:
\begin{align*}
  r(q^\Delta) & =  \bigotimes_i r(q^\Delta[a_i/x_k]) \\
		&=  \bigotimes_i r \big(R_1(s,\vec x_{[2,k\!-\!1]}, a_i),\ldots,R_{k\!-\!1}(x_{k\!-\!1},a_i )) 
		    \cdot r(R_k(a_i,t)  \big) \\
   & = \bigotimes_i \rho(a_i) \cdot p_i    \qedhere
\end{align*}
\end{proof}

\begin{figure}[t]
\renewcommand{\tabcolsep}{0.65mm}
\renewcommand{\arraystretch}{0.9}
\centering
\hspace{0mm}
\subfloat[$G$]{
	\begin{minipage}[b]{49mm}
	\vspace{0mm}
	\begin{minipage}[t]{50mm}
	\vspace{0mm}
	\includegraphics[scale=0.72]{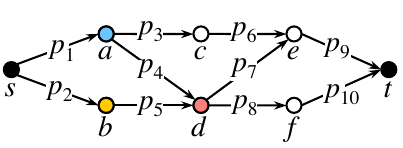}
	\vspace{0mm}
	\end{minipage}
	\vspace{0mm}
	\end{minipage}
	\label{Fig_AppendixExample_a}	
}
\hspace{10mm}		
\subfloat[$q$]{
	\begin{minipage}[b]{16mm}
	\vspace{0mm}
	\begin{minipage}[t]{15mm}
	\vspace{0mm}
	\begin{tabular}[b]{@{\hspace{1pt}} >{$}c<{$}|>{$}c<{$} >{$}c<{$} >{$}c<{$} >{$}c<{$} @{\hspace{1pt}}}
			& x	& y & z \\
		\hline
		R_1	& \circ	  \\		
		R_2	& \circ &  \circ \\
		R_3	& & \circ	& \circ \\
		R_4	& & & \circ 		
	\end{tabular}
	\vspace{0mm}
	\end{minipage}
	\vspace{0mm}
	\end{minipage}
	\label{Fig_AppendixExample_b}	
}
\hspace{0mm}
\subfloat[$G_1$]{
	\begin{minipage}[b]{49mm}
	\vspace{0mm}
	\begin{minipage}[t]{50mm}
	\vspace{0mm}
	\includegraphics[scale=0.72]{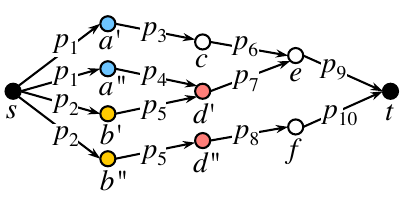}
	\vspace{0mm}
	\end{minipage}
	\vspace{-2mm}
	\end{minipage}
	\label{Fig_AppendixExample_c}	
}
\hspace{10mm}		
\subfloat[$q^{\Delta_1}$]{
	\begin{minipage}[b]{16mm}
	\vspace{0mm}
	\begin{minipage}[t]{15mm}
	\vspace{0mm}
	\begin{tabular}[b]{@{\hspace{1pt}} >{$}c<{$}|>{$}c<{$} >{$}c<{$} >{$}c<{$} >{$}c<{$} @{\hspace{1pt}}}
			& x	& y & z \\
		\hline
		R_1	& \graycell\circ	&  \graycell\bullet &  \graycell\bullet  \\		
		R_2	& \graycell\circ &  \graycell \circ & \graycell\bullet \\
		R_3	& & \graycell\circ	& \graycell\circ \\
		R_4	& & & \graycell\circ 		
	\end{tabular}
	\vspace{0mm}
	\end{minipage}
	\vspace{3mm}
	\end{minipage}
	\label{Fig_AppendixExample_d}
}
\hspace{0mm}
\subfloat[$G_2$]{
	\begin{minipage}[b]{49mm}
	\vspace{0mm}
	\begin{minipage}[t]{50mm}
	\vspace{0mm}
	\includegraphics[scale=0.72]{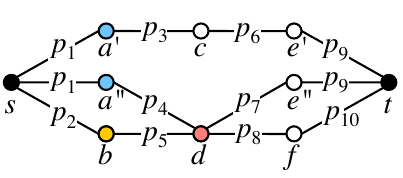}
	\vspace{0mm}
	\end{minipage}
	\vspace{-2mm}
	\end{minipage}
	\label{Fig_AppendixExample_e}
}
\hspace{10mm}		
\subfloat[$q^{\Delta_2}$]{
	\begin{minipage}[b]{16mm}
	\vspace{0mm}
	\begin{minipage}[t]{15mm}
	\vspace{0mm}
	\begin{tabular}[b]{@{\hspace{1pt}} >{$}c<{$}|>{$}c<{$} >{$}c<{$} >{$}c<{$} >{$}c<{$} @{\hspace{1pt}}}
			& x	& y & z \\
		\hline
		R_1	& \graycell\circ	&  \graycell\bullet     \\		
		R_2	& \graycell\circ &  \graycell \circ   \\
		R_3	& & \graycell\circ	& \graycell\circ \\
		R_4	& &\graycell\bullet & \graycell\circ 		
	\end{tabular}
	\vspace{0mm}
	\end{minipage}
	\vspace{3mm}
	\end{minipage}
	\label{Fig_AppendixExample_f}
}
\renewcommand{\tabcolsep}{0.4mm}
\hspace{5mm}		
\subfloat[$D$]{
	\begin{minipage}[b]{62mm}
	\vspace{-1mm}
	\begin{minipage}[t]{61mm}
	\hspace{2.7mm}
		\mbox{
				\begin{tabular}[t]{ >{$}c<{$} | >{$}c<{$} >{$}c<{$} }
	 			R_1	& E		& A	\\
				\hline
				p_1	& s		& a	\\
				p_2	& s		& b							
				\end{tabular}			
		}
		\hspace{5mm}
		\mbox{
				\begin{tabular}[t]{ >{$}c<{$} | >{$}c<{$} >{$}c<{$} >{$}c<{$} >{$}c<{$}}
	 			R_2	& A		& B\\
				\hline
				p_3	& a		& c		\\
				p_4	& a		& d		\\
				p_5	& b		& d				
				\end{tabular}
		}
		\hspace{0.5mm}
		\mbox{
				\begin{tabular}[t]{ >{$}c<{$} | >{$}c<{$} >{$}c<{$}}
	 			R_3	& B		& C	\\
				\hline
				p_6	& c		& e	\\	
				p_7	& d		& e	\\
				p_8	& d		& f		
				\end{tabular}			
		}
		\hspace{-0.5mm}
		\mbox{
				\begin{tabular}[t]{ >{$}r<{$} | >{$}c<{$} >{$}c<{$}}
	 			R_4	& C		& E	\\
				\hline
				p_9		& e		& t\\	
				p_{10}	& f		& t
				\end{tabular}			
		}
	\end{minipage}	
	\vspace{-1mm}
	\end{minipage}	
	\label{Fig_AppendixExample_g}
}		
\hspace{5mm}		
\subfloat[$D^{\Delta_1}$]{
	\begin{minipage}[b]{63mm}
	\vspace{-1mm}
	\begin{minipage}[t]{62mm}
		\mbox{
				\begin{tabular}[t]{ >{$}r<{$} | >{$}c<{$} >{$}c<{$}  >{$}c<{$} >{$}c<{$} }
	 			R_1^{\{y,z\}}	& E		& A 	& B		& C	\\
				\hline
				p_1	& s		& a		& c		& e \\
				p_1	& s		& a		& d		& e \\				
				p_2	& s		& b		& d		& e \\					
				p_2	& s		& b		& d		& f 				
				\end{tabular}			
		}
		\hspace{-2mm}
		\mbox{
				\begin{tabular}[t]{ >{$}c<{$} | >{$}c<{$} >{$}c<{$} >{$}c<{$} >{$}c<{$}}
	 			R_2^{\{z\}}	& A		& B		& C\\
				\hline
				p_3			& a		& c		& e	\\
				p_4			& a		& d		& e	\\
				p_5			& b		& d		& e \\			
				p_5			& b		& d		& f \\				
				\end{tabular}
		}
		\hspace{-2mm}
		\mbox{
				\begin{tabular}[t]{ >{$}c<{$} | >{$}c<{$} >{$}c<{$}}
	 			R_3	& B		& C	\\
				\hline
				p_6	& c		& e	\\	
				p_7	& d		& e	\\
				p_8	& d		& f		
				\end{tabular}			
		}
		\hspace{-0.5mm}
		\mbox{
				\begin{tabular}[t]{ >{$}r<{$} | >{$}c<{$} >{$}c<{$}}
	 			R_4	& C		& E	\\
				\hline
				p_9		& e		& t\\	
				p_{10}	& f		& t
				\end{tabular}			
		}
	\end{minipage}	
	\vspace{-1mm}
	\end{minipage}	
	\label{Fig_AppendixExample_h}
}	
\hspace{5mm}		
\subfloat[$D^{\Delta_2}$]{
	\begin{minipage}[b]{63mm}
	\vspace{-1mm}
	\begin{minipage}[t]{62mm}
	\hspace{0.7mm}		
		\mbox{
				\begin{tabular}[t]{ >{$}r<{$} | >{$}c<{$} >{$}c<{$}  >{$}c<{$} }
			 	R_1^{\{y\}}	& E		& A 	& B			\\
				\hline
				p_1	& s		& a		& c		 \\
				p_1	& s		& a		& d		 \\
				p_2	& s		& b		& d
				\end{tabular}
		}
		\hspace{1mm}
		\mbox{
				\begin{tabular}[t]{ >{$}c<{$} | >{$}c<{$} >{$}c<{$} >{$}c<{$} >{$}c<{$}}
			 	R_2^{\{z\}}	& A		& B		& C\\
				\hline
				p_3			& a		& c		& e	\\
				p_4			& a		& d		& e	\\
				p_5			& b		& d		& e \\
				p_5			& b		& d		& f \\
				\end{tabular}
		}
		\hspace{-2mm}
		\mbox{
				\begin{tabular}[t]{ >{$}c<{$} | >{$}c<{$} >{$}c<{$}}
			 	R_3	& B		& C	\\
				\hline
				p_6	& c		& e	\\
				p_7	& d		& e	\\
				p_8	& d		& f
				\end{tabular}
		}
		\hspace{-2mm}
		\mbox{
				\begin{tabular}[t]{ >{$}r<{$} | >{$}c<{$} >{$}c<{$} >{$}c<{$}}
			 	R_4^{\{y\}}	& B		&C		& E	\\
				\hline
				p_9			& c		& e		& t\\
				p_9			& d		& e		& t\\
				p_{10}		& d		& f		& t
				\end{tabular}
		}
	\end{minipage}	
	\vspace{-1mm}
	\end{minipage}	
	\label{Fig_AppendixExample_i}
}		
\caption{\specificref{Example}{ex:app1}. 
The \emph{propagation score} $\rho(t)$ of graph $G$ is identical to both
the \emph{reliability score} $r(t)$ 
and the \emph{propagation score} $\rho(t)$ of graph $G_1$, that has nodes $a$, $b$, and $d$ dissociated. 
The remaining correspondences are explained in the text.
}\label{Fig_AppendixExample}
\end{figure}

\begin{example}[Propagation in 5-partite digraphs]\label{ex:app1}
We illustrate here with a concrete example a number of correspondences (see \autoref{Fig_Correspondences}), in particular the correspondences between 
(1) the reliability of a $k+1$-partite digraph
and the probability of a conjunctive $k$-chain query, and between 
(2) the propagation score of a $k+1$-partite digraph as defined in \autoref{ex:1}
and the reliability of a dissociated $k+1$-partite digraph.

For correspondence (1), consider the graph $G$ in \autoref{Fig_AppendixExample_a}: 
The \emph{reliability score} $r(t)$ (i.e.\ the two-terminal network reliability between source node $s$ and target node $t$) 
corresponds to the \emph{query reliability} of 
4-chain query $q \datarule R_1(s,x),R_2(x,y),R_3(y,z),R_4(z,t)$ 
(shown in \autoref{Fig_AppendixExample_a} by its incidence matrix)
over the database $D$ of \autoref{Fig_AppendixExample_g}. 
Notice how each tuple in $D$ corresponds to one edge in $G$ and has the same probability.

For correspondence (2), consider the graph $G_1$ in \autoref{Fig_AppendixExample_c}, that results from $G$ after dissociating nodes $a$, $b$, and $d$. It is a serial-parallel graph and its \emph{reliability score} can easily be calculated as 
$r(t) = 
(p_1 p_3 p_6 \otimes (p_1 p_4 \otimes p_2 p_5)p_7)p_9 
\otimes p_2 p_5 p_8 p_{10}$.
This score is also identical to the propagation score $\rho(t)$ in both $G$ and $G_1$.

For other correspondences, notice that the \emph{reliability score} of graph $G_1$ corresponds to the probability of the dissociated query $q^{\Delta_1}$ from \autoref{Fig_AppendixExample_d} over the database $D^{\Delta_1}$ from \autoref{Fig_AppendixExample_h}.

Further notice that $q^{\Delta_2}$ from \autoref{Fig_AppendixExample_f} is yet another dissociated query 
for which the probability over database $D^{\Delta_2}$ from from \autoref{Fig_AppendixExample_i} 
corresponds to the reliability score $r(t)$ of graph $G_2$. 
Yet there is no intuitive interpretation of a propagation score over the same graph $G_2$.
\end{example}

\begin{figure}[t]
\centering
\includegraphics[scale=0.42]{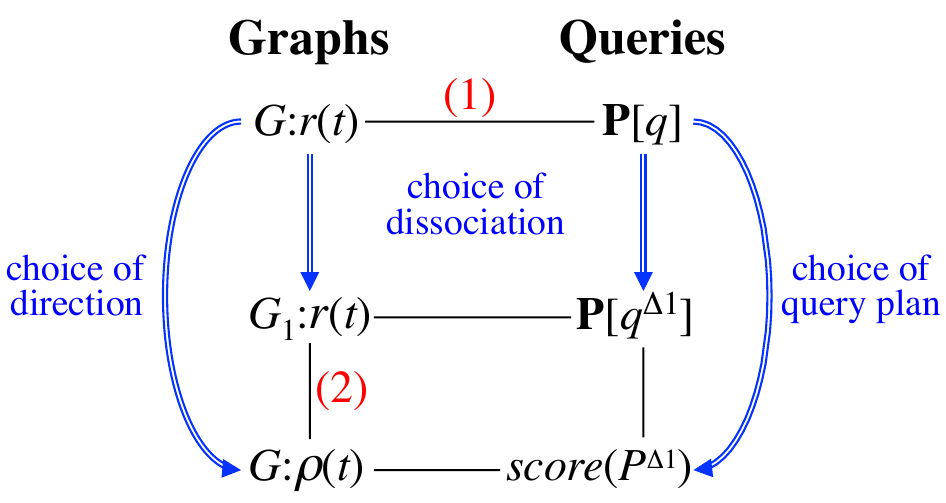}
\caption{\specificref{Example}{ex:app1}. 
Various correspondences between graphs and queries, and between reliability and propagation scores.
}
\label{Fig_Correspondences}
\end{figure}

\section{\autoref{sec:partialDissociationOrder}: 
Proof
Partial dissociation order}

\begin{proof}[\autoref{th:partialDissociationOrder}: Partial dissociation order] 
We already know that the following direction holds: 
\begin{equation*}
	\Delta \succeq \Delta'   \,\,\Rightarrow\,\,  \forall D: \PP{q^{\Delta}} \geq \PP{q^{\Delta'}}
\end{equation*}
i.e.\ whenever two dissociations are comparable, then we know which has a higher reliability scores on every database  $D$. This follows from 
\autoref{th:bool:dissoc} and the observation that there is a one-to-one correspondence between a query dissociations and a positive DNF dissociation. Concretely, at the level of its \emph{lineage expression}, a query dissociation is a dissociation of a $k$-partite DNF.

We next prove the other direction:
\begin{equation*}
	\Delta \succeq \Delta'   \,\,\Leftarrow\,\,  \forall D: \PP{q^{\Delta}} \geq \PP{q^{\Delta'}}
\end{equation*}
i.e.\ whenever  $\PP{q^{\Delta}} \geq \PP{q^{\Delta'}}$ holds for two dissociations over any database $D$, then $\Delta \succeq \Delta'$ in the partial dissociation order. In other words, for any two dissociations  $\Delta$ and $\Delta'$ of a query $q$ that are incomparable, i.e. $\Delta \not \succeq \Delta'$ and $\Delta \not \preceq \Delta'$, there exist two databases so that the dissociated probability of either dissociation becomes bigger. We prove this by showing the contrapositive
\begin{equation*}
	\Delta \not \succeq \Delta'   \,\,\Rightarrow\,\,  \exists D: \PP{q^{\Delta}} < \PP{q^{\Delta'}} 
\end{equation*}
i.e.\ for any two dissociations  $\Delta$ and $\Delta'$ of $q$,
with $\Delta \not \succeq \Delta'$,
there exists a database so that $\PP{q^{\Delta}} < \PP{q^{\Delta'}}$.
We will achieve this by constructing a database $D$ so that $\PP{q^{\Delta}} = \PP{q}$, but $\PP{q^{\Delta'}} > \PP{q}$:
Since $\Delta \not \succeq \Delta'$, there must exist one variable $x_i \in \vec y_j$ that is dissociated in relation $R_j$ of $\Delta'$ which is not dissociated in $\Delta$. W.l.o.g.\ let $x_1 \in \Var(q)$ be this variable, and $R_1$ be the relation.
W.l.o.g.\ consider the active domain $A$ of the database to be $\{a,b\}$. Then construct the following database $D$: 
\begin{enumerate}[label=\textup{\arabic*.}, itemsep=0pt, parsep=0pt, topsep = 0pt]
			
	\item For each relation $R_i \neq R_1$ with $x_1 \not \in \Var(R_i)$, insert one deterministic tuple ($p=1$) with '$a$' as value for each attribute $x_i \in \Var(R_i)$:
	\begin{align*}
		&R_i(a,a,\ldots, a), p = 1
	\end{align*}

	\item For each relation $R_i \neq R_1$ with $x_1\in \Var(R_i)$, insert two deterministic tuples ($p=1$) with '$a$' as value for each attribute $x_i \neq x_1$, and either '$a$' or '$b$' as value for attribute $x_1$:
	\begin{align*}
		&R_i(a,a,\ldots, a), p = 1 \\
		&R_i(b,a,\ldots, a), p = 1
	\end{align*}
		
	\item For relation $R_1$ with $x_1 \not \in \Var(R_i)$, insert one uncertain tuple ($p=0.5$) with '$a$' as value for each attribute:
	\begin{align*}
		&R_1(a,a,\ldots, a), p = 0.5 
	\end{align*}	
		
\end{enumerate}

Then for every dissociation $\Delta$ for which $x_1 \not \in \vec y_1$:  $\PP{q^{\Delta}} = \PP{q} = 0.5$. This follows from two facts: (\textit{i}) certain tuples that get dissociated do not change the query probability; (\textit{ii}) the only probabilistic tuple in $R_1$ only gets dissociated if the dissociation includes $x_1$ since all other variables only include one single value in the active domain. On the other hand, for every dissociation $\Delta'$ for which $x_1 \in \vec y_1$:  $\PP{q^{\Delta'}} = 0.75 > \PP{q}$.
Hence, we have shown that for $D$: $\PP{q^{\Delta}} < \PP{q^{\Delta'}}$.
\qed
\end{proof}

\section{\autoref{sec:queryPlans}: 
Proof
Hierarchical dissociation}

\begin{proof}[\autoref{th:safeDissociation}: Hierarchical dissociation]
Let $P = f(\Delta)$ be the plan corresponding to a hierarchical dissociation $\Delta$, and
let 
$\Delta = g(P)$ 
be the hierarchical dissociation corresponding to a plan $P$.
To prove the isomorphism, we have to show both directions:

(a) $g(f(\Delta)) = \Delta$: Consider a hierarchical dissociation $q^{\Delta}$ and denote its corresponding unique safe plan $P_{\Delta}$. 
This plan uses dissociated relations, hence each relation $R^{\vec y_i}_i(\vec x_i, \vec y_i)$ has extraneous variables $\vec y_i$. 
If we drop all variables $\vec y_i$ from the relations, then this transforms $P_{\Delta}$ into a regular (unsafe) plan $P$ for $q$. 
If we now consider all those variables $\vec y_i$ that we have thus dropped from a relation $R_i$, then these are exactly those variables that are added by recursively adding all existential variables $\JVar - \HVar(P_j)$ for each join operator.

(b) $f(g(P)) = P$: Consider a possibly unsafe plan $P$ for $q$ and recursively dissociate each relation $R_i$ occurring in a subplan $P_j$ of a join operation $\joinp{}{P_1,\ldots,P_k}$ on the missing existential variables $\JVar - \EVar(P_j)$. 
Then this defines a unique hierarchical dissociation $\Delta$ of $q$. Now consider the unique safe plan for $q^{\Delta}$. 
Since the safe plan $P_{\Delta}$ is the only plan for which all subplans share the same head variables, it must be the same plan as the original unsafe plan for $q$.
\qed
\end{proof}

\section{\autoref{sec:enumeratingMinimalQueryPlans}: 
Proof
\Autoref{alg:basicAlgorithm}
}

\begin{proof}[\autoref{prop:algorithmSound}: \Autoref{alg:basicAlgorithm}]
First, recall from the proof of \autoref{th:safeDissociation} that we go from a plan $P$ to a hierarchical dissociation by recursively dissociating each relation $R_i$ occurring in a subplan $P_j$ of a join operation $\joinp{}{P_1,\ldots,P_k}$ on the missing existential variables $\JVar - \HVar(P_j)$. Hence, in a project plan $P = \projp{\vec x} P'$, all relations occurring in $P'$ either already contain or are dissociated on each variable in $\vec x$. This is since $\vec x$ are the join variables if $P'$ is a join plan, and it holds trivially if $P'$ is a single relation.

(1) \emph{Soundness}: We show that every plan produced by \autoref{alg:basicAlgorithm} corresponds to a minimal hierarchical dissociation, i.e.\ it is not ``dominated'' by (that means has always bigger or equal probably than) any other plan. 
We show this by induction on the set of relations in a plan. At each step, any query must either project or join. 
Joins are only possible if the join variables are identical to the head variables of the plan, hence if the subplans share no existential variables. If at a step, there are disconnected components, then all minimal plans need to join on those components (as any project on a set of variables is clearly dominated by a join), then project on those variables only in the component that contains this set. 
If the relations are connected, then each plan needs to project on variables so that the query becomes disconnected. 
We need to show that projecting on a minimal set of variables that disconnects the query cannot be dominated by any other hierarchical dissociation. 
This follows immediately, as any projection on a subset does not disconnect the query, and every projection on a superset is dominated, and any overlapping set of variables cannot dominate this dissociation.

(2) \emph{Completeness}: We show that \autoref{alg:basicAlgorithm} returns a plan for each minimal hierarchical dissociation. We again show this by induction on the set of relations in a plan. 
Recall from before that if a query is disconnected, then the minimal query plan needs to join all components and we only need to focus on the case when the query is not disconnected. 
We need to consider all possible subsets of the variables that disconnect the query. We have shown that only those can be minimal that are not supersets of variables that alone disconnect. 
Since the algorithm iterates over all minimal subsets that disconnect the query, it is complete.
\qed
\end{proof}

\section{\autoref{sec:4:other}: 
Proof
Small probabilities}
In the following, we write $\mathcal{P}_{> k}([n])$ for the set of subsets of the powerset $\mathcal{P}([n])$ of cardinality bigger or equal to $k$. 

\begin{lemma}[DNF polynomial]\label{lem:multlinearDNF}
Let $\phi(\vec x)$ be a positive minimal $k$-uniform DNF of size $m$ containing $n$ total variables.
Then the multilinear polynomial for $\phi(\vec x)$ is
\begin{align*}
	\PP{\phi(\vec x)} &= \sum_{j=i}^m \prod_{i \in \vec c_j} p_i + 
		\sum_{S \in \mathcal{P}_{> k}([n])} c(S) \prod_{i \in S} p_i
\end{align*}
\end{lemma}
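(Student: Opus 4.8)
The statement of \autoref{lem:multlinearDNF} claims a precise structural form for the multilinear polynomial $\PP{\phi(\vec x)}$ of a positive minimal $k$-uniform DNF: its lowest-degree monomials are exactly the $m$ degree-$k$ products $\prod_{i\in\vec c_j}p_i$, one per term, each with coefficient $+1$, and all remaining monomials have degree strictly greater than $k$. My plan is to start from the inclusion--exclusion formula \autoref{inclusionExclusion} already derived in the excerpt,
\[
	\PP{\textstyle\bigvee_{j=1}^m c_j} = \sum_{l=1}^m (-1)^{l-1} \!\!\!\sum_{1\le j_1<\cdots<j_l\le m}\PP{\textstyle\bigwedge_{i=1}^l c_{j_{i}}},
\]
and analyze the degree of each resulting monomial. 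Recall that for independent variables $\PP{\bigwedge_i c_{j_i}} = \prod_{t\in \bigcup_i \vec c_{j_i}} p_t$, since conjoining terms simply takes the union of their variable sets (a variable appearing in several terms contributes only a single factor $p_t$, which is the whole point of $\PP{x_t\wedge x_t}=\PP{x_t}$ for the same event). So every contribution is a multilinear monomial whose degree equals $\big|\bigcup_i \vec c_{j_i}\big|$.

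\textbf{Key steps in order.}
First I would isolate the $l=1$ layer: these are exactly the single-term probabilities $\sum_{j=1}^m \prod_{i\in \vec c_j} p_i$, each of degree $k$ by $k$-uniformity, each with coefficient $+1$. This produces the first sum in the claimed formula. Second, I would argue that every contribution from layers $l\ge 2$ has degree $>k$: when $l\ge 2$ we union the variable sets of at least two \emph{distinct} terms $c_{j_1},\dots,c_{j_l}$, and minimality of the DNF (no term's variable set contains another's) forces $\big|\bigcup_i \vec c_{j_i}\big| > k$. This is the crucial place where the ``minimal'' hypothesis is used: if one term's variable set were contained in another's, a degree-$k$ monomial could be produced by a higher layer, spoiling the clean coefficient structure. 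Third, having established that no layer $l\ge 2$ ever produces a degree-$k$ (or smaller) monomial, I would collect all of those higher-layer contributions into the second sum $\sum_{S\in\mathcal P_{>k}([n])} c(S)\prod_{i\in S}p_i$, where $c(S)$ absorbs the signed inclusion--exclusion coefficients for the subset $S$, and note there is no degree-less-than-$k$ monomial at all (each $c_j$ already has exactly $k$ variables).

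\textbf{The main obstacle.}
The delicate point is the degree argument for $l\ge 2$, specifically pinning down why \emph{minimality} (rather than merely $k$-uniformity) is what guarantees strict degree increase. Since all terms have the same size $k$, two distinct terms $c_{j_1}\ne c_{j_2}$ automatically satisfy $\vec c_{j_1}\not\subseteq\vec c_{j_2}$ and $\vec c_{j_2}\not\subseteq\vec c_{j_1}$ (equal-sized distinct sets are incomparable), so $\vec c_{j_1}\cup\vec c_{j_2}$ strictly contains each, giving $|\vec c_{j_1}\cup\vec c_{j_2}|\ge k+1$; adding more terms only enlarges the union. Thus for $k$-uniform DNFs minimality is in fact automatic, and I would remark on this so the argument does not rely on any hidden assumption. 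The remaining care is purely bookkeeping: verifying that the degree-$k$ monomials arising at $l=1$ are genuinely distinct across the $m$ terms (they are, since distinct terms give distinct variable sets), so that no cancellation or merging occurs at degree $k$ and the coefficient on each $\prod_{i\in\vec c_j}p_i$ is exactly $1$. Everything else is just gathering the signed higher-degree terms into the coefficients $c(S)$, which requires no explicit computation for this structural statement.
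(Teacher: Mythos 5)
Your proof is correct and follows essentially the same route as the paper's: both start from the inclusion--exclusion expansion and observe that any conjunction of two or more distinct $k$-variable terms involves a union of at least $k+1$ variables, so only the singleton layer contributes degree-$k$ monomials. Your additional remarks (that minimality is automatic under $k$-uniformity, and that the $m$ degree-$k$ monomials are distinct so their coefficients are exactly $1$) are sound elaborations of the same argument.
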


In other words, all of the terms of the multilinear polynomial for $\phi$ are of order bigger or equal to $k$. Furthermore, most are of order $>k$  except for $\sum_{j=i}^m \prod_{i \in \vec c_j} p_i$, which corresponds to summing up the probabilities of each conjunct.

\begin{proof}[\autoref{lem:multlinearDNF}]
Applying the inclusion-exclusion principle to the event probability for a \emph{positive DNF} leads to:
\begin{align}
	&\PP{\bigvee_{j=1}^m c_j} = \sum_{k=1}^m (-1)^{k-1} \!\!\!\!\!  \sum_{1\leq j_1 < \ldots < j_k \leq m} 
		\PP{\bigwedge_{i=1}^{k} c_{j_i}}
\label{inclusionExclusion2}
\end{align}
\Autoref{lem:multlinearDNF} then follows immediately by noting that any conjunction of terms $c_{j_1} \wedge c_{j_2}$ needs to consist of at least $k+1$ factors. This follows from the fact that any two terms $c_{j_1} \neq c_{j_2}$ need to have at least one different variable. 
For example, $x_1x_2x_3 \wedge x_1x_2x_4 = x_1 x_2 x_3 x_4$.
\qed
\end{proof}

\begin{proof}[\autoref{prop:smallProbabilities}: Small probabilities]
The lineage for a self-join-free conjunctive query $q$ is a positive $k$-uniform DNF of size $m$. Also, the lineage of any dissociation $q'$ of this query (in particular the one with the minimal score) is a positive $k$-uniform DNF of size $m$. It follows that 
\begin{align*}
	\PP{q} &= \sum_{j=i}^m \prod_{i \in \vec c_j} p_i + 
		\sum_{S \in \mathcal{P}_{> k}([n])} c(S) \prod_{i \in S} p_i	\\
	\PP{q'} &= \sum_{j=i}^m \prod_{i \in \vec c_j} p_i + 
		\sum_{S \in \mathcal{P}_{> k}([n])} c'(S) \prod_{i \in S} p_i		
\end{align*}
Next consider the operation of scaling down all probabilities by $f$, $p_i' = f p_{i}$, and the implication on
$\epsilon := \frac{\rho(q) - \PP{q}}{\PP{q}}$.
Observe that in the nominator, the terms of order $k$ cancel out and it becomes a sum of terms with minimum order $k+1$. In contrast, the denominator still has minimum order of $k$. After dividing both nominator and denominator by $f^k$, we have the each term in the denominator is multiplied with at least $f$, whereas the denominator has some terms without $f$. We thus have $\lim_{f \rightarrow 0^+} \epsilon = 0$.
\qed
\end{proof}

Finally, notice that the proof here was closely inspired by the proof for \cite[Theorem 19]{DBLP:journals/pvldb/GatterbauerGKF15} and v.v.

\section{\autoref{sec:4:other} (3): 
Number of minimal query plans
}\label{app:numberOfQueryPlans}

\introparagraph{Star queries}
Consider the $k$-star query $q \datarule R_1(x_1), \ldots, R_k(x_k), U(\vec x)$ and the $k!$ permutations on the order on $\vec x$. To simplify notation and w.l.o.g., we consider the permutation $\sigma = (x_{1}, \ldots, x_{k})$. We show the following query plan is minimal:
\begin{align*}
	P & = 
		\projp{x_{1}} \joinp{}{R_{1}(x_{1}),
        \projp{x_{2}} \joinp{}{\ldots, 
		\projp{x_{k}} \joinp{}{R_{k}(x_{k}), U(\vec x)
		}}}
\end{align*}
$P$ has $k$ projections and corresponds to a dissociation $\Delta$, where relation $R_{i}$ is dissociated on
$i-1$ variables $\vec x_{[1,i-1]}$.
Each query plan according to \autoref{def:queryPlans} must have either one projection and a single relation, or a join between more relations at the end of its branches. No query plan for $q$ can have a projection at the leaf since each variable appears in at least 2 relations. 
Since there is only one relation $U$ that joins with other relations, there can only be one leaf with a join between $U$ and at least one other relation $R_{i}$ and there cannot be several branches. Further any query plan for $q$ can have at maximum $k$ projections for each of the $k$ variables. As a consequence, each plan with $k$ projections is isomorph to a total order on $\vec x$.
Next consider a query plan that has less then $k$ projections while keeping the order of $\sigma$. This plan must have at least one $R_i$ that is dissociated on at least $i$ variables and is, hence, dominated by $P$. Hence, $P$ is minimal. Since there are $k!$ possible permutations, there are $k!$ such minimal plans.

The total number of plans (or hierarchical dissociations) is equal to the number of weak orderings definable on $k$ alternatives. This is a consequence of the relation $U(\vec x)$ which contains all variables. Hence, each query plan must define a hierarchy between the variables in which all variables are comparable, and in which ties are allowed. This is exactly the definition of a weak ordering (or total preorder), and the number of such is given by the OEIS sequence 
\href{http://oeis.org/classic/A000670}{A000670}\footnote{\url{http://oeis.org/classic/A000670}}: 
1, 3, 13, 75, 541, 4\;\!683, 47\;\!293, 545\;\!835, $\ldots$.

\introparagraph{Chain queries}
Next consider the Boolean $k\!+\!1$-chain query $q \datarule $ $R_1(x_1), R_2(x_1, x_2),\ldots, R_{k\!+\!1}(x_k)$.
Note that each variable is shared only by two subsequent relations. The leaves of each query plan must thus have exactly two relations. Furthermore, every minimal query plan must have exactly $k$ joins between subplan corresponding to combining one variable at a time. Hence, the number of minimal query plans for a $k\!+\!1$-chain query is equal to the number of ways to insert $k$ pairs of parentheses in a word of $k\!+\!1$ letters. This corresponds to OEIS sequence
\href{http://oeis.org/classic/A000108}{A000108}\footnote{\url{http://oeis.org/classic/A000108}}: 1, 2, 5, 14, 42, 132, 429, 1\;\!430, $\ldots$.

The number of total query plans corresponds to OEIS sequence
\href{http://oeis.org/classic/A001003}{A001003}\footnote{\url{http://oeis.org/classic/A001003}}: 1, 3, 11, 45, 197, 903, 4\;\!279, 20\;\!793, $\ldots$.
This is the number of ways to insert parentheses in a string of $k$ symbols. The parentheses must be balanced but there is no restriction on the number of pairs of parentheses. The number of letters inside a pair of parentheses must be at least 2. Parentheses enclosing the whole string are ignored.

The total number of dissociations for both, $k$-star query and $k\!+\!1$-chain query, is given by $2^K$, where $K = k(k-1)$ is the number of undissociated variables. \Autoref{table:numberMinimalQueryPlans} summarizes these numbers in one table.

\section{\autoref{sec:deterministicOptimization}: 
Proof 
Dissociation and DRs}

\begin{proof}[\autoref{lemma:DetDissociation}: Dissociation and DRs]
This lemma follows immediately from \specialref{Theorem}{th:bool:dissoc}{(2)} and noting that dissociating tuples in DRs corresponds exactly to dissociating variables with probabilities equal to 1 in the lineage formula.
\qed
\end{proof}

\section{\autoref{sec:deterministicOptimization}: 
Proof
\autoref{alg:basicAlgorithm} with DRs}

In order to prove \autoref{prop:algorithm2}, we need to add two more lemmas and to extend the definition of separator variables:

\begin{lemma}[Deleting variables from safe queries]\label{lemma:deletingVariables}
If $q$ is safe and $x \in \EVar(q)$, then deleting $x$ from $q$ leads to a query $q'$ that is still safe.
\end{lemma}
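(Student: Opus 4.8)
The plan is to avoid reasoning about query plans directly and instead use the syntactic characterization of safety: by \autoref{prop:uniqueSafePlan} (together with the dichotomy recalled in \autoref{sec:background}), a self-join-free conjunctive query is safe if and only if it is hierarchical in the sense of \autoref{def:hierq}. So it suffices to show that deleting an existential variable preserves the hierarchical property, which is a purely combinatorial statement about variable co-occurrence and needs no probabilistic argument at all.

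First I would fix notation for the operation ``delete $x$''. Let $q'$ be obtained from $q$ by removing the variable $x$ from every subgoal $g_i$ that contains it (dropping the corresponding attribute), while leaving all other variables and all constants untouched. Since $x \in \EVar(q)$, we have $\HVar(q') = \HVar(q)$, $\Var(q') = \Var(q)\setminus\{x\}$, and $\EVar(q') = \EVar(q)\setminus\{x\}$. The query $q'$ is still self-join-free, because each subgoal still refers to a distinct relation symbol; in particular, a subgoal whose only variable was $x$ simply becomes variable-free, which does not affect any of the arguments below.

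The key observation is that deleting $x$ leaves the subgoal-membership of every \emph{other} variable unchanged: writing $sg_q$ and $sg_{q'}$ for the ``set of subgoals containing a variable'' operator of \autoref{def:hierq} evaluated in $q$ and in $q'$ respectively, for every $y \in \Var(q')$ we have $sg_{q'}(y) = sg_q(y)$, since removing the column $x$ from a subgoal neither deletes the subgoal nor changes whether it mentions a variable $y \neq x$. Now take any two existential variables $y,z \in \EVar(q')$; both are distinct from $x$, so because $q$ is hierarchical one of $sg_q(y)\subseteq sg_q(z)$, $sg_q(y)\cap sg_q(z)=\emptyset$, or $sg_q(y)\supseteq sg_q(z)$ holds, and substituting the identity $sg_{q'}=sg_q$ on both sides yields the same trichotomy for $y,z$ in $q'$. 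Hence $q'$ is hierarchical, and therefore safe.

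The only thing that requires care --- and what I expect to be the main, though minor, obstacle --- is pinning down the exact meaning of ``deleting a variable'' at the schema level and checking the degenerate cases (a subgoal reduced to constants, or a variable whose deletion empties a relation). Once the operation is defined so that the set of subgoals is preserved, the entire proof reduces to transporting the trichotomy of \autoref{def:hierq} across the identity $sg_{q'}=sg_q$; this cleanly sidesteps the complication one would face in the plan-based route, where eliminating a projected-away $x$ could collapse a projection and violate the ``joins and projections alternate'' requirement of \autoref{def:queryPlans}.
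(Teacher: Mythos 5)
Your proof is correct and follows essentially the same route as the paper's: invoke the hierarchical characterization of safety and observe that deleting $x$ leaves $sg(y)$ unchanged for every other variable $y$, so the trichotomy of \autoref{def:hierq} is preserved. The paper states this in two sentences; your version just spells out the definition of the deletion operation and the degenerate cases more carefully.
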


\begin{proof}[\autoref{lemma:deletingVariables}]
$q$ is safe iff for any two existential variables $y, z$, one of the following three conditions hold: $\at(y) \subseteq \at(z)$, $\at(y) \cap \at(z) = \emptyset$, or $\at(y) \supseteq \at(z)$. Deleting another variable $x$ from the query does not change the hierarchy between the other variables.
\qed
\end{proof}

\begin{definition}[p-separator]
A ``\emph{p-separator variable}'' is an existential variable that appears in every probabilistic atom of a connected query. We write $\SepPVar(q)$ for the set of all p-separators of $q$.
\end{definition}

\begin{lemma}[Separators and minimal plans]\label{cor:SeparatorVariables}
Let $x$ be a p-separator variable of $q$, then all query plans minimal in ${\preceq^p}$ have $x$ as ``root variable'', i.e.\ a variable that is projected away in the last projection.
\end{lemma}

\begin{proof}[\autoref{cor:SeparatorVariables}]
Let $q$ be connected by its existential variables and $x \in \SepPVar(q)$. If $x$ is contained in all atoms, then $x$ is a root variable in all plans produced by \autoref{alg:basicAlgorithm}.
If $x$ is not contained in all atoms 
(i.e.\ there is a deterministic relation that does not contain $x$),
then there exists a min-cut $\vec y$ considered by the original \autoref{alg:basicAlgorithm} that does not contain $x$. We need to show that, for any hierarchical dissociation $\Delta_{\vec y}$ with $\vec y$ as root variables, there is another hierarchical dissociation $\Delta_x$ with $x$ as root variable and $\PP{q^{\Delta_x}} \leq \PP{q^{\Delta_{\vec y}}}$. 

For that, consider the dissociation $\Delta_{\vec y x}$ obtained by further dissociating $\Delta_{\vec y}$ on $x$. 
According to \autoref{lemma:DetDissociation}, $\PP{q^{\Delta_{\vec y x}}} = \PP{q^{\Delta_{\vec y}}}$, and 
according to \autoref{lemma:deletingVariables}, $\Delta_{\vec y x}$ must be safe. As a consequence, $\Delta_{\vec y x}$ is a hierarchical dissociation that has $\vec y \cup x$ as root variables. Therefore, there exists a minimal hierarchical dissociation $\Delta_x$ (which may or may not be the same as $\Delta_{\vec y x}$) with $x$ as root variable and $\PP{q^{\Delta_x}} \leq \PP{q^{\Delta_{\vec y x}}} = \PP{q^{\Delta_{\vec y}}}$. 
As a consequence, query plans with $x$ as root variable will always have an equal or lower score than plans with $\vec y$ as root variables. Hence, all plans minimal in ${\preceq^p}$ have $x$ as root variable.
\qed
\end{proof}

\begin{proof}[\autoref{prop:algorithm2}: \autoref{alg:basicAlgorithm} with DRs]
Our proof shows that the modified algorithm correctly enumerates the minimal query plans in the presence of deterministic relations, i.e.\ it returns one plan for each minimum safe equivalence class of ${\preceq^p}$.
	
(1) \emph{Soundness}: We show that every plan produced by the modified algorithm corresponds to a minimal safe equivalence class, i.e.\ it produces no two plans 
where one plan dominates the other.
Note that the algorithm can only create two different plans whenever we need to iterate over at least 2 different minimal p-cuts in \autoref{alg1:line10} after modification of the algorithm. This happens whenever we don't have a p-separator set that alone disconnects the query ($\{\SepPVar(q)\} = \MinPCuts(q) $).
Thus, for each two different min-cuts, the algorithm needs to dissociate at least one non-deterministic relation on a variable that the other one does not dissociate on. 
Hence no query plan produced can dominate another one.

(2) \emph{Completeness}: We show that the modified algorithm returns a plan for each minimal hierarchical dissociation in ${\preceq^p}$, i.e.\ for every plan $P'$ from the original \autoref{alg:basicAlgorithm}, there is a plan $P$ from \autoref{alg:basicAlgorithm} after modification with lower or equal score. 
If $q$ is disconnected, then nothing changes from \autoref{alg:basicAlgorithm}, and we only need to focus on the case when the query is not disconnected. If $q$ is connected, then completeness follows immediately from \autoref{cor:SeparatorVariables}: all minimal plans need to have all p-separator variables as root variables. 
\qed
\end{proof}

\section{\autoref{sec:FDOptimization}: 
Proof
Dissociation and FDs}

\begin{proof}[\autoref{lemma:FDs1}: Dissociation and FDs]
Let $\bm\upGamma$ be the set of FDs on $\Var(q)$ consisting of the union of FDs on every atom in $q$.
First notice that $\bm\upGamma$ also holds for the natural join between all tables in the query.
Next consider a relation relation $R(\vec x_i)$ and let $\vec{x}_i^+$ be the closure of $\vec x_i$ under $\bm\upGamma$.
Then any join result with the same values $\vec a$ for $\vec x_i$ has exactly one tuple from $R_i$ in its lineage, which does not change after dissociating $R_i$ on $\vec x_i^+ \!- \vec x_i$. Therefore, also the lineage and the query reliability remain the same.
\qed
\end{proof}

\section{\autoref{sec:FDOptimization}: 
Proof
\autoref{alg:basicAlgorithm}
with DRs and FDs}

Before we prove the correctness of \autoref{prop:allQueryPlansFDs}, we introduce a helpful \autoref{lemma:FDs2} for FDs, which has no analogy for DRs. 
With DRs, eagerly dissociating all deterministic tables in a hierarchical dissociation $q^{\Delta}$ can lead to a non-hierarchical dissociation (see  \autoref{appendix:RemarksDRs} for an example).
With a functional dependency $\Gamma: \vec x \rightarrow y$, however, we can first eagerly dissociate the dependent variable $y$ in \emph{all relations} that include $\vec x$ among their variables, then apply our previous algorithm 
In other words, we can replace a query $q$ with its closure $q^+$.
In particular, if $q$ is hierarchical, then the closed query $q^+$ is still hierarchical.
Note the emphasis on \emph{all relations} as applying a FD only selectively 
(i.e.\ only to a subset of all relations) 
can turn a hierarchical query into a non-hierarchical one
(see  \autoref{appendix:RemarksFDs} for an  example).

\begin{lemma}(FD dissociation and hierarchies)\label{lemma:FDs2}
Given a hierarchical query $q$ and a FD $\Gamma: \vec x \rightarrow y$.
Dissociating all relations $R_i$ with $\vec x \subseteq \Var(R_i)$ on the dependent variable $y$ leads to a query that is still hierarchical.
\end{lemma}

\begin{proof}[\autoref{lemma:FDs2}: FD dissociation and hierarchies]
Assume that the FD $\Gamma: \vec x \rightarrow y$ holds and that query $q$ is hierarchical. 
We show that applying $\Gamma$ eagerly to \emph{all} atoms in $q$ results in a query $q^{\Delta}$ that is still hierarchical. 
Let $\at(\vec x)$ denote the atoms containing all variables of $\vec x$,  
$\at(y)$ the atoms containing $y$ in $q$, 
and $\at^\Delta(y)$ the atoms containing $y$ in $q^\Delta$.
First notice that $\at(\vec x) \cap \at(y) \neq \emptyset$ since we have one relation that enforces $\Gamma$.
If 	$\at(\vec x) \supseteq \at(y)$, then $\at(\vec x) = \at^\Delta(y)$ and $q^\Delta$ is still hierarchical, as removing $y$ from the original hierarchy cannot invalidate the hierarchy.
If, however, $\at(\vec x) \subset \at(y)$, then $\at(y) = \at^\Delta(y)$ and thus the hierarchy remains unchanged.
\qed
\end{proof}

Our third modification applies \autoref{lemma:FDs2} to dissociate eagerly on all functional dependencies (i.e.\ by replacing a query $q$ by its closed query $q^+$) at the beginning of each recursive call of our previously modified \autoref{alg:basicAlgorithm}. We next prove that the resulting algorithm returns a sound and complete enumeration of query plans minimal in 
${\preceq^{p}}$ 
in the presence of FDs and DRs, i.e.\ it returns exactly one from each minimal safe equivalence class in ${\preceq^{p}}$.

\begin{proof}[\autoref{prop:allQueryPlansFDs}: \autoref{alg:basicAlgorithm} with DRs and FDs]
In the following, let $q^+$ stand for a query $q$ that is closed after applying FDs $\bm{\upGamma}$,
``ALG'' for our previously modified \autoref{alg:basicAlgorithm}, and
``ALG$^+$'' for ALG after
 our additional modification that replaces a query $q$ with its closure $q^+$ at the start of each recursive call.
We start from the observation that ALG will return a superset of the plans we need (we need one plan from each minimal safe dissociation). In other words, ALG may return two plans $P$, $P'$ such that $P$ dominates $P'$ under functional dependencies.  ALG$^+$ returns a subset of the plans returned by ALG.  
We will prove two directions:

{(1) \emph{Completeness}}: 
We prove that every plan $P'$ returned by ALG is dominated (under functional dependencies) by some plan $P$ returned by ALG1$^+$ 
(recall that ``dominating'' means having smaller or equal scores over every database).
Suppose the contrary and let $P'$ be a plan that strongly dominates all plans produced by 
ALG$^+$
(i.e.\ $P'$ has a smaller score over some database).
Then let $q'$ be the dissociated hierarchical query corresponding to $P'$ 
and let $q''$ be the query after closing $q'$ under FDs $\bm\upGamma$.
Then we know from \autoref{lemma:FDs1} that $\PP{q''}= \PP{q'}$. 
Furthermore, since $q'$ is hierarchical, we know from \autoref{lemma:FDs2} that $q''$ must be hierarchical as well. 
Furthermore, we know that $q'' \succeq q^+$ in the partial dissociation order of $q^+$.
As a consequence, 
ALG$^+$
will enumerate a query plan $P''$ with $\score(P'') = \PP{q''}= \PP{q'}$ in one of the produced query plans, which violates our assumption.

{(2) \emph{Soundness}}: 
We prove that no two plans $P_1$, $P_2$ returned by ALG$^+$ can dominate each other under functional dependencies.  We already know that none dominates the other in the absence of FDs, by the soundness of ALG. 
Our claim now follows from the correctness of the previously modified algorithm and an additional induction step:
($i$) If $q^+$ is hierarchical, then there is just one query plan. 
($ii$) If $q^+$ is not hierarchical, then there are at least two different minimal p-cuts, either of which dissociates at least one probabilistic relation on a variable that the other one does not. 
Hence no query plan resulting from one minimal p-cut can dominate a query plan resulting from another minimal p-cut. 
Since repeated calls to our recursive algorithm always end up at a hierarchical query, 
soundness now follows by induction from the leaves of the query plans.
\qed
\end{proof}

\section{\autoref{sec:deterministicOptimization}: 
Remarks on deterministic relations}\label{appendix:RemarksDRs}
At a quick glance, \autoref{lemma:DetDissociation} seems to suggest that we can first dissociate all deterministic relations 
and then apply our standard algorithm to find the set of minimal plans of a query. 
However, this is not correct as we illustrate next with a counter-example:

\begin{figure}[t]
\centering
\renewcommand{\tabcolsep}{0.65mm}
\renewcommand{\arraystretch}{0.9}
\subfloat[$q$]{
	\begin{tabular}[t]{@{\hspace{1pt}} >{$}c<{$}|>{$}c<{$} >{$}c<{$} >{$}c<{$} >{$}c<{$} @{\hspace{1pt}}}
			& x	& y & z \\
		\hline
		R	& \circ	&  \\		
		S	& \circ	& \circ \\
		T^d	& & \circ	& \circ \\
		U	& & & \circ
	\end{tabular}\label{Fig_CounterExampleDet_a}}
\hspace{5mm}
\subfloat[$q^{\Delta_1}$]{
	\begin{tabular}[t]{@{\hspace{1pt}} >{$}c<{$}|>{$}c<{$} >{$}c<{$} >{$}c<{$} >{$}c<{$} @{\hspace{1pt}}}
			& x	& y & z \\
		\hline
		R	& \graycell\circ	& \graycell\bullet  \\		
		S	& \graycell\circ	& \graycell\circ \\
		T^d	& & \graycell\circ	& \graycell\circ \\
		U	& & \graycell\bullet & \graycell\circ 
	\end{tabular}\label{Fig_CounterExampleDet_b}}
\hspace{5mm}	
\subfloat[$q^{\Delta_2}$]{
	\begin{tabular}[t]{@{\hspace{1pt}} >{$}c<{$}|>{$}c<{$} >{$}c<{$} >{$}c<{$} >{$}c<{$} @{\hspace{1pt}}}
			& x	& y & z \\
		\hline
		R	& \circ	&  \\		
		S	& \circ	& \circ \\
		T^d	& \dotcirc & \circ	& \circ \\
		U	& & & \circ 
	\end{tabular}\label{Fig_CounterExampleDet_c}}	
\hspace{25mm}
\renewcommand{\tabcolsep}{0.5mm}
\subfloat[$D$]{
	\label{Fig_CounterExampleInstance}	
	\begin{minipage}[t]{46mm}
	\begin{minipage}[t]{45mm}
		\mbox{
			\begin{tabular}[b]{ >{$}c<{$} | >{$}c<{$}	>{$}c<{$}	>{$}c<{$}   }
 			R	& A		\\
			\hline
			r_1	& a	\\
			r_2	& b					
			\end{tabular}}
		\hspace{1mm}
		\mbox{
			\begin{tabular}[b]{ >{$}c<{$} | >{$}c<{$} >{$}c<{$} >{$}c<{$} >{$}c<{$}}
 			S	& A		& B \\
			\hline
			s_1	& a		& c	\\
			s_2	& b		& c		
			\end{tabular}}
		\hspace{1mm}
		\mbox{
			\begin{tabular}[b]{ >{$}c<{$} | >{$}c<{$} >{$}c<{$}}
			T^d	& B		& C	\\
			\hline
			t_1	& c		& e\\	
			t_2	& c		& f
			\end{tabular}}
		\hspace{1mm}
		\mbox{
			\begin{tabular}[b]{ >{$}c<{$} | >{$}c<{$} >{$}c<{$}}
			U	& C	\\
			\hline
			u_1	&  e\\	
			u_2	&  f
			\end{tabular}}		
	\end{minipage}
	\vspace{-1mm}
	\end{minipage}}						
\caption{\autoref{ex:IncorrectDetDiss}. 
(a): Chain query $q \datarule R(x), S(x,y), T^d(y,z), U(z)$.
(b): Minimal hierarchical dissociation $q^{\Delta_1}$ for which $\rho(q^{\Delta_1}) = \PP{q}$ happens to hold on database $D$ shown in (d). 
(c): Dissociation $q^{\Delta_2}$ replaces $T^d(y,z)$ with $T^{d\{x\}}(x, y,z)$, and thus 
$\PP{q^{\Delta_2}} = \PP{q}$. However, $q^{\Delta_2}$ is not yet hierarchical, and any further minimal hierarchical dissociation increases the probability, and thus 
$\rho(q^{\Delta_2}) > \rho(q^{\Delta_1})$ on $D$.}
	\label{Fig_CounterExampleDet}
\end{figure}

\begin{example}[Incorrect deterministic dissociation 1]\label{ex:IncorrectDetDiss}
Consider the query
$q \datarule R(x), S(x,y), T^d(y,z), U(z)$
with deterministic relation $T^d$ over the database $D$ from \autoref{Fig_CounterExampleInstance}. Here, indexed small letters refer to tuples in the respective relations, e.g., $r_1$ for tuple $R(a)$. The lineage of $q$ is
$\Lineage(q) = r_1 s_1 t_1 u_1 \vee r_1 s_1 t_2 u_2 \vee r_2 s_2 t_1 u_1 \vee r_2 s_2 t_2 u_2 $.
Replacing $t_1$ and $t_2$ with $1$, the lineage can be factored into
$\Lineage(q) = (r_1 s_1 \vee r_2 s_2 ) (u_1 \vee u_2)$,
which is a read-once formula. 
Assuming all non-deterministic tuples to have the same probability $0.5$, the query reliability is $\PP{q} = 21/64 \approx 0.328$. 
Query $q$ has three min-cuts  $\MinCuts(q) = \{\{x\},\{y\},\{z\}\}$,
as easily seen from its incidence matrix (\autoref{Fig_CounterExampleDet_a}).
Dissociating $q$ on $y$ (\autoref{Fig_CounterExampleDet_b}) results in a hierarchical dissociation $q^{\Delta_1}$ that turns out to have, on the particular database $D$, exactly the same lineage expression as the original query: $\Lineage(q^{\Delta_1}) = \Lineage(q)$. 
Since it is a hierarchical dissociation, there is a plan $P_1$ that calculates the probability of $q^{\Delta_1}$ and its score is exactly the probability of the original query: 
$\score(P_1) = \PP{q}\approx 0.328$. 
However, if we had first dissociated the deterministic relation $T^d$ into 
$T^{d \{x\}}(x,y,z)$, 
we would have a dissociated query $q^{\Delta_2}$ (\autoref{Fig_CounterExampleDet_c}) that,
despite having the same probability as the original query, $\PP{q^{\Delta_2}} = \PP{q}$, 
\emph{cannot be made hierarchical by only dissociating variable $y$};
any minimum hierarchical dissociations of $q^{\Delta_2}$ requires 
dissociating at least $x$ or $z$.
This turns out to increase the propagation score to 
$\rho(q^{\Delta_2}) = 87/256 \approx 0.340 > 0.328 \approx \rho(q^{\Delta_1}) = \PP{q}$.
\markend
\end{example}

We give another example to illustrate that the sound and complete enumeration in the presence of deterministic relations is non-trivial:
A reasonable idea seems be to first join all deterministic relations, thereby eagerly reducing variables that appear only in deterministic relations. 
However, by doing so, we may miss some minimal plans, as we will illustrate next:

\begin{example}[Incorrect deterministic dissociation 2]\label{ex:alternativeIncorrectDeterministic}
We again consider the chain query 
$q \datarule R(x), S^d(x,y), T^d(y,z), U(z)$ over the database from \autoref{ex:IncorrectDetDiss}, but now
with two deterministic relations $S^d$ and $T^d$. Then lineage of $q$ is again
$\Lineage(q) =  r_1 s_1 t_1 u_1 \vee r_1 s_1 t_2 u_2 \vee r_2 s_2 t_1 u_1 \vee r_2 s_2 t_2 u_2 $.
Replacing $s_1, s_2, t_1$ and $t_2$ with $1$, the lineage can be simplified and factored into
$\Lineage(q) = (r_1 \vee r_2) (u_1 \vee u_2)$,
which is a read-once formula. Assuming all non-deterministic tuples to have the same probability $0.5$, the query reliability can easily be calculated as $\PP{q} = 9/16 \approx 0.563$. 
Dissociating $q$ on $y$ (\autoref{Fig_CounterExampleDet2_b}) results in a hierarchical dissociation $q^{\Delta_1}$ that turns out to have, on the particular database $D$, exactly the same lineage expression as the original query: $\Lineage(q^{\Delta_1}) = \Lineage(q)$. 
Since it is a hierarchical dissociation, there is a unique plan $P_{\Delta}$ that calculates the reliability of $q^{\Delta_1}$, and its score is exactly the reliability of the original query: $\rho(q) = \score(P_{\Delta_1}) = \PP{q}\approx 0.563$. 
However, if we had first joined the two deterministic relation $S^d$ and $T^d$ into $N^d(x,y,z) = \joind{}{S^d(x,y), T^d(y,z)}$ (\autoref{Fig_CounterExampleDet2_c}), and then projected the variable $y$ away into $N^{d\prime\prime} = \projd{y} N^d(x,y,z)$, then the propagation score of $q''$ turns out to be bigger then the one of the original query: $\rho(q'') = 39/64 \approx 0.609 > r(q'') = \rho(q)$.
\markend
\end{example}

\begin{figure}[t]
\centering
\renewcommand{\tabcolsep}{0.65mm}
\renewcommand{\arraystretch}{0.9}
\subfloat[$q$]{
	\begin{tabular}[t]{@{\hspace{1pt}} >{$}c<{$}|>{$}c<{$} >{$}c<{$} >{$}c<{$} >{$}c<{$} @{\hspace{1pt}}}
			& x	& y & z \\
		\hline
		R	& \circ	  \\		
		S^d	& \circ &  \circ \\
		T^d	& & \circ	& \circ \\
		U	& & & \circ 		
	\end{tabular}\label{Fig_CounterExampleDet2_a}}
\hspace{5mm}
\subfloat[$q^{\Delta_1}$]{
	\begin{tabular}[t]{@{\hspace{1pt}} >{$}c<{$}|>{$}c<{$} >{$}c<{$} >{$}c<{$} >{$}c<{$} @{\hspace{1pt}}}
			& x	& y & z \\
		\hline
		R	& \graycell\circ	& \graycell\bullet \\		
		S^d	& \graycell\circ &  \graycell\circ \\
		T^d	& & \graycell\circ	& \graycell\circ \\
		U	& & \graycell\bullet & \graycell\circ\\		
	\end{tabular}\label{Fig_CounterExampleDet2_b}}
\hspace{5mm}	
\subfloat[$q'$]{
	\begin{tabular}[t]{@{\hspace{1pt}} >{$}c<{$}|>{$}c<{$} >{$}c<{$} >{$}c<{$} >{$}c<{$} @{\hspace{1pt}}}
			& x	& y & z \\
		\hline
		R	& \circ	  \\		
		N^d	& \circ & \circ & \circ \\
		U	& & & \circ 		
	\vspace{3.5mm}
	\end{tabular}\label{Fig_CounterExampleDet2_c}}
\hspace{5mm}	
\subfloat[$q''$]{
	\begin{tabular}[t]{@{\hspace{1pt}} >{$}c<{$}|>{$}c<{$} >{$}c<{$} >{$}c<{$} >{$}c<{$} @{\hspace{1pt}}}
			& x	& z  \\
		\hline
		R	& \circ	&  \\		
		N^{d\prime\prime}	& \circ & \circ \\
		U	& & \circ 		
	\vspace{3.5mm}
	\end{tabular}\label{Fig_CounterExampleDet2_d}}
\caption{\autoref{ex:alternativeIncorrectDeterministic}: 
(a): We again use the chain query
and database from \autoref{ex:IncorrectDetDiss}
and \autoref{Fig_CounterExampleDet} but now have also relation $S^d$ deterministic.
(b): Minimal hierarchical dissociation $q^{\Delta_1}$ that calculates $\PP{q}$ exactly on database $D$. (c,d): First joining both deterministic tables into $N^d = \joind{}{S^d,T^d}$ and projecting $y$ away does not change the query reliability. However, there is no minimal hierarchical dissociation anymore that allows to calculate the original probability exactly: $\rho(q'') > \rho(q)$.}
	\label{Fig_CounterExampleDet2}
\end{figure}

\begin{figure}[t]
\centering
\includegraphics[scale=0.5]{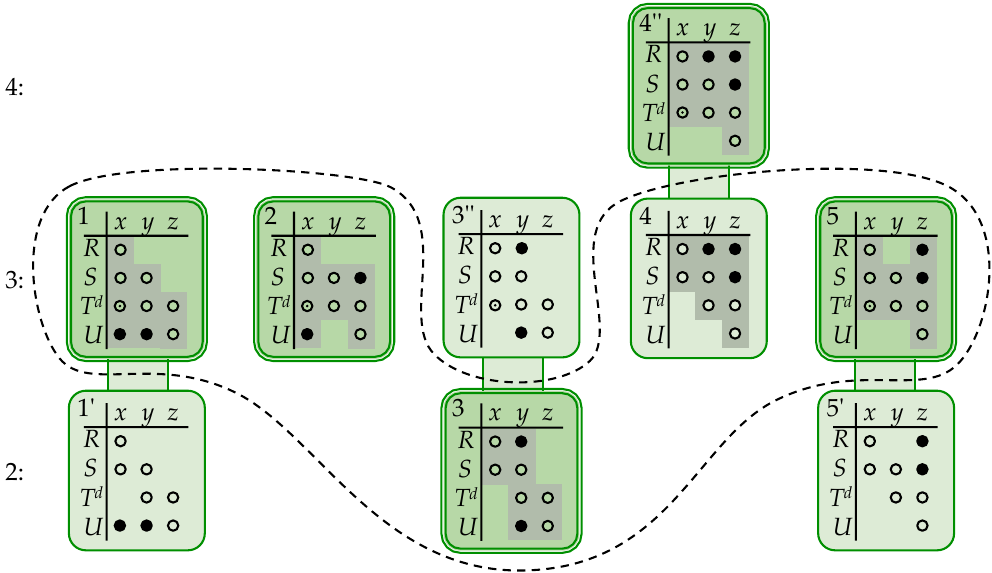}
\caption{\autoref{ex:PartialDissociationOrderExampleExcerpt}: 9 dissociations forming the 5 \emph{minimal safe equivalence classes} in the \emph{probabilistic dissociation preorder} among all 64 dissociations of chain query $q \datarule R(x), S(x,y), T^d(y,z), U(z)$.}
\label{Fig_PartialDissociationOrderExampleExcerpt}
\end{figure}

\begin{example}[Enumerating minimal plans]\label{ex:PartialDissociationOrderExampleExcerpt}
We illustrate here how the \emph{probabilistic dissociation preorder} changes in the presence of DRs for a slightly more complicated example: Consider again the chain query 
$q \datarule R(x), S(x,y), T^d(y,z), U(z)$
from \autoref{ex:alternativeIncorrectDeterministic}.
The query has $2^6= 64$ dissociations among which 5 are minimal and safe in the \emph{partial dissociation order} (thus ignoring DRs). Those 5 dissociations are returned by \autoref{alg:basicAlgorithm}  and shown in \autoref{Fig_PartialDissociationOrderExampleExcerpt} with numbers $1, \ldots, 5$ inside the dashed outline (the vertical height corresponds to the number of dissociated variables: 2, 3, or 4).
With $T^d$ deterministic, the probabilistic dissociation preorder has following equivalences for those 5 dissociations:
$\Delta_1' \equiv^p \Delta_{1}$,
$\Delta_3 \equiv^p \Delta_{3''}$,
$\Delta_4 \equiv^p \Delta_{4''}$,
and $\Delta_5' \equiv^p \Delta_{5}$.
The modified enumeration algorithm will thus return $4''$ instead of $4$ (it has fewer constraints on the join order).
Also notice from \autoref{Fig_PartialDissociationOrderExampleExcerpt} that $\Delta_{3''}$ is safe (because of $\Delta_3 \equiv^p \Delta_{3'}$) despite not being hierarchical.
\markend
\end{example}

\section{\autoref{sec:FDOptimization}:
Remark on completeness for functional dependencies}\label{appendix:RemarksFDs}
With a FD $\Gamma: \vec x \rightarrow y$, we can first eagerly dissociate the dependent variable $y$ in all relations that include $\vec x$ as variables, then apply our modified algorithm for enumerating all minimal plans. Note the emphasis on all (!) relations, as this property does not hold for arbitrary subsets as we illustrate with a small example:

\begin{figure}[t]
\centering
\renewcommand{\tabcolsep}{0.65mm}
\renewcommand{\arraystretch}{0.9}
\subfloat[$q$]{
	\begin{tabular}[t]{@{\hspace{1pt}} >{$}c<{$}|>{$}c<{$} >{$}c<{$} >{$}c<{$} >{$}c<{$} @{\hspace{1pt}}}
			& x	& y & z \\
		\hline
		R	& \graycell\circ	& \graycell\circ & \graycell\circ \\		
		S	& \graycell\circ	 \\
		U	& \graycell\circ & & \graycell\circ 
	\end{tabular}\label{Fig_CounterExampleFD_a}}
\hspace{3mm}
\subfloat[$q'$]{
	\begin{tabular}[t]{@{\hspace{1pt}} >{$}c<{$}|>{$}c<{$} >{$}c<{$} >{$}c<{$} >{$}c<{$} @{\hspace{1pt}}}
			& \multicolumn{2}{c}{$\, x \!\!\! \shortrightarrow \!\!\! y$\hspace{0.5mm}} & z \\
		\hline
		R	& \circ	& \circ & \circ \\		
		S	& \circ	& \dotcirc \\
		U	& \circ & & \circ
	\end{tabular}\label{Fig_CounterExampleFD_b}}
\hspace{3mm}	
\subfloat[$q''$]{
	\begin{tabular}[t]{@{\hspace{1pt}} >{$}c<{$}|>{$}c<{$} >{$}c<{$} >{$}c<{$} >{$}c<{$} @{\hspace{1pt}}}
			& \multicolumn{2}{c}{$\, x \!\!\! \shortrightarrow \!\!\! y$\hspace{0.5mm}} & z \\
		\hline
		R	& \graycell\circ & \graycell\circ & \graycell\circ \\		
		S	& \graycell\circ & \graycell\dotcirc	 \\
		U	& \graycell\circ & \graycell\dotcirc & \graycell\circ
	\end{tabular}\label{Fig_CounterExampleFD_c}}					
\caption{\autoref{ex:incorrectFDs}. Applying an FD $x \rightarrow y$ only to a subset of relations can turn the previously safe query $q$ into an unsafe $q'$. However, $q$ cannot become unsafe when applying $\Gamma$ to all relation (here $q''$).}
\label{Fig_CounterExampleFD}
\end{figure}

\begin{example}[Incorrect functional dependency dissociation]\label{ex:incorrectFDs}
Consider the query $q \datarule R(x,y,z), S(x), T(x,z)$ with FD $\Gamma: x \rightarrow y$ holding in relation $R$. The query is safe (the hierarchy is shown in gray in \autoref{Fig_CounterExampleFD_a})
and has  one single plan
$P = \projp{x} \joinp{}{S(x), \projp{z} \joinp{}{\projp{y}R(x,y,z),U(x,z)}}$.
Dissociating relation $S$ on $y$ does not change the reliability, however it makes the query unsafe ($q'$ in \autoref{Fig_CounterExampleFD_b}) with now two minimal plans. If we instead dissociate $y$ in both $S$ and $U$, the resulting query $q''$ is safe as well (\autoref{Fig_CounterExampleFD_c}) with one single minimal plan
$P'' = \projp{x,y} \joinp{}{S(x), \projp{z} \joinp{}{R(x,y,z),U(x,z)}}$.
Note that $P''$ has one projection less than $P'$.
\markend
\end{example}

\newsavebox\queryplanzero
\begin{lrbox}{\queryplanzero}
	\small
	\begin{minipage}[t]{100mm}
		\fontsize{8}{10}
		\vspace{-4mm}			
		\begin{align*}
			\rho(q) & = \minp{
			\begin{aligned}
				&\projp{z} \joinp{z}{T(z),
			            \projp{x} \joinp{x,z}{R(x,z),
			            \projp{u} \joinp{u}{U(u),
			            \projp{y} \joinp{y,u}{S(y,u), M(x,y,z,u)}}}}  \\
			            &\projp{z}\joinp{z}{T(z),
			            \projp{u} \joinp{u}{U(u),
			            \projp{x} \joinp{x,z}{R(x,z),
			            \projp{y} \joinp{y,u}{S(y,u), M(x,y,z,u)}}}}  \\
			            &\projp{z}\joinp{z}{T(z),
			            \projp{u} \joinp{u}{U(u),
			            \projp{y} \joinp{y,u}{S(y,u),
			            \projp{x} \joinp{x,z}{R(x,z), M(x,y,z,u)}}}}  \\
			            &\underline{
						\projp{u}\joinp{u}{U(u),
			            \projp{y} \joinp{y,u}{S(y,u),
			            \projp{z} \joinp{z}{T(z),
			            \projp{x} \joinp{x,z}{R(x,z), M(x,y,z,u)}}}}
						} \\
			            &\projp{u}\joinp{u}{U(u),
			            \projp{z} \joinp{z}{T(z),
			            \projp{x} \joinp{x,z}{R(x,z),
			            \projp{y} \joinp{y,u}{S(y,u), M(x,y,z,u)}}}}  \\
			            &\projp{u}\joinp{u}{U(u),
			            \projp{z} \joinp{z}{T(z),
			            \projp{y} \joinp{y,u}{S(y,u),
			            \projp{x} \joinp{x,z}{R(x,z), M(x,y,z,u)}}}}  \\
			\end{aligned}
			}
		\end{align*}
	\end{minipage}
\end{lrbox}	

\newsavebox\queryplanfirst
\begin{lrbox}{\queryplanfirst}
	\small
	\begin{minipage}[t]{80mm}
		\vspace{-2mm}
		\setlength\abovedisplayskip{0pt}	
		\begin{align*}	
			\rho(q_f) & = \minp{\!
			\begin{aligned}			
				&\,\, \projp{x,z} \joinp{x,z}{
				R(x,z),T(z),\projp{y,u} \joinp{y,u}{
				S(y,u), U(u),M(x,y,z,u)}}  \\
	            &\,\,
				\underline{
				\projp{y,u}\joinp{y,u}{
	            S(y,u),U(u),\projp{x,z} \joinp{x,z}{T(z),
	            R(x,z), M(x,y,z,u)}} 
				} \,\,\, \\
			\end{aligned}\!}\!	
		\end{align*}	
	\end{minipage}
\end{lrbox}

\newsavebox\queryplansecond
\begin{lrbox}{\queryplansecond}
	\small
	\begin{minipage}[t]{60mm}
		\setlength\abovedisplayskip{0pt}	
		\begin{align*}
			\rho(q_{fd}) = & \,  
				\underline{
				\projp{y,u} \joinp{y,u}{S(y,u), U(u), \projp{x,z} 
				\joinp{x,z}{T^d(z), R^d(x,z), M(x,y,z,u)}}
				}
		\end{align*}
	\end{minipage}
\end{lrbox}

\newsavebox\queryplanthird
\begin{lrbox}{\queryplanthird}
	\small
	\begin{minipage}[t]{60mm}
		\setlength\abovedisplayskip{0pt}	
		\vspace{4mm}			
		\begin{align*}
			\rho(q_d)	& = \,
				\underline{
				\projp{u} \joinp{u}{U(u),
				\projp{y} \joinp{y,u}{S(y,u), \projp{x,z} \joinp{x,z}{T^d(z), R^d(x,z), M(x,y,z,u)}}}
				}
		\end{align*}
		\vspace{-1mm}			
	\end{minipage}
\end{lrbox}

\begin{figure*}[t]
\centering
\renewcommand{\tabcolsep}{0.65mm}
\renewcommand{\arraystretch}{0.9}
\vspace{-1mm}
\begin{tabular}{	@{\hspace{3mm}}  c
					@{\hspace{3mm}}  c
					@{\hspace{3mm}}  c
					@{\hspace{5mm}}  c}
\subfloat[$q$]{\label{Fig_RuningExamplePlans_a}
	\begin{minipage}[t]{14mm}
		\vspace{4mm}
		\begin{tabular}[t]{@{\hspace{1pt}} >{$}c<{$}|>{$}c<{$} >{$}c<{$} >{$}c<{$} >{$}c<{$} @{\hspace{1pt}}}
				& x	& z & y & u \\
			\hline
			M	& \circ	& \circ & \circ & \circ \\								
			R	& \circ	& \circ \\
			T	& & \circ \\
			S	& & & \circ	& \circ \\
			U	& & & & \circ \\
		\end{tabular}
		\vspace{4mm}
	\end{minipage}		}
&
\subfloat[$q^{\Delta_4}$]{\label{Fig_RuningExamplePlans_a2}
	\begin{minipage}[t]{17mm}
		\vspace{4mm}
		\begin{tabular}[t]{@{\hspace{1pt}} >{$}c<{$}|>{$}c<{$} >{$}c<{$} >{$}c<{$} >{$}c<{$} @{\hspace{1pt}}}
				& x	& z & y & u \\
			\hline
			M	& \graycell\circ	& \graycell\circ & \graycell\circ & \graycell\circ \\ 
			R	& \graycell\circ	 & \graycell\circ & \graycell\bullet & \graycell\bullet \\
			T	& 					 & \graycell\circ & \graycell\bullet & \graycell\bullet \\
			S	& & & \graycell\circ & \graycell\circ \\
			U	& & & & \graycell\circ 
		\end{tabular}
		\vspace{4mm}		
	\end{minipage}		}
&&
\subfloat[]{\label{Fig_RuningExamplePlans_b}
	\usebox{\queryplanzero} }		
\\
\subfloat[$q_d$]{
	\begin{tabular}[t]{@{\hspace{1pt}} >{$}c<{$}|>{$}c<{$} >{$}c<{$} >{$}c<{$} >{$}c<{$} @{\hspace{1pt}}}
			& x	& z & y & u \\
		\hline
		M	& \circ	& \circ & \circ & \circ \\								
		R^d	& \circ	& \circ &  &  \\
		T^d	& 	 & \circ &  &  	\\
		S	& & & \circ & \circ \\
		U	& & & & \circ 
	\end{tabular}\label{Fig_ExampleIncidenceMatrixDet_b}}
&
\subfloat[$q_d^{\Delta}$]{
	\begin{tabular}[t]{@{\hspace{1pt}} >{$}c<{$}|>{$}c<{$} >{$}c<{$} >{$}c<{$} >{$}c<{$} @{\hspace{1pt}}}
			& x	& z & y & u \\
		\hline
		M	& \graycell\circ	& \graycell\circ & \graycell\circ & \graycell\circ \\
		R^d	& \graycell\circ	 & \graycell\circ & \graycell\dotcirc & \graycell\dotcirc \\
		T^d	& \graycell\dotcirc	 & \graycell\circ & \graycell\dotcirc & \graycell\dotcirc \\
		S	& & & \graycell\circ & \graycell\circ \\
		U	& & & & \graycell\circ 
	\end{tabular}\label{Fig_ExampleIncidenceMatrixDet_c}}
&
&
\subfloat[]{\label{Fig_ExampleDeterministicRelations}
	\usebox{\queryplanthird} } 
\\
\multirow{2}{*}{
\subfloat[$q_f$]{
	\begin{tabular}[t]{@{\hspace{1pt}} >{$}c<{$}|>{$}c<{$} >{$}c<{$} >{$}c<{$} >{$}c<{$} @{\hspace{1pt}}}
		& \multicolumn{2}{c}{$\,x\!\!\!\shortleftarrow\!\!\!z$\hspace{0.5mm}} 
		& \multicolumn{2}{c}{$\,y\!\!\shortleftarrow\!\!\!u$\hspace{0.5mm}} \\
		\hline
		M	& \circ	& \circ & \circ & \circ \\								
		R	& \circ	& \circ \\
		T	& \dotcirc	& \circ \\
		S	& & & \circ	& \circ \\
		U	& & & \dotcirc	& \circ
	\end{tabular}\label{Fig_ExampleIncidenceMatrix_b} } }
&
\multirow{2}{*}{
\subfloat[$q_f^{\Delta_2}$]{
	\begin{tabular}[t]{@{\hspace{1pt}} >{$}c<{$}|>{$}c<{$} >{$}c<{$} >{$}c<{$} >{$}c<{$} @{\hspace{1pt}}}
		& \multicolumn{2}{c}{$\,x\!\!\!\shortleftarrow\!\!\!z$\hspace{0.5mm}} 
		& \multicolumn{2}{c}{$\,y\!\!\shortleftarrow\!\!\!u$\hspace{0.5mm}} \\
		\hline
		M	& \graycell\circ		& \graycell\circ & \graycell\circ   & \graycell\circ   \\
		R	& \graycell\circ		& \graycell\circ & \graycell\dotcirc & \graycell\bullet \\
		T	& \graycell\dotcirc		& \graycell\circ & \graycell\dotcirc & \graycell\bullet\\
		S	& & & \graycell\circ	& \graycell\circ \\
		U	& & & \graycell\dotcirc	& \graycell\circ
	\end{tabular}\label{Fig_ExampleIncidenceMatrix_d} } }	
&
\multirow{2}{*}{
\subfloat[$q_{fd}$]{\label{Matrix:qfd}
	\begin{tabular}[t]{@{\hspace{1pt}} >{$}c<{$}|>{$}c<{$} >{$}c<{$} >{$}c<{$} >{$}c<{$} @{\hspace{1pt}}}
		& \multicolumn{2}{c}{$\,x\!\!\!\shortleftarrow\!\!\!z$\hspace{0.5mm}} 
		& \multicolumn{2}{c}{$\,y\!\!\shortleftarrow\!\!\!u$\hspace{0.5mm}} \\
		\hline
		M	& \graycell\circ	& \graycell\circ & \graycell\circ & \graycell\circ \\ 
		R^d	& \graycell\circ & \graycell\circ & \graycell\dotcirc & \graycell\dotcirc \\
		T^d	& \graycell\dotcirc & \graycell\circ & \graycell\dotcirc & \graycell\dotcirc \\
		S	& & & \graycell\circ & \graycell\circ \\
		U	& & & \graycell\dotcirc & \graycell\circ
	\end{tabular} } }
&
\subfloat[]{\label{Fig_ExampleKeys}
	\usebox{\queryplanfirst} }	
\\
&&&
\subfloat[]{\label{Fig_ExampleDeterministicRelationsAndKeys}
	\usebox{\queryplansecond} }		
\end{tabular}
\caption{
\Autoref{ex:runningExampleRevisited}: (a): Incidence matrix for our example $q \datarule R(x,z), S(y,u), T(z), U(u), M(x,y,z,u)$ from \autoref{ex:optimizationExample}. 
(b): one minimal hierarchical dissociation $q^{\Delta_4}$. (c): All 6 minimal query plans generated by \specificref{algorithm}{alg:basicAlgorithm} with the plan corresponding to $q^{\Delta_4}$ underlined. The propagation score is the minimum of the scores of those plans. 
(d): Knowing that relations $R^d$ and $T^d$ are deterministic 
makes the query $q_d$ safe and our algorithm returns one minimal plan (f) corresponding to $q_d^{\Delta}$ (e) and $\rho(q_d) = \PP{q_d} = \PP{q_d^{\Delta}}$.
(g): Knowing two FDs $z \rightarrow x$ and $u \rightarrow y$ 
results in two minimal hierarchical dissociations, one of which is shown in (h). The two corresponding minimal query plans are shown in (j).
If we know that $R^d$ and $T^d$ are deterministic and the previous FDs hold, then the query becomes safe (i) and has one single minimal query plan (k).}\label{Fig_ExampleIncidenceMatrix}
\end{figure*}

\section{\autoref{sec:optimizationsWithSchema}: 
A detailed example for schema knowledge}

We will re-use our \autoref{ex:optimizationExample} from \autoref{sec:optimizations} to illustrate changes in the minimal query plans in the presence of deterministic relations and functional dependencies.

\begin{example}[Optimization running example cont.]\label{ex:runningExampleRevisited}
We consider again the query from \autoref{ex:optimizationExample}:
$q \datarule R(x,z), S(y,u), T(z), U(u), M(x,y,z,u)$.
Its incidence matrix (\autoref{Fig_RuningExamplePlans_a}) has 10 variables in relations that can be dissociated; the query thus has $2^{10} = 1\,024$ possible dissociations, among which 6 are minimal hierarchical dissociations. One of those hierarchical dissociations is is shown in \autoref{Fig_RuningExamplePlans_a2}.
\specificref{Algorithm}{alg:basicAlgorithm} returns 6 minimal query plans (\autoref{Fig_RuningExamplePlans_b}) corresponding to these 6 minimal hierarchical dissociations. 

Next assume relations $R$ and $T$ to be deterministic (\autoref{Fig_ExampleIncidenceMatrixDet_b}):
Our modified algorithm then returns one single minimal plan (\autoref{Fig_ExampleDeterministicRelations}), which implies that the query is safe and our plan returns the exact probabilities.
The plan corresponds to the hierarchical dissociation from \autoref{Fig_ExampleIncidenceMatrixDet_c} which also shows that no probabilistic table needed to be dissociated, and thus $\PP{q_d} = \PP{q_d^{\Delta}}$.

Next assume keys $R(x, \underline{z})$ and $S(y, \underline{u})$, and hence the FDs $z \rightarrow x$ and $u \rightarrow y$ hold. Dissociating all dependent variables leads to a new query $q_f$ 
(\autoref{Fig_ExampleIncidenceMatrix_b})
that has only two minimal hierarchical dissociations 
(one of which is shown in \autoref{Fig_ExampleIncidenceMatrix_d}).
\autoref{Fig_ExampleKeys} shows the two minimal plans.

Next assume that relations $R$ and $T$ are deterministic and above FDs hold at the same time.  Then the query becomes $q_{fd}$ which is safe (the hierarchy is shown in \autoref{Matrix:qfd}) and has thus only one single plan (\autoref{Fig_ExampleDeterministicRelationsAndKeys}).

We encourage the reader to take a moment and compare the incidence matrices 
\autoref{Fig_RuningExamplePlans_a2},
\autoref{Fig_ExampleIncidenceMatrixDet_c}, 
\autoref{Fig_ExampleIncidenceMatrix_d}, and
\autoref{Matrix:qfd} and their corresponding underlined plans in
\autoref{Fig_RuningExamplePlans_b}, 
\autoref{Fig_ExampleDeterministicRelations}, 
\autoref{Fig_ExampleKeys}, and 
\autoref{Fig_ExampleDeterministicRelationsAndKeys}.
\markend
\end{example}

\section{\autoref{sec:dissociation}: Alternative proof for
 \autoref{th:upperBounds}}\label{app:proofs} 

We give here an alternative, self-contained proof of \autoref{th:upperBounds}. This is our original proof and has now been superseded with our more general results from \cite{DBLP:journals/tods/GatterbauerS14}. We list this original proof here in the online appendix as it is still interesting in its own right, self-contained and possibly more accessible. 
We start by developing some notions and lemmas for Boolean functions and their probabilistic interpretation.

\smallsection{Boolean notions~\textnormal{\cite{CramaHammer2010:BooleanFunctions}}}
Let $\vec x = \{x_1, \ldots, x_n\}$ be a set of $n$ Boolean variables. 
We use the bar sign (e.g.\ $\vec x$) to denote an ordered or unordered set, depending on the context.
A \emph{truth assignment} or valuation $\vec \theta$ for $\vec x$ is an $n$-vector in $\{0,1\}^{n}$, i.e.\ it is  a function $\theta: \vec x \rightarrow \{0,1\}^n$ where $\theta_i = \theta(x_i)$ denotes the value assigned to $x_i$ by $\theta$.
A \emph{positive term} or conjunct is $c = \bigwedge_{i\in \vec c} x_i$, where $\vec c \subseteq \vec x$.
Note that variable $c$ represents a conjunct, whereas the set $\vec c$ represents the variables of $c$.
A \emph{positive DNF} (Disjunctive Normal Form) of size $m$ is an expression of the form 
$\bigvee_{j=1}^m c_j = \bigvee_{j=1}^m \(\bigwedge_{i\in \vec c_j} x_i \)$,
where each $c_j$ $(j \in \{1, \ldots, m\})$ is a positive term of the DNF.
A \emph{positive $k$-uniform DNF} is one where each term contains exactly $k$ variables.
A \emph{positive $k$-partite $k$-uniform DNF} is one where the set of variables $\vec x$ can be partitioned into $k$ sets $(\vec x_1, \ldots, \vec x_k)$ so that each term consists of exactly $k$ variables with each variable coming from a different partition.

\smallsection{Event expressions}
We assign to each Boolean variable $x_i$ a \emph{primitive event} (we do not formally distinguish between the independent random variable $x_i$ and the event $x_i$ that it is true) which is true with probability $\PP{x_i} = p_i$. All primitive events are assumed to be independent, i.e.\ 
$\PP{x_i \wedge x_j} = \PP{x_i} \cdot \PP{x_j} = p_i \cdot p_j$,
$\forall i,j\in \{1, \ldots, n\}$ with $i \neq j$.
We are interested in the computation of probabilities of \emph{composed events}~\cite{DBLP:journals/tois/FuhrR97} and write $\phi(\vec x)$ to indicate that $\vec x = \Var(\phi)$ is the set of variables appearing in the expression $\phi$. Our focus is on calculating the probability of positive DNF event expressions.
Given independence of primitive events, the probability of a \emph{conjunct} $c$ is 
\begin{equation*}
	\PP{c} = \prod_{i \in \vec c} \PP{x_i} = \prod_{i \in \vec c} p_i 
\end{equation*}
Using the inclusion-exclusion principle, the event probability for a \emph{positive DNF} is then
\begin{align}
	&\PP{\bigvee_{j=1}^m c_j} = \sum_{k=1}^m (-1)^{k-1} \!\!\!\!\!  \sum_{1\leq j_1 < \ldots < j_k \leq m} 
		\PP{\bigwedge_{i=1}^{k} c_{j_i}}
	\label{inclusionExclusion}
\end{align}

\smallsection{Correlations and positive DNF dissociation}
Two events $\phi_1$ and $\phi_2$ are \emph{positively correlated} iff
\begin{equation*}
	\PP{\phi_1 \wedge \phi_2} > \PP{\phi_1} \cdot \PP{\phi_2} \ ,
\end{equation*}
By application of the inclusion-exclusion principle, it follows
\begin{equation*}
	\PP{\phi_1 \vee \phi_2} < 1 - \PP{\neg \phi_1} \cdot \PP{\neg \phi_2} 
\end{equation*}

\begin{lemma}[Positive correlations of terms]\label{lemma:PositiveCorrelations}
Two positive terms $c_1$ and $c_2$ over the random variables $\vec x$ are positively correlated if and only if neither of them has probability 0, and they have at least one variable $x_i$ in common with $p_i \neq 1$. 
\end{lemma}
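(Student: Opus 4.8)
The plan is to reduce both sides of the defining inequality $\PP{c_1 \wedge c_2} > \PP{c_1}\PP{c_2}$ to products over index sets and compare them directly, handling the degenerate zero‑probability case separately. Write $S = \vec c_1 \cap \vec c_2$ for the set of variables the two terms share. Since $c_1 \wedge c_2 = \bigwedge_{i \in \vec c_1 \cup \vec c_2} x_i$ is again a positive term, independence of the primitive events gives $\PP{c_1 \wedge c_2} = \prod_{i \in \vec c_1 \cup \vec c_2} p_i$, whereas $\PP{c_1}\PP{c_2} = \prod_{i \in \vec c_1} p_i \cdot \prod_{i \in \vec c_2} p_i$, in which every shared variable $i \in S$ contributes a factor $p_i^2$ rather than $p_i$. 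The whole argument will be a chain of equivalences, so it establishes both directions of the ``iff'' at once.

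First I would dispose of the ``neither has probability $0$'' clause. If, say, $\PP{c_1} = 0$, then some $p_i = 0$ with $i \in \vec c_1 \subseteq \vec c_1 \cup \vec c_2$, so $\PP{c_1 \wedge c_2} = 0$ as well; the correlation inequality then reads $0 > 0$, which is false. Hence positive correlation forces both terms to have nonzero probability, i.e.\ $p_i > 0$ for every $i \in \vec c_1 \cup \vec c_2$. This is exactly the first half of the stated condition, and it is what makes the subsequent cancellations legitimate.

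Assuming all relevant $p_i > 0$, every factor is strictly positive, so I can cancel the common product $\prod_{i \in (\vec c_1 \cup \vec c_2) \setminus S} p_i$ from both sides, reducing the inequality to $\prod_{i \in S} p_i > \prod_{i \in S} p_i^2$. Dividing once more by the positive quantity $\prod_{i \in S} p_i$ reduces the claim to $1 > \prod_{i \in S} p_i$. Since each $p_i \in (0,1]$, this product of at‑most‑one factors is $<1$ precisely when $S$ contains at least one index $i$ with $p_i < 1$ (equivalently $p_i \neq 1$), and it equals the empty/all‑ones product $1$ — so the strict inequality fails — exactly when $S$ is empty or consists solely of variables with $p_i = 1$. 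This is precisely the second half of the stated condition.

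The computation itself is routine; the only real care needed is the ordering of steps. The cancellation and the final division are valid only after the zero‑probability case has been excluded, which is why I treat that case first rather than dividing blindly. The two ways the ``$p_i \neq 1$'' requirement can be violated — no shared variable at all (independent terms) and shared variables that are all almost surely true — are captured uniformly by the single final inequality $1 > \prod_{i \in S} p_i$, so no further case split is required there.
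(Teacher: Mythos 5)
Your proposal is correct and follows essentially the same route as the paper: both decompose $\PP{c_1 \wedge c_2}$ and $\PP{c_1}\PP{c_2}$ into products over the shared and non-shared index sets, observe that they differ exactly by the factor $\prod_{i \in \vec c_1 \cap \vec c_2} p_i$, and read off the ``iff'' from when that factor is strictly less than $1$. Your explicit treatment of the zero-probability case and the final equivalence $1 > \prod_{i \in S} p_i$ merely spells out what the paper leaves implicit after its identity $\PP{c_1}\cdot\PP{c_2} = \PP{c_1 \wedge c_2}\cdot\prod_{i\in \vec c_1 \cap \vec c_2} p_i$.
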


\begin{proof}[\autoref{lemma:PositiveCorrelations}]
Assume $c_1$ and $c_2$ have a possibly empty set $\vec c_{1 \cap 2} := \vec c_1 \cap \vec c_2$ of variables in common.
We use $\vec c_{1 \setminus 2}$ for $\vec c_1 \setminus \vec c_{2}$ 
and can write
$\PP{c_1} = \prod_{i \in \vec c_1} p_i = \prod_{i \in \vec c_{1 \setminus 2}} p_i \cdot \prod_{i \in \vec c_{1 \cap 2}} p_i $. 
Hence, 
\begin{equation*}
	\PP{c_1} \cdot \PP{c_2} = \prod_{i \in \vec c_{1 \setminus 2}} p_i 
		\cdot \prod_{i \in \vec c_{2 \setminus 1}} p_i 
		\cdot \prod_{i \in \vec c_{1 \cap 2}} p_i^2 
\end{equation*}
whereas
\begin{equation*}
	\PP{c_1 \wedge c_2} = \prod_{i \in \vec c_{1 \setminus 2}} p_i 
		\cdot \prod_{i \in \vec c_{2 \setminus 1}} p_i 
		\cdot \prod_{i \in \vec c_{1 \cap 2}} p_i
\end{equation*}
Therefore,
\begin{equation*}
	\PP{c_1} \cdot \PP{c_2} = \PP{c_1 \wedge c_2}  \cdot \prod_{i \in \vec c_{1 \cap 2}} p_i
\end{equation*}
from which the proposition follows.
\qed
\end{proof}

\begin{corollary}[Positive correlations of terms]
For any two positive terms $c_1$ and $c_2$ over the random variables $\vec x$, the following hold:
\begin{align*}
	\PP{c_1 \wedge c_2} 	&\geq \PP{c_1} \cdot \PP{c_2} \\
	\PP{c_1 \vee c_2}		&\leq 1 - \PP{\neg c_1} \cdot \PP{\neg c_2} 
\end{align*}
\end{corollary}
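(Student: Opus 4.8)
The plan is to obtain both inequalities directly from the identity already derived in the proof of \autoref{lemma:PositiveCorrelations}, namely
\begin{equation*}
	\PP{c_1} \cdot \PP{c_2} = \PP{c_1 \wedge c_2} \cdot \prod_{i \in \vec c_{1 \cap 2}} p_i,
\end{equation*}
where $\vec c_{1 \cap 2} := \vec c_1 \cap \vec c_2$ collects the variables shared by the two terms. The single fact I would exploit is that every $p_i \in [0,1]$, so the product $\prod_{i \in \vec c_{1 \cap 2}} p_i$ is itself a number in $[0,1]$.

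First I would prove the conjunction bound. Because $\PP{c_1 \wedge c_2} \geq 0$ and we are multiplying it by a factor in $[0,1]$, we have $\PP{c_1 \wedge c_2} \cdot \prod_{i \in \vec c_{1 \cap 2}} p_i \leq \PP{c_1 \wedge c_2}$. Substituting the displayed identity on the left-hand side gives $\PP{c_1} \cdot \PP{c_2} \leq \PP{c_1 \wedge c_2}$, which is the first claim. When the terms are variable-disjoint the product is empty and equals $1$, recovering independence and equality; this is why the inequality is non-strict.

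Second I would reduce the disjunction bound to the conjunction bound by inclusion--exclusion. Expanding
\begin{align*}
	\PP{c_1 \vee c_2} &= \PP{c_1} + \PP{c_2} - \PP{c_1 \wedge c_2}, \\
	1 - \PP{\neg c_1} \cdot \PP{\neg c_2} &= \PP{c_1} + \PP{c_2} - \PP{c_1} \cdot \PP{c_2},
\end{align*}
where the second line is pure algebra from $\PP{\neg c} = 1 - \PP{c}$, I would subtract the two lines and observe that the target inequality $\PP{c_1 \vee c_2} \leq 1 - \PP{\neg c_1}\PP{\neg c_2}$ is equivalent to $\PP{c_1 \wedge c_2} \geq \PP{c_1} \cdot \PP{c_2}$, i.e.\ exactly the bound just established.

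There is no substantive obstacle: the corollary is simply the non-strict, unconditional weakening of \autoref{lemma:PositiveCorrelations}, which characterized \emph{strict} positive correlation. The one conceptual point worth stating clearly is that the quantity $1 - \PP{\neg c_1}\PP{\neg c_2}$ is the \emph{independent-or} of the two term probabilities (the $\otimes$ combination), computed as though $c_1$ and $c_2$ were independent; we never claim $\PP{\neg c_1 \wedge \neg c_2} = \PP{\neg c_1}\PP{\neg c_2}$. Rather, the content of the corollary is precisely that positive correlation makes the true disjunction no larger than this independent-or. All degenerate cases (a term of probability $0$, or all shared variables having $p_i = 1$) turn the relevant inequality into an equality and are absorbed by the non-strict signs.
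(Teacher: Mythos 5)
Your proof is correct and takes essentially the same route the paper intends: the corollary is stated without its own proof precisely because it follows from the identity $\PP{c_1}\cdot\PP{c_2} = \PP{c_1 \wedge c_2}\cdot\prod_{i \in \vec c_{1\cap 2}} p_i$ established inside the proof of the preceding lemma, together with the inclusion--exclusion step the paper already invokes when introducing positive correlation. Your observation that the shared-variable product lies in $[0,1]$ and your reduction of the disjunction bound to the conjunction bound are exactly the intended argument.
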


\begin{definition}[Expression dissociation]
Assume a Boolean expression $\phi$ over variables $\vec x$. A dissociation of $\phi$
is a new expression $\phi'$ over variables $\vec x'$ so that there exists 
a substitution $\theta: \vec x' \rightarrow \vec x$ 
that transforms the new into the original expression:
$
	\phi'(\theta (\vec x')) = \phi(\vec x)
$.
The probability of the new event expression $\PP{\phi'}$ is evaluated by assigning each new variable independently the probability corresponding to its substitution in its event expression:
$
	\PP{x'} = \PP{\theta(x')}, \forall x' \in \vec x'
$.
\end{definition}

\begin{example}[Expression dissociation]
Take the two DNF expressions:
\begin{align*}
	\phi(x_1, x_2, x_3, x_4) 		&= x_1 x_3 \vee x_1 x_4 \vee x_2 x_4 \\
	\phi'(x_1, x_2, x_3, x_4, x_4') &= x_1 x_3 \vee x_1 x_4 \vee x_2 x_4'
\end{align*}
Then $\phi'(\vec x')$ is a dissociation of $\phi(\vec x)$ because $\phi(\vec x) = \phi'(\theta(\vec x'))$ for the substitution $\theta = \{(x_1, x_1), (x_2, x_2), (x_3, x_3), (x_4, x_4),$ $(x_4', x_4) \}$.
Further, $\PP{\phi} = 
	p_1 p_3 + p_1 p_4 + p_2 p_4 
	- p_1 p_3 p_4
	- p_1 p_3 p_2 p_4 
	- p_1 p_4 p_2
	+ p_1 p_3 p_4 p_2 
	$,
whereas 	
$\PP{\phi'} = 
	p_1 p_3 + p_1 p_4 + p_2 p_4 
	- p_1 p_3 p_4
	- p_1 p_3 p_2 p_4
	- p_1 p_4^2 p_2
	+ p_1 p_3 p_4^2 p_2 
	$.
Note that
\begin{align*}
\PP{\phi'} - \PP{\phi} 
	&= (p_1 p_2 p_3 p_4 - p_1 p_2 p_4) (p_4 - 1) \\
	&= (p_1 p_2 p_4) (1- p_3) (1-p_4) \geq 0 \ .
\end{align*}
\end{example}

The following lemma is now comparable to \autoref{th:bool:dissoc}:

\begin{lemma}[Positive DNF dissociation]\label{prop:positiveDNFdissociation}
For every dissociation $\phi'(\vec x')$ of a positive DNF $\phi(\vec x)$, the following holds: $\PP{\phi'} \geq \PP{\phi}$.
\end{lemma}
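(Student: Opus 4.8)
The plan is to prove \autoref{prop:positiveDNFdissociation} by induction on the number of fresh copies a dissociation introduces, reducing the general claim to a single elementary splitting step. Throughout I read ``dissociation'' in the intended, occurrence-based sense: $\phi'$ is obtained from $\phi$ by relabeling occurrences of variables with fresh copies, so that substituting $\theta$ back recovers $\phi$ term by term. In particular, since every term of a positive DNF consists of distinct variables, no term of $\phi'$ ever contains two distinct copies of the same original variable; this is the property that makes the combinatorics clean. Let $k = |\vec x'| - |\vec x|$; since $\phi'(\theta(\vec x')) = \phi(\vec x)$ forces $\theta$ to be onto $\Var(\phi)$, we have $k = \sum_i (|\theta^{-1}(x_i)| - 1)$. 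If $k = 0$ then $\theta$ is a bijection and $\phi'$ is merely a renaming of $\phi$, so $\PP{\phi'} = \PP{\phi}$.

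For $k \geq 1$ I would pick an original variable $x_i$ whose fiber $\theta^{-1}(x_i)$ contains two distinct copies $u, v$, and form the intermediate expression $\phi'' := \phi'[v \mapsto u]$ by merging $v$ into $u$. Because no term of $\phi'$ holds both $u$ and $v$, this merge neither collapses a term nor creates a term with two copies of $x_i$; hence $\phi''$ is again an occurrence-based dissociation of $\phi$ (with substitution $\theta$ restricted to $\vec x' \setminus \{v\}$, using $\theta(u) = \theta(v) = x_i$) carrying exactly $k-1$ copies. The induction hypothesis gives $\PP{\phi''} \geq \PP{\phi}$, and $\phi'$ is recovered from $\phi''$ by an elementary split of $u$, so it remains to establish the single step $\PP{\phi'} \geq \PP{\phi''}$.

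For the elementary step, let $p := \PP{u}$, so the split introduces a second independent copy $v$ with $\PP{v} = p$. Writing $A := \phi''|_{u=0}$ for the disjunction of terms not containing $u$, and grouping the $u$-terms (with $u$ factored out) into those the dissociation keeps as $u$ versus renames to $v$, monotonicity of the positive DNF yields $\phi'' = A \vee u(B \vee C)$ and $\phi' = A \vee uB \vee vC$, where $A, B, C$ are positive DNFs on the remaining variables and are therefore independent of both $u$ and $v$. I would condition on an assignment of all variables other than $u, v$ and compare the two probabilities term by term over the four truth-value cases of $(B, C)$ under $\neg A$. The two agree except when $\neg A$ holds and $B = C = 1$: there $\phi''$ contributes $p$ (the event $u$) while $\phi'$ contributes $1 - (1-p)^2 = 2p - p^2 \geq p$ (the event $u \vee v$). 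Re-averaging over the conditioning gives $\PP{\phi'} - \PP{\phi''} = p(1-p)\,\PP{\neg A \wedge B \wedge C} \geq 0$, which is precisely the nonnegative correction of the form seen in the worked example preceding the lemma.

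I expect the main obstacle to be definitional bookkeeping rather than computation: one must commit to the occurrence-based reading of dissociation (under which $\phi' = uv$ is \emph{not} a dissociation of $\phi = x_1$, so the inequality does not fail) and then carefully verify that merging $v$ into $u$ yields a genuine dissociation with one fewer copy whose re-split recovers $\phi'$. Once these syntactic checks are in place, the elementary inequality is short and its sign is controlled entirely by $p(1-p) \geq 0$, with positivity of the DNF guaranteeing that $u$ factors cleanly out of its terms. A pleasant feature of conditioning on the shared variables is that it sidesteps all correlations among $A$, $B$, and $C$, so no appeal to the inclusion--exclusion expansion \autoref{inclusionExclusion} is needed.
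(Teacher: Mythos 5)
Your proposal is correct and follows essentially the same route as the paper's proof: reduce to a single elementary split of one variable occurrence, decompose the formula as $A \vee uB \vee vC$ versus $A \vee u(B \vee C)$ (the paper's $D_3, D_1, D_2$), obtain the identical correction term $p(1-p)\,\PP{B \wedge C \wedge \neg A}$, and chain single steps by induction on the number of extra copies. The only differences are presentational — you verify the one-step inequality by conditioning on the non-split variables rather than by inclusion--exclusion, and you run the induction by merging copies of $\phi'$ downward rather than composing splits upward — and your explicit insistence on the occurrence-based reading of dissociation (so that no term acquires two copies of the same variable) is a point the paper asserts but states less prominently.
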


\begin{proof}[\autoref{prop:positiveDNFdissociation}]
We will proceed in two steps. We first show that (a) the proposition holds for any single-step dissociations. We then (b) infer by induction for multi-step dissociations. 

(a) \emph{Single-step dissociation}: A single step dissociation is one where $|\vec x'| = |\vec x| + 1$, i.e.\ there is exactly one more variable appearing in the dissociation $\phi'$ then $\phi$.
We know that the substitution $\theta$ must be surjective, i.e.\ each variable $\in \vec x$ must be mapped to at least once, since all variables must appear at least once in $\phi'(\theta (\vec x')) = \phi(\vec x)$.
It follows that the size $|\vec x'| \geq |\vec x|$. It follows that a single-step dissociation is the simplest dissociation for which $\phi$ and $\phi'$ are not trivially isomorphic.
From the pigeonhole principle, it also follows that there must be exactly two variables in $\vec x'$ that are mapped to the same variable in $\vec x$.
It also follows that the dissociation $\phi'$ must have the same structure, i.e.\ that there must be a one-to-one mapping between conjuncts in $\phi$ and $\phi'$ with corresponding conjuncts containing the same number of variables, and that two variables that $\theta$ maps to the same variable cannot appear in the same conjunct.

We assume w.l.o.g.\ that $\theta (x_1) = \theta (x_1') = x_1$ and $\theta (x_i) = x_i, \forall i \in \{2, \ldots, n\}$.
W.l.o.g., we further assume that $x_1$ appears in the terms $c_1, \ldots, c_k$ $(k\leq m)$ of the DNF $\phi$, but not in $c_{k+1}, \ldots, c_m$. 
W.l.o.g., we further consider a dissociation $\phi'$ where $x_1$ appears in the terms $c_1, \ldots, c_l$ $(1 \leq l < k)$ and
$x_1'$ in the terms $c_{l+1}, \ldots, c_k$.
We write $c_j^*$ for $c_j = x_1 c_j^*, (j \in \{1, \ldots, k\})$, i.e.\ a conjunct $c_1, \ldots, c_k$ without the variable $x_1$.
We can then write $\phi(\vec x)$ and $\phi'(\vec x')$, respectively, as
\begin{align*}
	\phi(\vec x) &=	\Big(x_1 \wedge \bigvee_{j=1}^l c_j^* \Big)
			\vee 	\Big(x_1 \wedge \!\! \bigvee_{j=l+1}^k \!\! c_j^* \Big)
			\vee 	\Big( \! \bigvee_{j=k+1}^m \!\! c_j \Big)		\\
	\phi'(\vec x') &=	\Big(x_1 \wedge \bigvee_{j=1}^l c_j^* \Big)
			\vee 	\Big(x_1' \wedge \!\! \bigvee_{j=l+1}^k \!\! c_j^* \Big)
			\vee 	\Big( \! \bigvee_{j=k+1}^m \!\! c_j \Big)
\end{align*}
with $(1 \leq l < k \leq m)$. Substituting the disjunctive expressions with $D_i$, we can write more compactly
\begin{align*}
	\phi(\vec x) &=	x_1  D_1 		\vee x_1  D_2 		\vee D_3		\\
	\phi'(\vec x') &=	x_1  D_1 	\vee x_1'  D_2 		\vee D_3		
\end{align*}
Using the inclusion-exclusion principle, we can write the event probabilities as
\begin{align*}
	\PP{\phi} 	&= 	p_1 \PP{D_1} + p_1 \PP{D_2} + \PP{D_3} \\
				& \quad - p_1 \PP{D_1 D_2} - p_1 \PP{D_1 D_3} - p_1 \PP{D_2 D_3} + p_1 \PP{D_1 D_2 D_3} \\
	\PP{\phi'} 	&= 	p_1 \PP{D_1} + p_1 \PP{D_2} + \PP{D_3} \\
				& \quad - p_1^2 \PP{D_1 D_2} - p_1 \PP{D_1 D_3} - p_1 \PP{D_2 D_3} + p_1^2 \PP{D_1 D_2 D_3} 
\end{align*}
Comparing the two expressions, we get
\begin{align}
	\PP{\phi'}& 	- \PP{\phi} =	p_1 (1-p_1)\( \PP{D_1 D_2} - \PP{D_1 D_2 D_3}	\) \geq 0 \label{eq:dissociationInequality}
\end{align}
since  $\PP{\psi_1} \geq \PP{\psi_1 \psi_2}$.

(b) \emph{Multi-step dissociation}: A $k$-step dissociation is one where single-step dissociations are consecutively applied ($\phi \rightarrow \phi' \rightarrow \phi' \rightarrow \ldots \rightarrow \phi^{(k)}$). It trivially follows from transitivity that the proposition also holds for a $k$-step dissociation. It also follows that a $k$-step dissociation $\phi^{(k)}$ has $|\vec x^{(k)}| = |\vec x| + k$ variables.

Vice versa, every dissociation with $|\vec x'| = |\vec x| + k$ can be constructed as a $k$-step dissociation as follows: Denote the number that a variable $x_i \in \vec x$ is mapped to in $\theta$ by $k(i)$. Then consider the following multi-step dissociation with $k = \sum_{i=1}^n k(i) - 1$ steps: Iterate over all variables $x_i$. For each variable with has $k(i) > 1$ dissociate the variable in $k(i) - 1$ steps so that afterwards the substitution $\theta(x_i) = \theta(x_i') = \ldots = \theta(x_i^{(k-1)}) = x_i$. This can be done by partitioning the appearances of $x_i$ in $\phi$ according to the appearances of the variables in $\phi'$ and dissociating these appearances one after the other. Hence, the proposition holds for any dissociation.
\qed
\end{proof}

Note that \autoref{eq:dissociationInequality} holds in any of 3 conditions:
\begin{enumerate}[label=\textup{(\roman*)}, itemsep=0pt, parsep=0pt, topsep = 0pt]
			
	\item $p_1$ is either 0 or 1.
	
	\item $D_1 = 0$ or $D_2 = 0$, which requires that in all conjuncts $c_j^*$ of either $D_1$ or $D_2$ (written as $D_{1/2}$) there must exist at least one variable with 0 probability: $\forall c_j^* \in D_{1/2}. \exists i. x_i \in c_j^*: p_i = 0$.

	\item $D_3 = 1$, which requires that there is at least one conjunct in which $x_1$ does not appear that is 1: $\exists c_j \in D_3. \forall i. x_i \in c_j: p_i = 1$.
\end{enumerate}

\end{document}